\documentclass[12pt]{article}

\usepackage[utf8]{inputenc}
\usepackage[T1]{fontenc}
\usepackage{lmodern}
\usepackage[margin=1in]{geometry}
\usepackage{setspace}
\usepackage{amsmath,amssymb,amsthm}
\usepackage{booktabs}
\usepackage{graphicx}
\usepackage{hyperref}
\usepackage{natbib}
\usepackage{footmisc}
\usepackage{enumitem}
\usepackage{amsthm,amsmath,amssymb}
\newtheorem{proposition}{Proposition}
\newtheorem{lemma}{Lemma}
\newtheorem{definition}{Definition}

\hypersetup{
    colorlinks=true,
    linkcolor=blue,
    citecolor=blue,
    urlcolor=blue
}

\title{\textbf{Statehood Without Capacity}}
\author{Rok Spruk\thanks{Associate Professor, School of Economics and Business, University of Ljubljana, Kardeljeva ploščad 17, SI-1000 Ljubljana, Slovenia. Email: \texttt{rok.spruk@ef.uni-lj.si}.}}
\date{April 7, 2026}

\begin{document}

\maketitle

\begin{abstract}
This paper develops a political-economy theory of \emph{statehood without capacity}. I argue that under specific institutional and geopolitical conditions, a polity can become trapped in an equilibrium of \emph{nominal statehood}: a state in which claims to sovereignty, external recognition, and symbolic legitimacy persist or even strengthen while the coercive, fiscal, administrative, and legal capacities required for effective statehood remain weak. The mechanism is driven by three forces. First, fragmented elites may privately benefit from preserving autonomous control, patronage, and localized rent extraction rather than consolidating authority into a unified state. Second, externally mediated transfers can reduce the immediate costs of institutional non-consolidation and thereby stabilize a low-capacity equilibrium. Third, international recognition and symbolic endorsement may be only weakly conditioned on domestic administrative performance, allowing recognition capital to accumulate more rapidly than capacity capital. The theory generates a dynamic divergence between juridical or symbolic statehood and effective statehood, with implications for investment, fiscal fragility, corruption, and vulnerability to conflict shocks. The paper derives testable predictions and then interprets Palestine as a flagship application of the broader mechanism. The central implication is that statehood is not only a question of recognition or territorial claim; it is an equilibrium outcome of institutional consolidation. Where the incentives to consolidate remain weak, sovereignty may be asserted without becoming viable.
\end{abstract}

\bigskip
\noindent \textbf{Keywords:} state capacity, nominal statehood, political fragmentation, sovereignty, institutional fragility, political economy, Palestine

\noindent \textbf{JEL Codes:} D72, D74, H11, O17, P16

\section{Introduction}

Modern international politics often treats statehood primarily as a question of recognition, sovereignty, and territorial claim. Political discourse is therefore naturally drawn to questions of rights, diplomatic legitimacy, and legal entitlement. Yet the comparative political economy of development points to a deeper and more demanding standard. A viable state is not merely a recognized political unit. It is an institutional achievement: a durable concentration of coercive authority, a fiscal apparatus capable of extraction and provision, an administration able to implement policy beyond the capital city, and a legal order sufficiently credible to sustain investment, contracting, and long-horizon expectations. Where these capacities remain thin, statehood may exist in formal or symbolic terms while remaining substantively fragile in economic and institutional terms.

This paper develops a theory of \emph{statehood without capacity}. Its central claim is that sovereignty and state capacity, though often treated as complements, can diverge persistently. Under specific political and economic conditions, a polity may become trapped in an equilibrium of \emph{nominal statehood}, an equilibrium in which claims to statehood, external recognition, and symbolic legitimacy remain politically salient while the coercive, fiscal, administrative, and legal capacities required for effective statehood fail to consolidate. In this equilibrium, statehood is advanced as a juridical or moral object, but the institutional foundations of self-sustaining governance remain weak.

The core mechanism is simple. State-building requires consolidation. Rival elites must give up autonomous command, accept tighter constraints on private extraction, and internalize a larger share of the gains from common administration than from fragmented control. But these conditions often fail. Where factional leaders derive substantial private returns from local coercive autonomy, patronage networks, discretionary access to public resources, or externally mediated transfers, institutional fragmentation may be socially inefficient yet privately optimal. If external actors further soften the cost of non-consolidation through humanitarian assistance, fiscal support, or diplomatic engagement that is only weakly conditioned on genuine administrative integration, the fragmented equilibrium may persist. The result is a polity in which recognition advances faster than capacity.

The paper contributes to several literatures at once. First, it speaks to the political economy of state capacity. A large body of work shows that the emergence of capable states depends on the joint development of fiscal extraction, administrative reach, and monopoly over violence, and that these capacities are central to long-run development \citep{BesleyPersson2009, BesleyPersson2010, DinceccoPrado2012, JohnsonKoyama2017}. This stream of literature has shown convincingly that states do not become effective merely because they are recognized. They also become effective when rulers acquire both the incentive and the organizational means to centralize authority and invest in public capacity. Yet much of this work implicitly assumes that sovereignty and capacity are either jointly produced or at least mutually reinforcing. This paper identifies a class of cases in which the two may come apart.

Second, the paper contributes to the literature on weak states, fragmented sovereignty, and political order. Scholars of comparative politics have long noted that many polities possess the external trappings of statehood while lacking the internal coherence normally associated with state formation \citep{JacksonRosberg1982, Migdal1988, Herbst2000}. The present paper sharpens that insight by giving it a formal political-economy structure. Rather than treating weakness as a residual condition, I model it as an equilibrium outcome sustained by the incentives of domestic elites and the stabilizing role of external actors. In that sense, the paper moves from descriptive weak-state analysis to an explicit theory of why institutional non-consolidation can persist even when all parties understand its aggregate costs.

Third, the paper connects to literature on rentier politics, aid dependence, and the fiscal foundations of accountability. One of the classic insights of political economy is that taxation and representation are linked. Rulers who depend on domestic extraction must bargain more intensively with society and face stronger incentives to build administrative competence than rulers whose revenues derive from external rents \citep{Mahdavy1970, Ross2001, Moore2004}. Externally financed political orders may therefore display a distinctive pathology. Outside resources prevent immediate collapse while also weakening the pressure for fiscal-contract formation and institutional reform. This paper extends that logic to the domain of contested or incomplete statehood. External transfers need not be malign to have this effect. It is enough that they stabilize continuity under fragmentation more effectively than they reward costly institutional integration.

Fourth, the paper relates to the economics of conflict and institutional fragility. Civil conflict and political violence do not merely destroy output in the short run; they also alter incentives to invest, shorten planning horizons, distort public finance, and reallocate political value toward coercive control and survival \citep{FearonLaitin2003, BlattmanMiguel2010, BesleyPersson2011}. Where authority is divided and credible commitment among elites is weak, conflict risk amplifies the attractiveness of fragmentation relative to centralization. State-building becomes not only administratively costly but strategically dangerous. In such settings, even the prospect of future efficiency gains may fail to induce reform if elites cannot credibly secure their position within the post-consolidation order.

The paper’s central theoretical innovation is to distinguish between two forms of political capital: \emph{recognition capital} and \emph{capacity capital}. Recognition capital denotes the stock of external legitimacy, diplomatic endorsement, juridical acceptance, and symbolic support attached to a political project of statehood. Capacity capital denotes the stock of coercive, fiscal, administrative, and legal capabilities that make statehood function in practice. In many successful cases of state formation, these two forms of capital move together. Recognition follows capacity, or capacity follows recognition under strong complementarity. But this complementarity is not guaranteed. The argument developed here is that under some conditions recognition capital can accumulate even while capacity capital stagnates. When that happens, a polity may become increasingly visible and validated as a state project without becoming increasingly capable as a governing order.

This divergence has first-order economic consequences. A polity trapped in nominal statehood should display low investment, high volatility, weak domestic revenue capacity, persistent corruption, shallow productive transformation, and heightened vulnerability to shocks. The reason is straightforward. Firms and households respond not to flags or declarations but to enforceable rules, credible contracts, predictable taxation, and the expectation that current bargains will survive the next political rupture. Where those conditions remain weak, economic life becomes short-horizon, defensive, and dependent on externally mediated stabilization. The resulting equilibrium is not merely underdevelopment in a generic sense. It is underdevelopment linked to the political persistence of non-consolidation.

The paper then interprets Palestine as a flagship application of the theory. The Palestinian case is not analytically important because it is unique. It is important because it brings the mechanism into unusually sharp relief. Palestinian politics has long combined divided authority, weak performance-based accountability, externally mediated finance, limited unified control over coercion, and high levels of symbolic and diplomatic salience. These features are often treated as transitory imperfections around an otherwise linear state-building process. The argument here is different. They are better understood as the components of a stable low-capacity equilibrium in which fragmented authority remains privately sustainable even though it is socially costly. Palestine thus serves as an analytically revealing application of a more general theory rather than as the theory’s sole object.

The broader implication of the analysis is simple but consequential. Statehood is not only declared, recognized, or morally endorsed. It is built. When the incentives for institutional consolidation are weak, when the private returns to fragmentation remain high, and when external actors cushion the costs of non-consolidation, sovereignty may become politically durable without becoming institutionally viable. Understanding that divergence is essential for any serious political economy of fragile statehood.

The paper proceeds as follows. Section 2 defines statehood, capacity, and nominal statehood, and introduces the distinction between recognition capital and capacity capital. Section 3 presents a dynamic political-economy model in which rival elites choose between institutional consolidation and fragmented control under external transfers and imperfectly capacity-conditioned recognition. Section 4 characterizes equilibrium and derives comparative statics. Section 5 develops the theory’s testable implications. Section 6 interprets Palestine as a flagship application of the broader mechanism. Section 7 discusses wider implications for the political economy of development, fragile states, and state-building under incomplete sovereignty. Section 8 concludes.

\section{Statehood, Capacity, and Nominal Statehood}

\subsection{Statehood as an institutional achievement}

Political and legal discourse often treats statehood as a question of recognition, sovereignty, and territorial claim. In the political economy of development, however, statehood is more usefully understood as an institutional achievement. A viable state is not merely a claimant to authority. It is an organization capable of exercising authority in a durable, rule-bound, and developmentally consequential manner. At a minimum, this requires four mutually reinforcing capacities: the ability to centralize coercion, the ability to extract and allocate fiscal resources, the ability to administer policies across territory, and the ability to sustain legal credibility in ways that lengthen private planning horizons. These dimensions are analytically distinct, but they tend to be complementary in practice. Fiscal extraction is difficult without coercive control, administrative reach is difficult without revenue, legal credibility is difficult without a minimally autonomous enforcement apparatus, and all of them interact to determine whether private actors expect current bargains to survive future shocks.

Let effective statehood be represented by
\[
S_t = \mathcal{S}(C_t,F_t,A_t,L_t),
\]
where \(C_t\) denotes coercive centralization, \(F_t\) fiscal capacity, \(A_t\) administrative capability, and \(L_t\) legal credibility at time \(t\). Assume that \(\mathcal{S}(\cdot)\) is increasing in each argument and weakly supermodular:
\[
\frac{\partial \mathcal{S}}{\partial x_j} > 0
\qquad \text{for all } x_j \in \{C_t,F_t,A_t,L_t\},
\]
and
\[
\frac{\partial^2 \mathcal{S}}{\partial x_j \partial x_k} \geq 0
\qquad \text{for all } j \neq k.
\]
The supermodularity assumption captures the idea that partial improvements in one margin of capacity are more valuable when complementary margins are also strong. A polity with modest coercive control but no fiscal apparatus, or with formal courts but no enforcement capacity, does not generate the full benefits normally associated with statehood.

This way of defining statehood has two advantages. First, it clarifies that statehood is not exhausted by its legal shell. Second, it links the concept of statehood directly to the economics of growth, investment, and institutional persistence. The relevant question is not only whether a polity is recognized as a state or as a future state, but whether it has crossed a threshold of institutional coherence sufficient to sustain a self-enforcing order.

\subsection{Recognition capital and capacity capital}

The central conceptual distinction of the analysis is between \emph{recognition capital} and \emph{capacity capital}. These are related but non-identical stocks.

Let \(R_t \geq 0\) denote recognition capital. This is captured by the accumulated stock of diplomatic legitimacy, juridical acceptance, symbolic endorsement, and political validation attached to a project of statehood. Recognition capital may be increased through formal recognition, participation in international organizations, diplomatic support, narrative legitimacy, or broader forms of external endorsement.

Let \(K_t \geq 0\) denote capacity capital. This is the accumulated stock of coercive, fiscal, administrative, and legal capability that supports effective governance. It is the institutional asset base that determines whether statehood is viable in a functional rather than purely symbolic sense.

Formally, let
\[
K_t = \mathcal{K}(C_t,F_t,A_t,L_t),
\]
where \(\mathcal{K}(\cdot)\) is increasing in each argument. For simplicity, one may think of \(K_t\) as an index of institutional capability, while \(R_t\) indexes external and symbolic validation.

In many successful cases of state formation, \(R_t\) and \(K_t\) move together. Diplomatic recognition follows institutional consolidation, or else early recognition accelerates capacity accumulation because it unlocks credible access to trade, finance, and administrative cooperation. In either case, the two stocks are complementary. But complementarity is not guaranteed. The key possibility explored here is that \(R_t\) may increase while \(K_t\) stagnates or declines:
\[
\frac{dR_t}{dt} > 0
\qquad \text{and} \qquad
\frac{dK_t}{dt} \leq 0.
\]
This divergence is the core pathology of statehood without capacity.

To sharpen the distinction, define the \emph{recognition-capacity gap} as
\[
G_t \equiv R_t - \psi K_t,
\]
for some scaling parameter \(\psi > 0\). A rise in \(G_t\) indicates that recognition is outpacing institutional capability. The argument of the paper is that in some political environments \(G_t\) is not merely a transition artifact. It can be an equilibrium outcome.

\subsection{Nominal statehood}

The concept of \emph{nominal statehood} captures the case in which a polity possesses substantial recognition capital but insufficient capacity capital to sustain effective statehood.

\begin{definition}
A polity is in a regime of nominal statehood at time \(t\) if
\[
R_t \geq \bar{R}
\qquad \text{and} \qquad
K_t < \bar{K},
\]
where \(\bar{R}\) is a threshold level of recognition sufficient for statehood claims to be politically and diplomatically salient, and \(\bar{K}\) is a threshold level of institutional capacity sufficient for viable statehood.
\end{definition}

This definition is intentionally minimalist. It does not require full formal recognition, nor does it require complete institutional collapse. It identifies an intermediate but politically important condition. Namely, the polity is sufficiently recognized to function as a meaningful state project, but insufficiently capable to function as a self-sustaining state.

Nominal statehood should not be confused with mere weakness. Many weak states remain states in a robust sense because they retain a unified monopoly of violence, a functioning fiscal apparatus, and administrative continuity despite low income or corruption. Nor should nominal statehood be confused with pure statelessness. The focal object of interest here is precisely the unstable middle as a politically salient statehood project that has acquired enough recognition to endure symbolically, yet not enough capacity to become institutionally self-enforcing.

\subsection{Why recognition and capacity may diverge?}

Why might \(R_t\) and \(K_t\) diverge? The simplest answer is that they are produced by different mechanisms and rewarded by different constituencies. Recognition capital is often generated externally. Diplomatic actors, international organizations, transnational advocacy networks, and foreign governments may reward claims to statehood for reasons linked to legal principle, normative commitment, conflict management, historical responsibility, strategic balancing, or symbolic politics. Those actors may care about capacity, but not necessarily at the margin that determines recognition.

Capacity capital, by contrast, is generated internally through coercive consolidation, fiscal bargaining, bureaucratic investment, and legal institutionalization. These are costly processes. They require elites to surrender discretion, accept more transparent constraints, reduce private rents, and invest in administrative structures whose returns are diffuse and long-run. Capacity therefore depends on a domestic incentive structure that may be much harder to alter than the external environment governing recognition.

This asymmetry matters because it implies that the political return to recognition may remain high even when the political return to capacity-building is weak. A polity can therefore accumulate recognition capital without resolving the internal collective-action problem of state formation.

\subsection{A simple benchmark}

To clarify the intuition, suppose that state viability is given by
\[
V_t = \Lambda(R_t,K_t),
\]
where \(\Lambda_R > 0\), \(\Lambda_K > 0\), and
\[
\Lambda_{RK} > 0.
\]
Recognition and capacity are thus complementary in producing viability. Yet private elite payoffs need not weight them symmetrically. Let the representative elite valuation of political order be
\[
U_t^{e} = \mu R_t + \nu K_t + \Xi_t,
\]
where \(\mu,\nu > 0\), but where \(\Xi_t\) captures faction-specific control rents, autonomous authority, and other private benefits that may be increasing in fragmentation. If \(\Xi_t\) is sufficiently large under non-consolidation, elites may prefer an increase in \(R_t\) without a corresponding increase in \(K_t\). Recognition then substitutes politically for capacity even when it does not substitute functionally for state viability. This wedge between functional and political valuations is the core theoretical object of the paper. From the standpoint of society, viable statehood requires that recognition and capacity move together. From the standpoint of fragmented elites, however, a high-recognition, low-capacity equilibrium may be privately sustainable.

\subsection{Relation to the literature}

The distinction between recognition and capacity builds on several literatures while cutting across them. The state-capacity literature has emphasized the centrality of fiscal extraction, violence control, and administrative investment for long-run development and political order \citep{BesleyPersson2009, BesleyPersson2010, Dincecco2011, Cardenas2014}. The literature on weak and quasi-states has shown that external sovereignty may coexist with weak domestic institutions \citep{JacksonRosberg1982, Clapham1996, Herbst2000}. Work on violence and social orders has further clarified that political order depends on how elites manage violence and access to rents \citep{NorthWallisWeingast2009}. The argument here is complementary but more specific. It identifies a class of equilibria in which the symbolic and diplomatic dimensions of statehood can advance without the institutional consolidation that statehood ordinarily presupposes. The next section formalizes that argument.

\section{The Model}

\subsection{Environment}

Consider an infinite-horizon polity populated by two elite blocs, indexed by \(i \in \{1,2\}\), and a representative private sector. Time is discrete, \(t=0,1,2,\dots\). In each period, elites simultaneously choose whether to support institutional \emph{unification} or preserve \emph{fragmentation}. Let the action of bloc \(i\) at time \(t\) be \(a_{it} \in \{U,F\}\). Unified governance is implemented only if both blocs choose \(U\). If at least one bloc chooses \(F\), fragmentation persists. Thus the institutional regime at time \(t\), denoted \(\theta_t\), is given by
\[
\theta_t =
\begin{cases}
U, & \text{if } a_{1t}=a_{2t}=U,\\
F, & \text{otherwise.}
\end{cases}
\]

This coordination structure captures an elementary but important fact about state-building. Specifically, the centralization requires mutual acquiescence among competing centers of power, whereas fragmentation can survive through unilateral defection. The polity is characterized by two state variables. The first is capacity capital, \(K_t \geq 0\), which summarizes the stock of coercive, fiscal, administrative, and legal capability accumulated up to time \(t\). The second is recognition capital, \(R_t \geq 0\), which summarizes the stock of external legitimacy, diplomatic support, and symbolic validation attached to the political project of statehood.

The model is designed to capture three mechanisms. First, unified governance raises aggregate productivity by improving contract enforcement, policy coherence, and internal security. Second, fragmentation raises private elite returns by preserving local control and opportunities for informal extraction. Third, recognition and external transfers may continue to flow even under fragmentation, thereby allowing the political project of statehood to remain salient while institutional capacity remains weak.

\subsection{Production, investment, and capacity accumulation}

Aggregate output under regime \(\theta_t \in \{U,F\}\) is given by
\[
Y_t = A(\theta_t,p_t) K_t^\alpha,
\qquad \alpha \in (0,1),
\]
where \(p_t \in [0,1]\) is a peace-credibility parameter and \(A(\theta_t,p_t)\) is institutional productivity. Assume
\[
A(U,p_t) > A(F,p_t)
\qquad \text{for all } p_t,
\]
and
\[
\frac{\partial A(U,p_t)}{\partial p_t} > 0,
\qquad
\frac{\partial A(F,p_t)}{\partial p_t} \geq 0.
\]
The first condition states that unified governance is more productive than fragmented governance for a given stock of institutional capital. The second states that the gain from unification is larger when peace, external access, and long-horizon expectations are more credible. Private investment depends positively on institutional productivity and negatively on expropriation risk. Let investment be
\[
I_t = I_0 + \phi A(\theta_t,p_t) - \lambda q(\theta_t),
\]
where \(I_0>0\), \(\phi>0\), \(\lambda>0\), and
\[
q(F) > q(U).
\]
Fragmentation therefore reduces investment both because it lowers productivity and because it raises uncertainty, local predation, and contract risk. Capacity capital evolves according to
\[
K_{t+1} = (1-\delta)K_t + I_t,
\qquad \delta \in (0,1).
\]
This law of motion implies that institutional choices have dynamic consequences. Fragmentation lowers current output, but it also lowers future capacity by depressing investment. The model therefore generates a dynamic state-capacity trap rather than a purely static inefficiency.

\subsection{Recognition dynamics}

Recognition capital evolves separately from capacity capital. Let
\[
R_{t+1} = (1-\rho)R_t + \eta X_t + \omega Z_t + \chi m(K_t),
\]
where \(\rho \in (0,1)\), \(\eta,\omega,\chi > 0\), \(X_t\) denotes formal diplomatic support, \(Z_t\) denotes broader symbolic or normative endorsement, and \(m(K_t)\) is a weakly increasing function of capacity. Assume
\[
m'(K_t) \geq 0,
\]
but allow the marginal link between recognition and capacity to be weak:
\[
\chi m'(K_t) \ll \eta \frac{\partial X_t}{\partial \cdot} + \omega \frac{\partial Z_t}{\partial \cdot}.
\]
The point is not that capacity never affects recognition. It is that recognition can be sustained by channels only weakly tied to domestic institutional performance. This decoupling is the formal source of nominal statehood. A polity can accumulate recognition capital even when its underlying state capacity is stagnant.

\subsection{External transfers}

The polity receives externally mediated transfers \(T_t\), which may include humanitarian support, budgetary assistance, or other stabilizing flows. Let
\[
T_t = \tau_0 + \tau_1 \mathbf{1}\{\theta_t=F\} + \tau_2 H_t,
\]
where \(\tau_0 \geq 0\), \(\tau_1 \geq 0\), \(\tau_2 \geq 0\), and \(H_t\) is a humanitarian or crisis-intensity variable. The parameter \(\tau_1\) captures the possibility that fragmentation itself attracts stabilization resources, either because fragmented governance is more prone to crises or because external actors prioritize continuity and humanitarian mitigation over institutional restructuring. This feature does not require donors to prefer fragmentation. It requires only that transfers respond to fragility in ways that partially insure elites against the consequences of non-consolidation.

\subsection{Elite payoffs}

Each elite bloc receives a payoff composed of five elements: a formal share of output, informal rents, utility from autonomous political control, recognition-related political returns, and a captured share of external transfers. Period payoff for elite \(i\) is
\[
\pi_i(\theta_t,K_t,R_t) =
s_i(\theta_t)Y_t
+ r_i(\theta_t)
+ \gamma_i(\theta_t)
+ \mu_i R_t
+ \beta_i T_t.
\]
The components have natural interpretations. The term \(s_i(\theta_t)Y_t\) is the elite’s appropriable formal share of aggregate output. The term \(r_i(\theta_t)\) captures informal extraction through patronage, procurement, licensing, territorial brokerage, or other discretionary channels. The term \(\gamma_i(\theta_t)\) captures the private value of autonomous command, local coercive control, appointment power, and independent political relevance. The parameter \(\mu_i > 0\) measures the extent to which recognition capital yields private political value to elite \(i\). The parameter \(\beta_i \in [0,1]\) is the elite’s captured share of external transfers.

Assume
\[
r_i(F) > r_i(U),
\qquad
\gamma_i(F) > \gamma_i(U),
\]
for \(i=1,2\). Fragmentation thus offers private advantages even when it lowers aggregate output.

Elites discount the future at rate \(\delta_e \in (0,1)\). The continuation value for elite \(i\) satisfies
\[
V_i(K_t,R_t) =
\max_{a_{it}\in\{U,F\}}
\left\{
\pi_i(\theta_t,K_t,R_t)
+ \delta_e \mathbb{E}\left[V_i(K_{t+1},R_{t+1})\mid \theta_t\right]
\right\}.
\]

\subsection{Social welfare}

The elite objective differs from social welfare. Let aggregate welfare at time \(t\) be
\[
W_t = Y_t - \sum_{i=1}^2 r_i(\theta_t) + \zeta K_t + \xi R_t,
\]
where \(\zeta > 0\) captures the social value of accumulated institutional capacity and \(\xi \geq 0\) the social value of recognition. Because informal rents are redistributive and often distortionary, they enter welfare negatively even though they enter elite payoffs positively. This creates the key wedge:
\[
\theta_t = F
\]
may be privately optimal for elites while socially inefficient.

\subsection{Unification decision}

Because unification requires bilateral support, regime \(U\) is implemented only if both elites prefer it to fragmentation in continuation-value terms. Elite \(i\) supports unification at time \(t\) if
\[
\Delta_i(K_t,R_t)
\equiv
V_i^U(K_t,R_t) - V_i^F(K_t,R_t)
\geq 0.
\]
A unified regime is therefore an equilibrium outcome only if
\[
\Delta_1(K_t,R_t) \geq 0
\qquad \text{and} \qquad
\Delta_2(K_t,R_t) \geq 0.
\]
Otherwise, fragmentation persists.

To see the structure more clearly, consider the one-period component of the difference:
\begin{align*}
\pi_i(U)-\pi_i(F)
&=
s_i(U)Y(U,K_t,p_t)-s_i(F)Y(F,K_t,p_t) \\
&\quad -\big[r_i(F)-r_i(U)\big]
-\big[\gamma_i(F)-\gamma_i(U)\big]
-\beta_i\big(T^F-T^U\big).
\end{align*}
Even before accounting for dynamic effects, unification is unattractive to elite \(i\) whenever
\[
s_i(U)Y(U,K_t,p_t)-s_i(F)Y(F,K_t,p_t)
<
\big[r_i(F)-r_i(U)\big]
+\big[\gamma_i(F)-\gamma_i(U)\big]
+\beta_i\big(T^F-T^U\big).
\]
The left-hand side is the elite’s internalized formal gain from unification. The right-hand side is the combined loss of informal rents, local political control, and transfer-linked advantages under fragmentation. When the latter dominates, fragmentation is privately optimal even though aggregate output is lower.

\subsection{The logic of nominal statehood}

The model yields a distinctive political order. Fragmentation lowers output and slows the accumulation of capacity capital:
\[
K_{t+1}^F < K_{t+1}^U.
\]
Yet recognition capital may continue to accumulate under fragmentation if diplomatic and symbolic support remain strong:
\[
R_{t+1}^F \approx R_{t+1}^U
\quad \text{or even} \quad
R_{t+1}^F > R_{t+1}^U.
\]
In that case the recognition-capacity gap,
\[
G_t = R_t - \psi K_t,
\]
widens over time under persistent fragmentation. The polity moves deeper into nominal statehood: politically recognized, institutionally thin. This equilibrium is especially likely when three conditions jointly hold. First, elite rents and control benefits from fragmentation are high. Second, external transfers cushion the immediate cost of low capacity. Third, recognition capital yields political value independently of administrative performance. The next section characterizes this equilibrium more formally and derives its comparative statics.

\section{Equilibrium and Comparative Statics}

This section characterizes the conditions under which a polity becomes trapped in a regime of nominal statehood. The first result establishes the existence of a fragmentation equilibrium despite the aggregate productivity advantage of unified governance. The second result shows that this equilibrium is dynamically self-reinforcing because fragmentation depresses future capacity accumulation. The third and fourth results identify the roles of recognition capital and external transfers in sustaining the divergence between symbolic statehood and effective statehood. The final result clarifies the role of peace credibility as a complement to institutional consolidation.

Throughout, let
\[
\Delta_i(K_t,R_t)
\equiv
V_i^U(K_t,R_t)-V_i^F(K_t,R_t)
\]
denote elite \(i\)'s gain from unification relative to fragmentation at time \(t\). A unified regime obtains if and only if \(\Delta_i(K_t,R_t)\geq 0\) for both \(i=1,2\). Otherwise fragmentation persists.

\subsection{A static benchmark}

It is useful to begin with the one-period version of the problem. Ignoring continuation values for the moment, elite \(i\) prefers unification if
\begin{align}
&s_i(U)Y(U,K_t,p_t)-s_i(F)Y(F,K_t,p_t) \nonumber\\
&\qquad \geq
\big[r_i(F)-r_i(U)\big]
+\big[\gamma_i(F)-\gamma_i(U)\big]
+\beta_i\big(T^F-T^U\big).
\label{eq:static_condition}
\end{align}
The left-hand side is the elite's internalized formal gain from higher aggregate output under unification. The right-hand side is the elite's private loss from surrendering informal rents, autonomous control, and transfer-related benefits linked to fragmentation.

Condition \eqref{eq:static_condition} already reveals the central inefficiency. Even if aggregate output is higher under \(U\), unification fails whenever the private return to fragmentation exceeds the elite's appropriable share of the social surplus from consolidation.

\begin{proposition}[Static fragmentation equilibrium]
Suppose that for each elite bloc \(i \in \{1,2\}\),
\begin{align}
&s_i(U)Y(U,K_t,p_t)-s_i(F)Y(F,K_t,p_t) \nonumber\\
&\qquad <
\big[r_i(F)-r_i(U)\big]
+\big[\gamma_i(F)-\gamma_i(U)\big]
+\beta_i\big(T^F-T^U\big).
\label{eq:prop1_condition}
\end{align}
Then \((F,F)\) is a stage-game Nash equilibrium, even though unified governance is socially more productive.
\end{proposition}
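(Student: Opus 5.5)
The plan is to verify the Nash condition directly in the one-shot stage game, using the coordination structure of the regime map \(\theta_t\). Fix the action profile \((F,F)\). The regime is then \(F\), and I will check, for each bloc \(i\), that switching unilaterally to \(U\) is not profitable.

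The first and key observation is that a unilateral deviation by bloc \(i\) from \((F,F)\) to \((U,F)\) or \((F,U)\) leaves the regime at \(\theta_t=F\). This is because unification requires \(a_{1t}=a_{2t}=U\). Since the stage payoff \(\pi_i(\theta_t,K_t,R_t)\) depends on actions only through \(\theta_t\), with \(K_t\) and \(R_t\) predetermined at \(t\), the deviation payoff equals the equilibrium payoff. So \(F\) is a weak best response to \(F\) for each \(i\), and \((F,F)\) is a stage-game Nash equilibrium. This step holds irrespective of condition \eqref{eq:prop1_condition}. I will state it explicitly, because it is the honest content of the Nash claim: fragmentation survives through unilateral defection.

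Second, I will show where condition \eqref{eq:prop1_condition} does real work. It makes \(F\) a strict preference for each bloc even in the pivotal comparison of regime \(U\) against regime \(F\). Expanding \(\pi_i(U)-\pi_i(F)\) as in the one-period decomposition, using \(T^U=\tau_0+\tau_2H_t\) and \(T^F=\tau_0+\tau_1+\tau_2H_t\), gives
\[
\pi_i(U)-\pi_i(F)=\big[s_i(U)Y(U)-s_i(F)Y(F)\big]-\big[r_i(F)-r_i(U)\big]-\big[\gamma_i(F)-\gamma_i(U)\big]-\beta_i\tau_1 .
\]
Under \eqref{eq:prop1_condition} this is strictly negative. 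Two consequences follow. If the other bloc played \(U\), then \(F\) would still be a strict best response, so \(F\) is weakly dominant for each bloc. Hence \((F,F)\) is the unique equilibrium in undominated strategies, and \((U,U)\) is not an equilibrium. The term \(\mu_iR_t\) cancels because recognition is a predetermined stock, so it does not enter the comparison.

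Third, for the clause ``even though unified governance is socially more productive,'' I will invoke \(A(U,p_t)>A(F,p_t)\). With \(K_t\) fixed this gives \(Y(U)>Y(F)\). Together with \(r_i(F)>r_i(U)\), it also gives \(W_t(U)>W_t(F)\) in the static welfare \(W_t\). The step I expect to be the main obstacle is conceptual rather than technical: the equilibrium claim is nearly trivial given the coordination structure. I will therefore present the strict-dominance argument as the substantive part, and note that the proposition concerns only the stage game, so continuation values and \(\delta_e\) are set aside here.
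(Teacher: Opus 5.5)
Your argument is correct, and it is more precise than the paper's own proof.

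The paper reads the hypothesis as giving each bloc a strictly higher current payoff from $F$ than from $U$ ``regardless of the other elite's action,'' and concludes that $F$ is a best response. That conflates the regime comparison of $\pi_i(F)$ with $\pi_i(U)$ and the action comparison. Against $F$, both of bloc $i$'s actions yield $\theta_t=F$, so the bloc is exactly indifferent, not strictly better off.

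You separate the two roles. The Nash property of $(F,F)$ follows from the coordination structure alone and does not need the hypothesis. The hypothesis instead makes $F$ a strict best response to $U$, hence weakly dominant. This rules out $(U,U)$ and makes $(F,F)$ the unique profile in undominated strategies. The paper's route is shorter and ties the conclusion directly to the stated inequality, which is how the proof of Proposition 7 later invokes it. However, its strictness claim is false as written, and yours correctly identifies what the hypothesis actually delivers.

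Your uniqueness claim is also pitched at the right level. Under the hypothesis, $(U,F)$ and $(F,U)$ are Nash equilibria as well, all implementing regime $F$, so only uniqueness in undominated strategies holds.

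One small point: you need not substitute $T^F-T^U=\tau_1$. The hypothesis is already stated in terms of $T^F-T^U$, and the substitution implicitly assumes that $H_t$ does not depend on the regime.

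Your welfare remark, $W_t(U)>W_t(F)$ via the rent term, goes slightly beyond the paper. The paper uses only $Y(U,K_t,p_t)>Y(F,K_t,p_t)$ from Assumption A1.
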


\noindent
\textit{Proof sketch.}
Under \eqref{eq:prop1_condition}, each elite obtains a strictly higher current payoff from preserving fragmentation than from supporting unification, regardless of the other elite's action. Since unification requires bilateral support while fragmentation survives unilateral defection, each elite's best response is \(F\). Hence \((F,F)\) is a Nash equilibrium. The equilibrium is socially inefficient because \(Y(U,K_t,p_t)>Y(F,K_t,p_t)\) by assumption.

Proposition 1 establishes the most basic point of the paper: fragmentation need not reflect confusion, irrationality, or incomplete information. It may simply reflect equilibrium behavior under dispersed control rents.

\subsection{Dynamic persistence and the capacity trap}

The stage-game result is important but incomplete, because institutional choices affect future state capacity. Fragmentation reduces investment and therefore slows the accumulation of \(K_t\). This generates a dynamic feedback: today's fragmentation lowers tomorrow's gains from unification.

To see the logic, note that under the law of motion
\[
K_{t+1}=(1-\delta)K_t + I_t,
\]
and
\[
I_t = I_0+\phi A(\theta_t,p_t)-\lambda q(\theta_t),
\]
with \(A(U,p_t)>A(F,p_t)\) and \(q(F)>q(U)\), it follows immediately that
\[
K_{t+1}^U > K_{t+1}^F
\]
for given \(K_t\) and \(p_t\).

\begin{proposition}[Dynamic fragmentation trap]
Suppose the conditions of Proposition 1 hold and assume \(A(U,p_t)>A(F,p_t)\), \(q(F)>q(U)\), and \(\alpha \in (0,1)\). Then persistent fragmentation generates a dynamic trap in the following sense:
\[
K_{t+1}^F < K_{t+1}^U,
\]
and therefore
\[
Y(U,K_{t+1}^F,p_{t+1})-Y(F,K_{t+1}^F,p_{t+1})
<
Y(U,K_{t+1}^U,p_{t+1})-Y(F,K_{t+1}^U,p_{t+1}).
\]
Hence the future aggregate gain from unification is smaller after a history of fragmentation than after a history of unification.
\end{proposition}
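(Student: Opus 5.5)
The argument has two steps: a comparison of capacity stocks, then a monotonicity property of the output gap in capacity. First I will establish \(K_{t+1}^F<K_{t+1}^U\). I fix \(K_t\) and \(p_t\) and take the difference of the two laws of motion. The depreciation term \((1-\delta)K_t\) cancels, so
\[
K_{t+1}^U-K_{t+1}^F=I_t^U-I_t^F=\phi\big[A(U,p_t)-A(F,p_t)\big]+\lambda\big[q(F)-q(U)\big].
\]
Both brackets are strictly positive by assumption, and \(\phi,\lambda>0\), so the difference is strictly positive. Here "persistent fragmentation" is read as \(\theta_t=F\) being the regime chosen at \(t\), which Proposition 1 guarantees under \eqref{eq:prop1_condition}. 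The comparison with \(\theta_t=U\) is then the counterfactual history. If one wants the statement after several periods of fragmentation, the same difference propagates by induction. The gap satisfies \(K_{s+1}^U-K_{s+1}^F=(1-\delta)(K_s^U-K_s^F)+(I_s^U-I_s^F)\), and every term is positive since \(\delta<1\).

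Second, I will show that the unification output gap is increasing in capacity. Define
\[
D(K,p)\equiv Y(U,K,p)-Y(F,K,p)=\big[A(U,p)-A(F,p)\big]K^\alpha .
\]
Since \(A(U,p)-A(F,p)>0\) for every \(p\), and \(K\mapsto K^\alpha\) is strictly increasing on \(K\ge 0\) for \(\alpha\in(0,1)\), the map \(D(\cdot,p)\) is strictly increasing. Evaluating at the common \(p_{t+1}\) with \(K_{t+1}^F<K_{t+1}^U\) then gives the displayed inequality immediately. The final sentence of the proposition is a restatement of this inequality.

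I expect the main obstacle to be interpretive rather than computational, namely keeping the comparison well-posed. First, \(p_{t+1}\) must be held common across the two histories; the statement writes the same \(p_{t+1}\), and I will make explicit that it is treated as exogenous to \(\theta_t\). Second, I must note that \(K_{t+1}^F\ge 0\), so that \(K^\alpha\) is defined. This holds if \(I_t^F\ge 0\), which I will either assume or derive from \(I_0+\phi A(F,p_t)\ge\lambda q(F)\). I will also remark that the role of \(\alpha\in(0,1)\) is only to make \(K^\alpha\) increasing; concavity is not used. Finally, I would flag without proving it that this shrinking gain feeds back into the left-hand side of \eqref{eq:prop1_condition} via the terms \(s_i(\theta)Y\). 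It therefore makes the condition of Proposition 1 easier to satisfy in later periods, at least when \(s_i(U)\ge s_i(F)\). That self-reinforcement is what justifies calling this a trap.
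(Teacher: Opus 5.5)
Your proof is correct and follows the paper's own argument. Your first step is exactly the computation in the paper's Lemma 1: the depreciation term cancels and $I_t^U-I_t^F=\phi[A(U,p_t)-A(F,p_t)]+\lambda[q(F)-q(U)]>0$. Your second step is the paper's observation that $[A(U,p)-A(F,p)]K^\alpha$ is strictly increasing in $K$. Your extra remarks are sound refinements that the paper leaves implicit: holding $p_{t+1}$ common, requiring $K_{t+1}^F\ge 0$, the multi-period induction, and noting that only monotonicity (not concavity) of $K^\alpha$ is used.
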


\noindent
\textit{Proof sketch.}
Since \(A(U,p_t)>A(F,p_t)\) and \(q(F)>q(U)\), investment is strictly higher under \(U\) than under \(F\). Therefore \(K_{t+1}^U>K_{t+1}^F\). Because output is increasing in \(K_t\), the level of aggregate output under either regime is lower after fragmentation than after unification. Since the gain from switching to \(U\) depends on the level of inherited capacity, a lower \(K_{t+1}\) reduces the absolute scale of the future productivity advantage from unification. 

Proposition 2 gives the model its genuinely dynamic character. Fragmentation is not merely a static equilibrium that survives period by period. It actively reshapes the future state space in ways that make reform less attractive. Political failure becomes economically self-reinforcing.

\subsection{Recognition-capacity divergence}

A defining feature of nominal statehood is that recognition capital may continue to rise even while capacity capital stagnates. In the model, this occurs because the law of motion for \(R_t\) depends partly on diplomatic and symbolic inputs that are only weakly linked to domestic institutional performance:
\[
R_{t+1} = (1-\rho)R_t+\eta X_t+\omega Z_t+\chi m(K_t),
\]
with \(\chi m'(K_t)\) small relative to the influence of \(X_t\) and \(Z_t\).

This decoupling creates a wedge between what is politically rewarded externally and what must be built internally.

\begin{proposition}[Recognition-capacity divergence]
Suppose that under fragmentation,
\[
K_{t+1}^F < K_{t+1}^U,
\]
while
\[
R_{t+1}^F \geq R_{t+1}^U - \varepsilon
\]
for some small \(\varepsilon \geq 0\). Then for sufficiently large \(t\), the recognition-capacity gap
\[
G_t \equiv R_t-\psi K_t
\]
is larger along a fragmentation path than along a unification path, for any \(\psi>0\) in a non-empty interval.
\end{proposition}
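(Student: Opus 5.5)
The plan is to compare the two paths directly through the difference
\[
G_t^F-G_t^U=(R_t^F-R_t^U)-\psi\,(K_t^F-K_t^U),
\]
and to show that for large \(t\) this is strictly positive for every \(\psi\) above an explicit cutoff. I read the hypotheses as holding along the whole path: the capacity shortfall \(K^F_{t}<K^U_{t}\) and the recognition near-parity \(R^F_{t}\geq R^U_{t}-\varepsilon\) hold for every \(t\), not just one step ahead. Under this reading, the second term contributes \(+\psi(K_t^U-K_t^F)>0\) and the first is bounded below by \(-\varepsilon\). Hence \(G_t^F>G_t^U\) whenever
\[
\psi>\frac{\varepsilon}{K_t^U-K_t^F}.
\]
The whole task therefore reduces to showing that the capacity gap \(D_t\equiv K_t^U-K_t^F\) is bounded away from zero for large \(t\).

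To obtain that lower bound I would use the laws of motion from Section 3, as in the proof of Proposition 2. Along persistent paths with a common \(p\), investment is constant within each regime, at \(I^U=I_0+\phi A(U,p)-\lambda q(U)\) and \(I^F=I_0+\phi A(F,p)-\lambda q(F)\). Because \(A(U,p)>A(F,p)\) and \(q(F)>q(U)\), we have \(I^U>I^F\). Subtracting the two laws of motion gives
\[
D_{t+1}=(1-\delta)D_t+(I^U-I^F).
\]
This is a linear recursion, so \(D_t\to (I^U-I^F)/\delta>0\) as \(t\to\infty\), starting from any common \(K_0\). It follows that for all \(t\) beyond some \(T\), \(D_t\geq \underline D\equiv (I^U-I^F)/(2\delta)\). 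If \(p_t\) varies over time, I would impose a uniform bound \(\inf_p[\phi(A(U,p)-A(F,p))+\lambda(q(F)-q(U))]>0\), and the same contraction argument then delivers a uniform lower bound.

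Combining the two steps, for \(t\geq T\) we get \(G_t^F-G_t^U\geq \psi\underline D-\varepsilon\). This is strictly positive for every \(\psi\in(\varepsilon/\underline D,\infty)\), which is a non-empty interval, and when \(\varepsilon=0\) it is all of \(\psi>0\). This yields the proposition's claim of a non-empty interval of \(\psi\).

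The main obstacle is the recognition side. The hypothesis is stated only for \(R_{t+1}\), but \(R\) feeds on \(m(K_t)\), so the path gap in \(R\) could in principle accumulate through the \(\chi m(K_t)\) term. I would control this with a lemma built on the recognition recursion. Assuming the same inputs \(X_t,Z_t\) on both paths, the recursion gives
\[
R^U_{t}-R^F_{t}=\sum_{s<t}(1-\rho)^{t-1-s}\chi\big[m(K_s^U)-m(K_s^F)\big].
\]
This is bounded by \((\chi/\rho)\sup_s[m(K_s^U)-m(K_s^F)]\), which stays small under the weak-link assumption on \(\chi m'\), so it can serve as \(\varepsilon\) uniformly in \(t\). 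If the \(X_t,Z_t\) inputs are instead allowed to respond positively to fragmentation, for example through crisis salience, the gap only moves further in the desired direction.
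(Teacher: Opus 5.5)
Your proof is correct and has the same skeleton as the paper's. Both write \(G_t^F-G_t^U=(R_t^F-R_t^U)-\psi(K_t^F-K_t^U)\), sign the capacity term, and bound the recognition term below by \(-\varepsilon\). Where you differ, you improve on the paper's argument.

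On the capacity side, the paper just cites Proposition 2. Through Lemma 1, that gives only a one-step strict inequality from a common \(K_t\). A gap \(K_t^U-K_t^F>0\) that shrank to zero would not let any fixed \(\psi\) dominate a fixed \(\varepsilon>0\). Your recursion \(D_{t+1}=(1-\delta)D_t+(I^U-I^F)\) supplies the uniform lower bound that the paper only invokes as a ``persistent capacity gap.''

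More importantly, the paper's proof (and its Lemma 3) concludes for \(\psi\) in a \emph{sufficiently small} positive interval. That is backwards when \(\varepsilon>0\), because the term \(\psi(K_t^U-K_t^F)\) vanishes as \(\psi\downarrow 0\). Your interval \((\varepsilon/\underline D,\infty)\) is the correct one, and it becomes all \(\psi>0\) only when \(\varepsilon=0\).

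Two small refinements are available. First, no extra uniformity assumption is needed when \(p_t\) varies. A1--A2 give \(I_t^U-I_t^F\geq\lambda[q(F)-q(U)]>0\) for every \(p_t\), so \(D_t\geq\lambda[q(F)-q(U)]\) for all \(t\geq 1\) from a common start. Second, in the constant-\(p\) case, your recognition lemma combined with \(m(K_s^U)-m(K_s^F)\leq(\sup m')D_s\) gives \(\varepsilon/\underline D\leq 2\chi\sup m'/\rho\). Hence every \(\psi>2\chi\sup m'/\rho\) works, which states the weak-link assumption directly as a condition on \(\psi\) rather than on the size of \(\varepsilon\).
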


\noindent
\textit{Proof sketch.}
Repeated fragmentation lowers the entire path of \(K_t\) relative to unification by Proposition 2. If the induced loss in \(R_t\) is small or negligible because recognition is weakly conditioned on capacity, then the difference
\[
G_t^F-G_t^U
=
(R_t^F-R_t^U)-\psi(K_t^F-K_t^U)
\]
eventually becomes positive for a range of \(\psi\), since the capacity term is persistently negative and the recognition term is bounded below by \(-\varepsilon\).

The proposition formalizes the paper's central pathology: a polity can become more politically validated as a state project while becoming less institutionally capable as a governing order.

\subsection{Recognition as a substitute in elite political valuation}

Recognition matters not only because it changes the polity's external standing, but also because it may directly enter elite utility. If elites derive political value from being associated with a recognized statehood project, recognition capital can partially substitute for actual administrative reform in their payoff calculus.

Recall that elite payoff includes the term \(\mu_i R_t\), with \(\mu_i>0\). This implies that increases in \(R_t\) raise elite continuation values even if they do not materially improve \(K_t\).

\begin{proposition}[Recognition-induced persistence]
Suppose \(\mu_i>0\) for each elite bloc and that recognition is only weakly responsive to capacity, in the sense that \(\chi m'(K_t)\) is small. Then an increase in exogenous diplomatic or symbolic support, that is, an increase in \(X_t\) or \(Z_t\) holding \(K_t\) fixed, weakly reduces the incentive for elite \(i\) to support unification whenever unification does not substantially increase recognition relative to fragmentation.
\end{proposition}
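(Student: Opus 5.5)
I will write $\Delta_i(K_t,R_t)=V_i^U(K_t,R_t)-V_i^F(K_t,R_t)$ as a difference of two value functions that share every term except those indexed by the regime. The plan is to isolate the dependence of each on $X_t$ and $Z_t$. The key observation is that, holding $K_t$ fixed, $X_t$ and $Z_t$ enter $\pi_i$ only through $R$. Current $R_t$ is predetermined, so they affect $R_{t+1}$ and later stocks through the law of motion $R_{t+1}=(1-\rho)R_t+\eta X_t+\omega Z_t+\chi m(K_t)$. Because this law of motion is linear and separable, an increment $dX_t$ raises $R_{t+s}$ by $\eta(1-\rho)^{s-1}dX_t$ along \emph{both} the unification path and the fragmentation path. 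The regime matters only through $K$ and the term $\chi m(K)$. Through the linear term $\mu_i R$, this adds to each continuation value the same amount
\[
\mu_i\eta\sum_{s\ge1}\delta_e^{s}(1-\rho)^{s-1}dX_t=\frac{\mu_i\eta\,\delta_e}{1-\delta_e(1-\rho)}\,dX_t,
\]
and analogously for $Z_t$ with $\omega$ in place of $\eta$.

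The first step is therefore to show that the direct recognition effect cancels in $\Delta_i$. What remains is the regime-dependent channel. I would decompose
\[
\frac{\partial \Delta_i}{\partial X_t}=\mu_i\sum_{s\ge1}\delta_e^{s}\Big(\frac{\partial R^U_{t+s}}{\partial X_t}-\frac{\partial R^F_{t+s}}{\partial X_t}\Big)+(\text{indirect terms via future regime choices}),
\]
and show that the first sum is zero or of order $\chi m'(K)$. The indirect terms arise because a change in $R$ could alter future equilibrium play. I would handle them with an envelope-type argument: future actions are optimal, so small perturbations of the state affect values only through direct payoff terms. The hypothesis that unification does not substantially increase recognition relative to fragmentation is exactly the statement that $R^U_{t+s}-R^F_{t+s}$ is small, bounded by the $\chi m(K^U)-\chi m(K^F)$ channel. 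Together these give $\partial\Delta_i/\partial X_t\le 0$ up to a term that vanishes as $\chi m'\to 0$.

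To obtain a weak inequality rather than an approximate one, I would add the paper's informal mechanism as an explicit sign assumption. Either recognition-linked payoffs are captured more fully under fragmentation, with $\mu_i$ weakly larger in regime $F$, or $X_t,Z_t$ feed $T_t$ through $H_t$ under $F$. Either assumption makes $\partial(V_i^F-V_i^U)/\partial X_t\ge0$, so $\Delta_i$ weakly falls. With purely linear, regime-independent $\mu_i$, the conclusion is only ``$\Delta_i$ does not increase, up to $O(\chi m')$,'' which I would state as the weak version.

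The main obstacle is the indirect effect on future regime choices. $V_i$ involves a max over a coordination game with possibly multiple equilibria, so it need not be differentiable. The envelope argument fails at switching points. I would first restrict to the fragmentation path characterized in Propositions 1 and 2, where $(F,F)$ is played in every continuation. There the values are explicit discounted sums and the computation above is exact. I would then argue that, by upper-hemicontinuity of the equilibrium set, the sign persists locally.
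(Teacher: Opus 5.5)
Your central step is the paper's own argument made precise. The paper's proof says the shock raises $R_{t+1}$, and hence both continuation values, ``similarly'' under $U$ and $F$, so $\Delta_i$ ``does not rise''; your common increment $\mu_i\eta\delta_e/(1-\delta_e(1-\rho))\,dX_t$ is that sentence in closed form.

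Where you go wrong is the belief that a residual of order $\chi m'$ survives, which leads you to think the stated weak inequality needs extra assumptions. It does not.
\begin{itemize}
\item The law of motion for $K$ contains neither $R$ nor $X$, so along any fixed regime path the capacity stocks do not depend on $X_t$.
\item $R_{t+1}$ is identical under both regimes, because it depends on $K_t$, not on $K_{t+1}$.
\item Hence $R^U_{t+s}-R^F_{t+s}=\chi\sum_{k=1}^{s-1}(1-\rho)^{s-1-k}\,[m(K^U_{t+k})-m(K^F_{t+k})]$ has zero derivative in $X_t$ for every value of $\chi$.
\end{itemize}
The first sum in your decomposition is therefore identically zero; the paper's ``similar'' is in fact ``identical.'' So $\partial\Delta_i/\partial X_t=0$, which already satisfies ``weakly reduces.''

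The hypotheses that $\chi m'$ is small and that unification adds little recognition affect only the level of $\Delta_i$, not the comparative static. They enter through the nonnegative term $\mu_i\sum_s\delta_e^s(R^U_{t+s}-R^F_{t+s})$. A regime-dependent $\mu_i$ or an $X\to H\to T$ channel would be needed only for a strict decline. The paper also leaves that case informal (``may fall if recognition partially substitutes politically for administrative reform'').

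On the indirect terms, your fallback of restricting to the fragmentation continuation is the right fix, and it works globally rather than locally. The profile in which both blocs play $F$ at every state is a Markov-perfect equilibrium for every parameter value, not only under Propositions 1--2, because a unilateral switch to $U$ never changes the regime. On this equilibrium, $V_i(K,R)=\mu_iR/(1-\delta_e(1-\rho))+f_i(K)$, with $f_i$ free of $R$, $X_t$ and $Z_t$. Therefore $\Delta_i=\pi_i(U)-\pi_i(F)+\delta_e[f_i(K^U_{t+1})-f_i(K^F_{t+1})]$ does not involve $X_t$ or $Z_t$ at all.

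Drop the envelope and upper-hemicontinuity steps rather than relying on them. An envelope argument controls only your own re-optimization, not an opponent's Markov strategy that may jump with $R$. Upper hemicontinuity of the equilibrium correspondence does not make a selected equilibrium's values locally constant.

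Your instinct that selection matters is nonetheless correct, and the paper's proof passes over it. Wherever both $(U,U)$ and $(F,F)$ are stage equilibria, a selection rule depending jointly on $K$ and $R$ lets a rise in $X_t$ push $R_{t+1}$ across the switching point at $K^U_{t+1}$ but not at $K^F_{t+1}$. That can move $\Delta_i$ in either direction. The proposition is therefore properly a statement about the fragmentation continuation.
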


\noindent
\textit{Proof sketch.}
An increase in \(X_t\) or \(Z_t\) raises \(R_{t+1}\), and thus raises the continuation value of the status quo path. If unification yields little additional recognition advantage relative to fragmentation, then the gain from switching regimes in continuation-value terms falls. Since fragmentation already preserves private rents and autonomous control, exogenous increases in recognition make the fragmented equilibrium more politically sustainable.

Proposition 4 is subtle but important. Recognition is socially valuable in many contexts. The result does not say otherwise. It says that when recognition is decoupled from capacity, it can soften the political necessity of administrative reform. Symbolic progress may then reduce, rather than increase, the marginal incentive to consolidate.

\subsection{External transfers and equilibrium persistence}

A parallel logic applies to external transfers. Transfers are stabilizing from a humanitarian or fiscal perspective, but they may also reduce the pressure for institutional integration if they make fragmentation easier to sustain.

Recall
\[
T_t = \tau_0+\tau_1\mathbf{1}\{\theta_t=F\}+\tau_2 H_t.
\]
The parameter \(\tau_1\) captures the extent to which fragmentation itself attracts transfer income.

\begin{proposition}[Transfer-conditional persistence]
The incentive to support unification is decreasing in \(\tau_1\). In particular,
\[
\frac{\partial \Delta_i(K_t,R_t)}{\partial \tau_1} < 0
\]
whenever elite \(i\) captures a positive share of transfers, \(\beta_i>0\), and fragmentation yields at least as much transfer support as unification.
\end{proposition}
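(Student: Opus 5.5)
The plan is to compute the derivative of \(\Delta_i\) with respect to \(\tau_1\) directly, splitting \(\Delta_i = V_i^U - V_i^F\) into a current-period component and a continuation component and signing each one.

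\textbf{Current-period term.} Transfers enter elite \(i\)'s payoff only through \(\beta_i T_t\). By the transfer rule, \(T^U = \tau_0 + \tau_2 H_t\) and \(T^F = \tau_0 + \tau_1 + \tau_2 H_t\). The current-period part of \(\Delta_i\) is \(\pi_i(U)-\pi_i(F)\), as written before equation \eqref{eq:static_condition}, and it contains \(-\beta_i(T^F-T^U) = -\beta_i\tau_1\). No other component depends on \(\tau_1\): \(s_i\), \(Y\), \(r_i\), \(\gamma_i\) and \(R_t\) are all independent of transfers. Hence the derivative of the current-period term with respect to \(\tau_1\) is \(-\beta_i\), which is strictly negative when \(\beta_i>0\). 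The hypothesis that fragmentation yields at least as much transfer support as unification is simply \(\tau_1\ge 0\); it is what makes \(\tau_1\) the relevant margin.

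\textbf{Continuation term.} This is the step I expect to be the main obstacle. Define \(V_i^U\) (resp. \(V_i^F\)) as the value of choosing regime \(U\) (resp. \(F\)) today and following equilibrium play thereafter. Neither \(K_{t+1}\) nor \(R_{t+1}\) depends on \(T_t\): investment \(I_t\) and the recognition law of motion contain no transfer term. Hence the state transitions are independent of \(\tau_1\). In the continuation, \(\tau_1\) enters only through \(\beta_i\tau_1\) in those future periods in which \(\theta=F\).

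The two values then share a common continuation value function \(\mathbb{E}V_i(\cdot)\), evaluated at different states \(K^U_{t+1}\) and \(K^F_{t+1}\). By the envelope theorem, \(\partial V_i/\partial\tau_1 = \beta_i\sum_s \delta_e^{s}\Pr(\theta_{t+s}=F)\), the expected discounted number of future fragmented periods. Proposition 2 gives \(K^F_{t+1}<K^U_{t+1}\), and a lower inherited capacity reduces the future gain from unification. So I would argue that the post-\(F\) path is weakly more fragmented, which means the continuation derivative is weakly larger after \(F\) than after \(U\).

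This monotonicity is not automatic for general \(V_i\). I would handle it in one of two ways:
\begin{itemize}[leftmargin=*]
\item \emph{Stationary-strategy case.} If play reverts to the same stationary strategy regardless of today's choice (for instance the absorbing \((F,F)\) equilibrium under Proposition 1), the continuation derivatives coincide and cancel in the difference.
\item \emph{General case.} Invoke the capacity-trap monotonicity of Proposition 2 to show that the expected fragmentation frequency is weakly higher after \(F\). Each extra fragmented period contributes an additional \(\beta_i\) to the derivative of \(V_i^F\), so the continuation difference is weakly decreasing in \(\tau_1\).
\end{itemize}

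\textbf{Conclusion.} Combining the two terms gives \(\partial\Delta_i/\partial\tau_1 \le -\beta_i<0\): a strictly negative current-period effect plus a weakly negative continuation effect. If the monotonicity argument proves delicate, I would state the stationary case as the rigorous core and treat the general case as an extension resting on Proposition 2.
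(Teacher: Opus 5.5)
Your current-period step is the paper's proof: $\partial T_t/\partial\tau_1$ equals $1$ under $F$ and $0$ under $U$, so $\pi_i(U)-\pi_i(F)$ falls by $\beta_i$. The paper handles the continuation in a single sentence, saying continuation values ``inherit the same sign through future transfer paths''. You are right that this is where the work lies. Your absorbing-$(F,F)$ case is a correct proof of that special case: the fragmentation paths from $K^U_{t+1}$ and $K^F_{t+1}$ coincide, both continuation derivatives equal $\beta_i/(1-\delta_e)$, and $\partial\Delta_i/\partial\tau_1=-\beta_i$. One correction: the formula $\partial V_i(x)/\partial\tau_1=\beta_i D(x)$ does not come from the envelope theorem, which does not control the opponent's strategy. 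Here $D(x)$ is the discounted number of fragmented periods from $x$, and the formula comes from differentiating along equilibrium strategies that are locally constant in $\tau_1$ when incentives are strict.

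The gap is your general case. Since $\partial\Delta_i/\partial\tau_1=-\beta_i+\delta_e\beta_i[D(K^U_{t+1})-D(K^F_{t+1})]$, you need the post-$U$ path to be no more fragmented than the post-$F$ path. Proposition 2 cannot supply this, because it concerns the aggregate gap $[A(U,p)-A(F,p)]K^\alpha$, not equilibrium play. Markov-perfect strategies here need not be monotone in $K$, because under the unanimity rule $(F,F)$ is self-enforcing on any set of states. In fact the claim can fail:
\begin{itemize}
\item Let $K^{*U}=I^U/\delta$, $d=I^U-I^F$, and $S=[K^{*U}-\epsilon,K^{*U}+\epsilon]$ with $\epsilon<\delta d$.
\item Let both blocs play $U$ on $S$ and $F$ off $S$. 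Since $S$ is invariant under the $U$-dynamics, this is a Markov-perfect equilibrium (strict on $S$) when the rent, control and transfer gaps are small and $s_i(U)A(U,p)>s_i(F)A(F,p)$.
\item At the state where $K^F_{t+1}=K^{*U}$, we have $K^U_{t+1}=K^{*U}+d$. Its $F$-path jumps to $K^{*U}-\delta d$ and then declines without ever entering $S$.
\item Hence $D(K^U_{t+1})=1/(1-\delta_e)$ and $D(K^F_{t+1})=0$, so $\partial\Delta_i/\partial\tau_1=\beta_i(2\delta_e-1)/(1-\delta_e)>0$ whenever $\delta_e>1/2$.
\end{itemize}
The paper's one-line continuation argument has the same hole. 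To close it you must add a restriction, for example:
\begin{itemize}
\item $\delta_e<1/2$. Together with $0\le D\le 1/(1-\delta_e)$, this gives $\partial\Delta_i/\partial\tau_1\le-\beta_i(1-2\delta_e)/(1-\delta_e)<0$ for every equilibrium continuation.
\item Equilibria in threshold strategies in $K$. Here your induction is valid because $I_t$ does not depend on $K_t$. The ordering $K^U_{t+1}>K^F_{t+1}$ is therefore preserved period by period, and the path from the higher state is weakly less fragmented at every date.
\end{itemize}
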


\noindent
\textit{Proof sketch.}
A higher \(\tau_1\) increases \(T_t\) under fragmentation without raising \(T_t\) under unification by the same amount. Since elite \(i\)'s payoff includes \(\beta_iT_t\), the current and continuation values of fragmentation rise relative to unification. Therefore \(\Delta_i(K_t,R_t)\) falls. This proposition formalizes the external-stabilization paradox. Transfers may prevent collapse while simultaneously preserving the very equilibrium that blocks capacity formation. The result is not a normative indictment of aid. It is a comparative-static statement about incentives.

\subsection{Peace credibility as a complement to state-building}

The final result concerns the role of peace credibility. In the model, peace credibility enters productivity through \(A(\theta_t,p_t)\), and the productivity advantage of unification rises with \(p_t\). Intuitively, the expected return to state-building is larger when trade access, labor mobility, external investment, and long-horizon planning are more secure.

\begin{proposition}[Peace credibility and unification]
Suppose
\[
\frac{\partial A(U,p_t)}{\partial p_t}
>
\frac{\partial A(F,p_t)}{\partial p_t}.
\]
Then the gain from unification is increasing in peace credibility:
\[
\frac{\partial \Delta_i(K_t,R_t)}{\partial p_t} > 0
\]
whenever elite \(i\) internalizes a positive share of the productivity gains from unified governance.
\end{proposition}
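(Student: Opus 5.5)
The plan is to decompose \(\Delta_i(K_t,R_t)=V_i^U(K_t,R_t)-V_i^F(K_t,R_t)\) into a current-period component and a continuation component. I then differentiate each with respect to \(p_t\), holding \(K_t\), \(R_t\), the sharing rules \(s_i(\cdot)\), rents \(r_i(\cdot)\), control values \(\gamma_i(\cdot)\) and transfers fixed. In the current-period difference \(\pi_i(U)-\pi_i(F)\) displayed in Section 3, only the output terms depend on \(p_t\): the terms \(r_i\), \(\gamma_i\), \(\mu_iR_t\) and \(\beta_iT_t\) do not. Hence
\[
\frac{\partial[\pi_i(U)-\pi_i(F)]}{\partial p_t}
= K_t^\alpha\left[s_i(U)\frac{\partial A(U,p_t)}{\partial p_t}-s_i(F)\frac{\partial A(F,p_t)}{\partial p_t}\right].
\]
I will read "elite \(i\) internalizes a positive share of the productivity gains from unified governance" as \(s_i(U)\ge s_i(F)\) with \(s_i(U)>0\). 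Combined with the hypothesis \(\partial_pA(U)>\partial_pA(F)\ge 0\), this gives
\[
s_i(U)\,\partial_pA(U)-s_i(F)\,\partial_pA(F)\ \ge\ s_i(U)\bigl[\partial_pA(U)-\partial_pA(F)\bigr]>0
\]
whenever \(K_t>0\). (If \(K_t=0\), I would note that \(I_0>0\) makes \(K_{t+1}>0\), so the continuation term below delivers strictness.)

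For the continuation component, I use the investment rule \(I_t=I_0+\phi A(\theta_t,p_t)-\lambda q(\theta_t)\). It gives \(\partial K_{t+1}^U/\partial p_t-\partial K_{t+1}^F/\partial p_t=\phi[\partial_pA(U)-\partial_pA(F)]>0\), and \(R_{t+1}\) responds to \(p_t\) only through \(\chi m(K_t)\), which is fixed at date \(t\). So a higher \(p_t\) raises next-period capacity by more along the unification branch. I then need \(V_i\) to be nondecreasing in \(K\). I would establish this by a standard contraction argument on the Bellman operator: \(\pi_i\) is increasing in \(K\) through \(s_i(\theta)A(\theta,p)K^\alpha\), and the law of motion is increasing in \(K\), so the operator maps nondecreasing functions into nondecreasing functions and its fixed point inherits monotonicity. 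The continuation term \(\delta_e\bigl(\mathbb{E}[V_i(K^U_{t+1},\cdot)]-\mathbb{E}[V_i(K^F_{t+1},\cdot)]\bigr)\) is then weakly increasing in \(p_t\) whenever \(p_t\) raises \(K_{t+1}^U\) at least as much as \(K_{t+1}^F\). Adding the two components yields \(\partial\Delta_i/\partial p_t>0\).

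The main obstacle is the continuation step. A change in \(K_{t+1}\) also shifts the future regime choices of both elites, so \(V_i\) is a value under equilibrium play rather than a single-agent maximum, and it need not be monotone or differentiable. My first attempt is to treat \(p\) as persistent and fix the continuation strategy profile. I would apply an envelope-style argument: evaluate both branches under the same future regime path, so the difference in continuation values reduces to differences in \(s_i(\theta_\tau)A(\theta_\tau,p)K_\tau^\alpha\). These differences are monotone in the initial \(K_{t+1}\) because the \(K\)-dynamics are order-preserving. If that proves too delicate, the fallback is to state the result for the one-period (myopic) component together with fixed continuation play. In that version the inequality follows from the displayed derivative alone, and this matches the level of rigor of the preceding proof sketches.
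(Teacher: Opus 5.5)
Your current-period step is sound, and it is tighter than the paper's. The paper differentiates \(Y(U,K_t,p_t)-Y(F,K_t,p_t)\) and then appeals to \(s_i(U)>0\) without handling \(s_i(F)\). You correctly see that signing \(s_i(U)\partial_pA(U,p_t)-s_i(F)\partial_pA(F,p_t)\) needs \(s_i(U)\ge s_i(F)\) together with \(\partial_pA(F,p_t)\ge 0\).

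The gap is the continuation step. Monotonicity of \(V_i\) in \(K\), together with \(\partial K^U_{t+1}/\partial p_t>\partial K^F_{t+1}/\partial p_t\), does not make \(V_i(K^U_{t+1},R_{t+1})-V_i(K^F_{t+1},R_{t+1})\) increasing in \(p_t\). With \(p_t\) entering only at date \(t\), its derivative (where it exists) is
\[
\delta_e\phi\Bigl[\partial_KV_i(K^U_{t+1},R_{t+1})\,\partial_pA(U,p_t)-\partial_KV_i(K^F_{t+1},R_{t+1})\,\partial_pA(F,p_t)\Bigr],
\]
and the two marginal values are evaluated at different capacities. Since \(\alpha<1\) and \(K^U_{t+1}>K^F_{t+1}\), the marginal value of capacity is typically \emph{smaller} on the unification branch. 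The bracket can therefore be negative whenever \(\partial_pA(F,p_t)>0\).

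Here is a concrete case. Take \(K_t=0\), \(m\) constant, and \(F\) played at every date after \(t\) (as in a fragmentation equilibrium). Choose parameters so that \(I^F_t>0\) is tiny while \(I^U_t-I^F_t\) stays fixed. Then:
\begin{itemize}
\item the current-period component vanishes;
\item \(\partial_KV_i(K^F_{t+1},\cdot)\) contains \(\alpha s_i(F)A(F,p_{t+1})(K^F_{t+1})^{\alpha-1}\), which explodes as \(K^F_{t+1}\to 0\);
\item \(\partial_KV_i(K^U_{t+1},\cdot)\) stays bounded.
\end{itemize}
Hence \(\partial\Delta_i/\partial p_t<0\) for any fixed ratio \(\partial_pA(U,p_t)/\partial_pA(F,p_t)>1\). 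The dynamic claim is not merely unproved here; it can fail. This also defeats your remark about \(K_t=0\).

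Your fallbacks do not repair this.
\begin{itemize}
\item Order preservation of the \(K\)-dynamics along a common regime path signs the \emph{level} of \((K^U_\tau)^\alpha-(K^F_\tau)^\alpha\), not its derivative in \(p\). That derivative has exactly the same concavity problem.
\item An envelope argument would evaluate each branch at its own optimal continuation, not at a common path.
\item Fixing continuation play still lets \(K_{t+1}\) move with \(p_t\). Your displayed derivative settles the sign only if continuation \emph{values} are frozen, i.e.\ in the myopic version.
\item The contraction argument for monotonicity of \(V_i\) also breaks if \(\sigma_{-i}\) switches from \(U\) to \(F\) as \(K\) rises, as you suspected.
\end{itemize}

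The paper's own proof does not close this step either; it simply asserts that ``the same is true for continuation values.'' So you are not missing an idea the paper supplies; the dynamic statement needs an extra hypothesis. Two clean versions are available:
\begin{itemize}
\item The static statement, which is your displayed derivative and is what drives the threshold \(p_i^*\) in the calibration.
\item The dynamic statement with \(\partial_pA(F,p)=0\), as in the calibration where only \(A_U(p)=\underline{A}_U+\kappa p\) depends on \(p\), and with \(p_t\) transitory. Then \(K^F_{t+1}\) does not move, the bracket reduces to \(\partial_KV_i(K^U_{t+1},R_{t+1})\,\partial_pA(U,p_t)\ge 0\), and monotonicity of \(V_i\) suffices.
\end{itemize}
More generally, you need an increasing-differences condition that guarantees the bracket above is nonnegative.
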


\noindent
\textit{Proof sketch.}
A higher \(p_t\) raises the productivity of unified governance more than that of fragmentation. This increases output under \(U\) relative to \(F\), and thereby raises both the current-period formal output gain and the future capacity gain associated with unification. Since elite \(i\) captures a share of these gains through \(s_i(U)Y_t\), the continuation-value difference \(\Delta_i(K_t,R_t)\) increases in \(p_t\). Proposition 6 highlights an interaction that is central to the paper's broader interpretation. Institutional consolidation and strategic peace are complements, not substitutes. Where credible peace is absent, the expected gains from administrative unification are lower, and elite incentives tilt further toward fragmentation.

\subsection{Characterizing nominal-statehood equilibrium}

The preceding results can now be summarized in equilibrium form. A nominal-statehood equilibrium is a Markov-perfect equilibrium in which fragmentation persists on the equilibrium path for all sufficiently large \(t\), recognition capital remains above the salience threshold \(\bar{R}\), capacity capital remains below the viability threshold \(\bar{K}\), and the recognition-capacity gap
\[
G_t = R_t - \psi K_t
\]
is non-decreasing along the equilibrium path for some \(\psi>0\). This definition captures the core object of interest in the paper: a political order in which statehood remains salient and symbolically durable, yet the institutional conditions required for effective statehood fail to consolidate.

The existence of such an equilibrium follows from the interaction of the mechanisms established above. If, for each elite bloc, the private loss from surrendering informal rents, autonomous political control, and fragmentation-linked transfer advantages exceeds the bloc's internalized share of the output gain from unification, then fragmentation is privately optimal even when it is socially inefficient. If fragmentation also depresses investment and future capacity accumulation, then the polity enters a dynamic trap in which current institutional weakness reduces the future attractiveness of reform. If recognition capital is sustained by diplomatic, symbolic, or normative channels only weakly conditioned on administrative performance, then low capacity need not imply political or diplomatic disappearance. If recognition itself yields private political value to elites, then the statehood project remains attractive even without institutional consolidation. Finally, if initial recognition is already sufficiently high relative to the salience threshold \(\bar{R}\), the polity can remain trapped in a region where statehood claims are persistently visible and politically meaningful despite low institutional capability.

These observations imply the following result.

\begin{proposition}[Existence of nominal-statehood equilibrium]
A nominal-statehood equilibrium exists whenever the private gains to elites from preserving fragmentation exceed their internalized gains from institutional unification, fragmentation lowers the future path of capacity accumulation, recognition capital is maintained by channels only weakly tied to domestic institutional performance, and the political value of recognition is sufficiently high to sustain continued elite investment in the statehood project. If, in addition, initial recognition lies above the salience threshold \(\bar{R}\), then there exists an equilibrium path along which fragmentation persists, recognition remains high, capacity remains below the viability threshold, and the recognition-capacity gap does not close.
\end{proposition}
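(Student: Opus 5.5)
I will prove the existence claim constructively: exhibit a Markov strategy profile, show it is a Markov-perfect equilibrium, and then verify the three path properties (recognition above \(\bar R\), capacity below \(\bar K\), and \(G_t\) non-decreasing).

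\textbf{Candidate profile and dominance of \(F\).} Take the stationary profile in which each bloc plays \(a_{it}=F\) at every state \((K_t,R_t)\). Since unification needs bilateral support, a unilateral deviation to \(U\) does not change \(\theta_t\), current payoffs, or the law of motion. A one-shot deviation is therefore never profitable. This already makes the profile a Markov-perfect equilibrium, because the one-shot deviation principle applies to a discounted game with bounded-above per-period payoffs on the relevant state region. The first hypothesis of the statement, condition \eqref{eq:prop1_condition} holding at every state, strengthens this: \(F\) is strictly preferred in the stage game, as in Proposition 1. I will then argue that this strict preference survives continuation values, so the equilibrium is not merely a weak-dominance artifact. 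For that I write \(\Delta_i = [\pi_i(U)-\pi_i(F)] + \delta_e\,\mathbb{E}[V_i(K^U_{t+1},R^U_{t+1})-V_i(K^F_{t+1},R^F_{t+1})]\). I bound the continuation difference using three inputs. Proposition 2 shows the capacity gain from one period of \(U\) is bounded and shrinks after fragmented histories. Proposition 4 shows the recognition gain \(\chi[m(K^U)-m(K^F)]\) is small. Proposition 5 shows the transfer term \(\beta_i\tau_1\) favors \(F\).

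\textbf{Path properties.} Along the all-\(F\) path, \(I_t\equiv I^F\) is constant, so \(K_t\to K^F_\infty=I^F/\delta\) monotonically. I will use the hypothesis that fragmentation lowers the capacity path to impose \(K_0<\bar K\) and \(K^F_\infty<\bar K\). Monotone convergence then gives \(K_t<\bar K\) for all \(t\). For recognition, \(R_{t+1}=(1-\rho)R_t+\eta X+\omega Z+\chi m(K_t)\) converges to \((\eta X+\omega Z+\chi m(K^F_\infty))/\rho\). Under strong, weakly capacity-conditioned support, this steady state is at least \(\bar R\), and with \(R_0\ge\bar R\) an induction gives \(R_t\ge\bar R\) throughout. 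For the gap, I would choose initial conditions at or below the recognition steady state with \(K_0\) at or above \(K^F_\infty\). Then \(R_t\) is non-decreasing and \(K_t\) non-increasing, so \(G_t=R_t-\psi K_t\) is non-decreasing for every \(\psi>0\). If the paper intends the gap relative to unification instead, Proposition 3 supplies that version.

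\textbf{Main obstacle.} The hard step is making the continuation bound uniform. The capacity gain from unification compounds through \(K^\alpha\), which threatens the claim that static dominance survives dynamics. I would first show that the continuation difference is bounded by \(\delta_e\,s_i(U)\,\bar A\,\big((K+\Delta I)^\alpha-K^\alpha\big)/(1-\delta_e(1-\delta))\). The hypothesis ``private gains exceed internalized gains'' would then be read as dominating this bound. If that reading fails, I fall back on the coordination argument above, since a unilateral deviation to \(U\) cannot change the regime. That argument alone delivers existence, with the dynamic hypotheses needed only for the path properties.
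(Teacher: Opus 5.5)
Your proof is correct in its main line, but it takes a genuinely different route from the paper's. The paper's proof chains Propositions 1--6: stage-game dominance of $F$, a lower capacity path, a wider gap \emph{relative to} unification, and the comparative statics in recognition, transfers and peace credibility. It then simply asserts that a self-reinforcing path exists with $K_t<\bar K$, $R_t\ge\bar R$ and $G_t$ non-decreasing, without constructing a strategy profile or checking those path properties.

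You instead exhibit the stationary all-$F$ profile. You note that, because unification needs both blocs, a bloc's own action is irrelevant to payoffs and transitions when the other plays $F$. So the profile is Markov-perfect for \emph{any} payoffs. The hypotheses on rents, control and transfers then matter only for whether fragmentation is strictly preferred, rather than sustained merely by the veto structure.

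Solving the linear recursions also exposes what the paper leaves implicit. The path properties need parametric restrictions that do not follow from the stated hypotheses:
\begin{itemize}
\item $I^F/\delta<\bar K$. ``Fragmentation lowers the capacity path'' does not give this, so state it as an added assumption.
\item A recognition steady state $R^*=(\eta X+\omega Z+\chi m(K^F_\infty))/\rho\ge\bar R$.
\item $R_0\le R^*$, which is an extra restriction beyond the statement's $R_0\ge\bar R$.
\end{itemize}
Your approach also makes clear that the definition's requirement that $G_t$ be non-decreasing is an absolute property. Proposition 3, which only compares $G^F_t$ with $G^U_t$, does not deliver it. The paper's route gives an economic narrative for why the fragmented equilibrium is the relevant one; yours gives an actual verification of the equilibrium and the path conditions.

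One step needs repair. With $K_0>K^F_\infty$ and $\chi m'>0$, the condition $R_0\le R^*$ does not make $R_t$ non-decreasing. The inflow $\chi m(K_t)$ is front-loaded, so $R_t$ can overshoot $R^*$ and then decay at rate $1-\rho$. If $\rho<\delta$, this decay eventually outweighs the fall in $\psi K_t$ (rate $1-\delta$) for every $\psi>0$, so $G_t$ eventually declines. Taking $K_0=K^F_\infty$, which lies inside your stated range, fixes this: $K_t$ is then constant and $R_t$ rises monotonically to $R^*$ from any $R_0\in[\bar R,R^*]$.

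Separately, the continuation bound in your last paragraph is not right as written. Concavity gives $(x+(1-\delta)^s h)^\alpha-x^\alpha\ge(1-\delta)^s[(x+h)^\alpha-x^\alpha]$, which is the wrong direction for an upper bound. You would need the derivative bound $\alpha x^{\alpha-1}h$ evaluated at the smallest $K$ on the path, plus the $\mu_i\chi$ recognition term. Your existence argument does not depend on this bound, so the error is inessential.
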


\noindent
\textit{Proof.}
The result follows by combining Propositions 1 through 6. Propositions 1 and 2 imply that fragmentation can persist and become dynamically self-reinforcing when elites lose too much from consolidation. Proposition 3 implies that capacity may stagnate while recognition remains comparatively resilient. Proposition 4 implies that recognition itself can reduce the marginal political gain from reform. Proposition 5 implies that transfers tied to fragility or continuity further stabilize fragmentation. Proposition 6 clarifies that low peace credibility weakens the productivity gains from unification, making reform still less attractive. Taken together, these mechanisms generate an equilibrium path on which the polity remains politically recognizable as a state project while failing to accumulate the institutional capital required for effective statehood.

This equilibrium is analytically important because it differs from both ordinary state weakness and outright state absence. In ordinary weak states, the central political order may remain unified despite low income, corruption, or administrative limitations. In cases of state absence, by contrast, the very project of statehood lacks durable recognition or political salience. Nominal statehood occupies an intermediate but consequential zone. Recognition persists, but effective statehood does not. The polity therefore acquires the symbolic attributes of sovereignty more rapidly than the institutional means to make sovereignty function.

The equilibrium is also developmentally consequential. Since fragmentation depresses investment and capacity formation, it lowers current output and weakens future growth potential. Since recognition can remain high even under weak administrative performance, the normal alignment between political legitimacy and institutional effectiveness breaks down. Statehood thereby becomes durable as a claim but fragile as an economic and governing order. This is the sense in which statehood without capacity is not merely an incomplete transition. It is a stable political-economic equilibrium.

\subsection{Discussion}

The theory delivers a stark but analytically coherent result. A polity may remain trapped in a regime where statehood is politically durable but institutionally thin. Such an equilibrium does not require irrational elites, hostile donors, or conceptual confusion. It emerges whenever private incentives favor fragmented control, external resources cushion the cost of non-consolidation, and recognition is only weakly tied to domestic performance.

The equilibrium is developmentally costly. Because fragmentation lowers investment and slows the accumulation of capacity, it depresses output both contemporaneously and dynamically. Because recognition can continue to accumulate under weak institutional performance, political legitimacy and institutional viability no longer move together. The polity thus acquires the symbolic and diplomatic attributes of statehood more quickly than the organizational means required to sustain it.

The next section translates these equilibrium results into a set of testable implications.

\section{Testable Implications}

A useful theory of nominal statehood must do more than rationalize a historically familiar pattern. It must also generate observable implications that distinguish the equilibrium studied here from neighboring concepts such as ordinary state weakness, temporary institutional disruption, or low-income underdevelopment more generally. The model developed above does so in a disciplined way. Its central prediction is that when recognition capital and capacity capital diverge persistently, the resulting polity should display a distinctive cluster of political and economic outcomes: weak investment, fragile fiscal structures, elevated rent extraction, low institutional coherence, and extreme vulnerability to adverse shocks. These outcomes do not arise as independent pathologies. They are joint implications of the same equilibrium logic.

The first implication concerns the relationship between fragmented authority and economic performance. In the model, unified governance raises institutional productivity and lowers expropriation risk, thereby increasing investment and the future path of capacity accumulation. Fragmentation does the opposite. It follows that polities trapped in nominal-statehood equilibria should exhibit lower private investment, shorter planning horizons, and weaker productive transformation than otherwise similar polities with greater institutional consolidation. This prediction is not simply about levels of income. A nominal-statehood polity may receive substantial external support, maintain high political salience, and sustain pockets of entrepreneurial activity. The relevant prediction is instead about the failure of such activity to cumulate into broad-based, self-sustaining development. Productive dynamism should remain shallow, episodic, and unusually sensitive to political disruption.

A second implication concerns corruption and rent extraction. In the model, fragmentation persists in part because elites derive private gains from discretionary control, local patronage, and informal extraction. This implies that nominal-statehood equilibria should be associated with persistent corruption, opaque appointments, administrative duplication, and weak rule-bound allocation of public authority. The relevant comparative prediction is not that all fragmented polities are equally corrupt, nor that all corruption arises from nominal statehood. It is that when statehood claims remain politically valuable despite weak consolidation, elites face attenuated incentives to surrender discretionary control in favor of transparent institutions. Corruption should therefore be more persistent, and anti-corruption reform less credible, than in polities where political survival depends more tightly on administrative performance.

A third implication concerns domestic revenue capacity. One of the core mechanisms in the model is that externally mediated transfers weaken the pressure for internal fiscal-contract formation. This yields a sharp empirical prediction: nominal-statehood equilibria should display lower domestic extraction relative to total public expenditure, higher dependence on external transfers or politically mediated revenues, and weaker alignment between taxation and accountability. The point is not that all transfer-receiving polities become low-capacity equilibria. Rather, where external support stabilizes fragmented authority without requiring institutional unification, the incentive to build a broad domestic tax base should be systematically weaker. Fiscal fragility should then appear not only in the form of deficits or arrears, but also in the form of chronic dependence on revenues whose collection, timing, or allocation is politically contingent.

A fourth implication concerns volatility. The model predicts that nominal-statehood equilibria should be unusually vulnerable to negative shocks because low capacity amplifies the transmission of political and security disruptions into economic collapse. In a more consolidated polity, adverse shocks may reduce output while leaving the core institutional apparatus intact. In a nominal-statehood equilibrium, by contrast, the same shock operates on top of fragmented authority, limited enforcement, weak private confidence, and shallow fiscal buffers. The result should be greater output volatility, larger employment collapses, sharper contractions in investment, and slower post-shock recovery than would be observed in more institutionally coherent settings. This implication is especially important because it differentiates the theory from static accounts of weak governance. The claim is not merely that low-capacity polities perform poorly on average, but that they are structurally less resilient.

A fifth implication concerns the relation between recognition and reform. The model predicts that when recognition capital yields direct political value to elites and is only weakly conditioned on domestic institutional performance, increases in recognition need not generate stronger incentives for administrative consolidation. In some cases, they may do the opposite. The observable implication is that gains in diplomatic visibility, symbolic endorsement, or formal recognition may fail to coincide with measurable improvements in state capacity, and may even reduce the urgency of reform if they raise the political return to the statehood project without altering the underlying incentive structure of elites. This prediction is especially important because it runs against a common presumption in both policy and public discourse: that recognition and institution-building move naturally together. The theory developed here implies that this relationship is conditional rather than automatic.

A sixth implication concerns peace credibility and the return to state-building. In the model, the productivity advantage of institutional unification rises with the credibility of peaceful external relations, stable market access, and long-horizon expectations. Where peace credibility is low, the gains from state-building are correspondingly reduced. This implies that nominal-statehood equilibria should be more persistent in settings where external conflict remains unresolved, where trade and labor-market access are unstable, and where future political order is heavily discounted. The comparative prediction is therefore not simply that conflict is harmful. It is that conflict and fragmentation interact multiplicatively. A low-capacity polity facing low peace credibility should find institutional consolidation especially difficult because the expected surplus from unification is itself depressed.

A final implication concerns the joint movement of recognition capital and capacity capital over time. The theory predicts that nominal-statehood equilibria should display a widening or at least persistent recognition-capacity gap. This can be expressed conceptually as a situation in which indicators of diplomatic or symbolic recognition remain high or increase, while measures of coercive centralization, fiscal extraction, administrative coherence, legal enforcement, or corruption control remain stagnant or deteriorate. The central empirical signature of the theory is therefore not absolute collapse, nor mere poverty, nor institutional weakness taken in isolation. It is divergence. What makes nominal statehood distinctive is that the political project of statehood remains salient and externally validated even as the institutional substrate required for viable statehood fails to keep pace.

These implications also clarify what would count as evidence against the theory. The model would be weakened by systematic evidence showing that increased recognition reliably induces administrative consolidation, that external transfers under fragmentation substantially strengthen rather than weaken internal fiscal capacity, or that fragmented polities with high symbolic statehood claims regularly display resilience and investment patterns comparable to unified states. Likewise, the theory would be less persuasive if the recognition-capacity gap were empirically fleeting rather than persistent. The framework therefore does not immunize itself against contrary evidence. It identifies a specific configuration of incentives and predicts a specific pattern of political and economic outcomes.

The purpose of these predictions is not to claim that every polity with partial recognition or weak institutions is trapped in nominal statehood. Nor is it to suggest that recognition is normatively suspect or that external support is inherently counterproductive. The argument is narrower and more analytical. Under certain conditions, recognition, transfers, and fragmented authority may combine to stabilize an equilibrium in which statehood persists politically while capacity remains institutionally thin. If so, the theory should be observable not only in narrative form, but also in recurring patterns of low investment, persistent rent extraction, fiscal fragility, high volatility, weak reform incentives, and a durable gap between symbolic legitimacy and governing capability.

The next section interprets Palestine as a flagship application of this broader equilibrium logic.

\section{Palestine as a Flagship Application}

The Palestinian case is analytically useful not because it is unique, but because it renders the logic of nominal statehood unusually transparent. It presents a polity in which the project of statehood has remained politically salient, diplomatically visible, and normatively resonant for decades, while the core ingredients of effective statehood - coercive centralization, fiscal integration, administrative coherence, and legal credibility - have remained persistently incomplete. In the language of the model, Palestine is a case in which recognition capital has remained substantial while capacity capital has remained fragmented, thin, and repeatedly vulnerable to reversal.

The aim of this section is narrower than many political readings of the Palestinian question. It is not to deny the importance of external constraints, territorial discontinuity, or recurrent conflict. Nor is it to argue that recognition is normatively unimportant. The narrower claim is that, conditional on those constraints, Palestinian political development has also displayed a persistent internal failure of institutional consolidation. That failure is not well described as a temporary deviation around an otherwise linear state-building process. It is better understood as a low-capacity equilibrium in which fragmented authority, weak accountability, rent extraction, and externally mediated survival have remained mutually reinforcing. The value of the Palestinian case for the present paper is therefore illustrative and theoretical: it shows, with unusual clarity, how a project of statehood may remain highly salient even when the institutional foundations of viable statehood remain underbuilt.

\subsection{Historical background: from imperial administration to fragmented nationalism}

Any serious application of the theory to Palestine must begin with the historical sequence through which Palestinian nationalism emerged without a corresponding process of consolidated state formation. Under late Ottoman rule, Arab society in Palestine was politically and socially organized through a mixture of local notables, municipal institutions, religious authorities, kinship structures, and imperial administrative linkages. These arrangements were not politically empty, but they did not amount to a centralized proto-state. Political authority remained layered, local, and embedded in a broader imperial order rather than concentrated in an autonomous Palestinian fiscal-administrative apparatus.

The British Mandate intensified this asymmetry. It created new political opportunities, sharpened national competition, and accelerated institution-building on both sides of the Arab-Jewish divide, but not symmetrically. The Zionist movement used the Mandate period to construct dense proto-state institutions in labor organization, education, health, finance, and security. Palestinian Arab politics, by contrast, remained active but more fragmented, divided among notable families, regional blocs, religious leaders, and competing strategies of mobilization. Palestinian nationalism deepened, but the institutional means of centralized state formation did not keep pace. In the terms of the model, political identity and representational claim strengthened more rapidly than capacity capital.

The rupture of 1948 was decisive. The war produced mass displacement, territorial rupture, and the externalization of governance. Palestinians became divided across Israel, the West Bank under Jordanian rule, Gaza under Egyptian administration, refugee communities in neighboring states, and a wider diaspora. This mattered not only demographically but institutionally. The Palestinian national question survived and indeed intensified, but the ordinary mechanisms through which states build coercive, fiscal, and administrative capacity were displaced into multiple external jurisdictions. Palestinian politics therefore persisted first as a cause and only much later, and only partially, as a governing apparatus.

This pattern remained visible after 1967. Israeli control over the West Bank and Gaza created a new territorial frame for Palestinian politics, but not a sovereign one. The First Intifada revealed the depth of Palestinian social mobilization and organizational resilience, yet it also reinforced the distinction between collective action and state capacity. Grassroots organization, local committees, and mass political activism can sustain nationalism. They do not by themselves create a unified state. The historical significance of this sequence is central to the paper's argument. Palestine illustrates a path in which recognition, representation, and mobilization emerged under conditions that repeatedly obstructed the gradual accumulation of centralized governing capacity.

\subsection{Representation without consolidation: the PLO and the politics of recognition}

The first mechanism emphasized by the theory is the possibility that political representation develops more rapidly than institutional capacity. Palestinian political history fits this pattern closely. The creation of the Palestine Liberation Organization in 1964 partly solved the problem of national representation. The PLO gave Palestinian nationalism an internationally legible institutional form. Over time, it became the principal recognized representative of the Palestinian national movement, especially after the Arab League's endorsement and its broader diplomatic acceptance in international forums.

Yet the PLO did not solve the problem of territorial state-building. It was a national movement and representative structure before it was a territorial government. It could mobilize, negotiate, bargain, and command loyalty, but it was not a normal fiscal-administrative state grounded in unified taxation, integrated legal authority, and a monopoly over violence within a sovereign territory. Like many externally supported national movements, it relied substantially on foreign political and financial backing, which reduced the disciplining role that a domestic fiscal contract might otherwise have played. Recognition and representational legitimacy therefore deepened even as the internal foundations of state capacity remained weak.

This distinction matters for the theory. A polity can acquire recognition capital through diplomacy, symbolic representation, and international legitimacy without simultaneously constructing the coercive and fiscal infrastructure of effective statehood. The PLO period illustrates precisely that logic. It generated political salience and representational cohesion at the level of the national cause, but did not generate a unified territorial state apparatus with the normal instruments of extraction, enforcement, and administration.

\subsection{Oslo and the production of incomplete statehood}

The Oslo process is central to the argument because it created institutional visibility without fully resolving the incentive problem of state-building. The Palestinian Authority was established as an interim governing structure with real bureaucratic and administrative functions, yet without full sovereignty, territorial continuity, or a credible monopoly over coercion. This generated an institutional order that was neither stateless nor fully state-like. It created ministries, public employment, administrative channels, and political careers, but did not fully solve the underlying problem of coercive and fiscal consolidation.

This distinction is important. A theory of nominal statehood does not predict the absence of institutions. On the contrary, it predicts the emergence of partial institutions that are sufficiently developed to distribute resources, generate political stakes, and sustain claims to statehood, yet insufficiently integrated to become self-enforcing. That is what made the post-Oslo setting structurally unstable. Public authority acquired visible institutional form, but the foundations of a unified state remained incomplete. The polity moved away from pure representation, but not all the way to consolidated governance.

From the standpoint of the model, Oslo process can be understood as an increase in the salience and institutional embodiment of the statehood project without a commensurate increase in capacity capital. Administrative structures were created, but full coercive integration did not follow. Public office gained value, but the expected return to full consolidation remained uncertain. This is precisely the kind of environment in which nominal statehood becomes politically durable: the symbolic and bureaucratic attributes of statehood deepen, yet the incentive structure required for genuine integration remains unresolved.

The Second Intifada further hardened this incompleteness. The breakdown of negotiations, intensified violence, Israeli military re-entry into Palestinian urban areas, and mounting internal distrust weakened the already fragile possibility of orderly institutional consolidation. The result was not simply the interruption of a linear state-building process. It was the embedding of Palestinian governance in a political environment where coercive integration became even harder, where security institutions became more politicized, and where the expected returns to peaceful long-run institution-building fell sharply.

\subsection{The rise of Hamas and the emergence of dual authority}

The model places particular weight on the private value of fragmented authority, captured by the term \(\gamma_i(F)\). The Palestinian case fits this logic closely, and the rise of Hamas is central to understanding why. Hamas emerged during the First Intifada as an Islamist rival to the secular nationalist dominance of Fatah and the PLO. It offered not only ideological competition, but also an alternative organizational base rooted in religious networks, social services, grassroots mobilization, and armed resistance. Over time, Hamas became not merely an opposition force but a parallel center of legitimacy, organization, and coercive capacity.

This altered the internal political economy of Palestinian governance in a fundamental way. The question was no longer how a single national movement would translate representation into state-building, but how rival projects of authority would coexist, compete, and attempt to dominate one another. Hamas rejected the Oslo framework more fundamentally than Fatah, and its political ascent exposed a deep cleavage in the Palestinian national movement over strategy, legitimacy, and the relationship between governance and armed struggle. Once that cleavage became institutionalized, the costs of unification rose sharply.

The 2006 legislative elections were the critical turning point. Hamas's electoral victory gave it democratic legitimacy but did not resolve the problem of dual authority. Instead, it created a setting in which rival elite blocs possessed overlapping but incompatible claims to govern. The ensuing crisis culminated in the violent 2007 split, after which Hamas consolidated control in Gaza while Fatah and the Palestinian Authority retained control in the West Bank. From the standpoint of the theory, this was the decisive moment at which fragmented authority ceased to be a transitional defect and instead became an equilibrium structure. Two centers of political command, two patronage systems, two coercive orders, and two legitimacy narratives came to coexist within the broader Palestinian polity.

This duality is analytically decisive because it transforms incomplete governance into a self-protective equilibrium. When rival elite blocs command different territorial, political, and coercive resources, unification requires one or both sides to surrender local control without certainty about their future place in the integrated order. Under such conditions, fragmentation is inefficient but strategically rational. Each side values not only the material rents attached to its domain, but also the autonomous command embedded in its own hierarchy of appointments, security structures, and territorial leverage. In the notation of the model, \(\gamma_i(F)-\gamma_i(U)\) becomes large. Once that difference is sufficiently large, even substantial aggregate gains from unification may fail to induce consolidation.

The Hamas-Fatah rivalry also illustrates an additional feature of the model. Ideological divergence can reinforce institutional fragmentation by lowering the credibility of future bargains. Rival groups may disagree not only over office allocation, but over the strategic orientation of the polity itself, including the meaning of resistance, coexistence, external alignment, and the acceptable terms of political settlement. Where these disagreements are deep, power-sharing is harder to sustain and institutional integration becomes more fragile. Fragmentation is then reinforced not only by rents and coercive control, but by distrust over the future political character of the unified state.

\subsection{Patronage, corruption, and the private returns to discretion}

A second central mechanism in the model lies in the term \(r_i(F)\), the private rents available under fragmented governance. The Palestinian case fits this mechanism closely. In environments of limited institutional coherence, public authority becomes valuable not only because it carries legal powers, but because it grants discretionary access to jobs, salaries, permits, contracts, licenses, procurement channels, movement privileges, and brokerage opportunities. These forms of discretion create strong incentives to preserve opacity. Anti-corruption discourse may remain politically salient, but the coalition required to implement genuine transparency is weakened by the fact that discretion itself is a source of political and material value.

This is the relevant political-economy interpretation of recurrent concerns about weak accountability, executive concentration, opaque appointments, and corruption risk. These patterns should not be read merely as individual moral failure. In the logic of the present paper, they are structural features of a low-capacity equilibrium. Where legal and bureaucratic enforcement remain fragmented, discretionary control is not a side effect of the political order. It is part of what makes the order privately sustainable. Formal institutional integration would threaten some of these rents by imposing clearer lines of authority, tighter scrutiny, and less tolerance for parallel channels of allocation.

This mechanism also clarifies why reform can be widely endorsed yet repeatedly diluted. Calls for administrative cleanup, judicial strengthening, or anti-corruption reform are often inexpensive relative to the interests they would displace. In the terms of the model, rhetoric does not alter equilibrium unless it changes payoffs. Where the private return to discretion remains high, reform may be publicly celebrated but strategically neutralized. Palestine thus illustrates a broader implication of nominal statehood: corruption is not merely present. It is sticky because it is linked to the equilibrium value of fragmentation.

\subsection{External finance and the weakness of the fiscal contract}

A third mechanism emphasized by the theory concerns the relation between external finance and domestic capacity formation. Historically successful states are disciplined by extraction. Rulers who depend on broad domestic taxation must bargain with citizens, build collection capacity, and supply enough governance to sustain compliance. The Palestinian case has long diverged from this pattern. Palestinian governing structures have depended heavily on donor support, politically mediated revenues, and transfers whose timing and scale are shaped by forces external to a fully sovereign fiscal system.

The point is not that domestic revenue collection is absent or that external finance is normatively inappropriate. The point is narrower and more institutional. When a governing order survives through a combination of aid, politically mediated transfers, and repeated fiscal rescue, the incentive to build a deep internal fiscal contract is weakened. Budget survival depends less on broad domestic extraction than on bargaining over transfers, arrears, deductions, and crisis finance. In the language of the model, \(\beta_i T_t\) becomes politically important, and where support is stronger under crisis and fragmentation than under costly reform, the effective value of \(\tau_1\) is positive.

The Hamas-Fatah split intensified this logic. Once dual authority emerged, external finance no longer interacted with a single hierarchical governing structure, but with rival administrations embedded in different political and territorial environments. This increased the scope for duplication, politicized allocation, and separate patronage circuits. It also reduced the likelihood that external support would discipline elites through unified conditionality, since the underlying polity was no longer governed through one consolidated administrative chain. Fragmentation therefore altered not only domestic politics but also the way outside support fed into the equilibrium.

The consequence is a fragile fiscal equilibrium. A polity may possess ministries, payroll obligations, and public-sector commitments, yet remain unable to anchor them in a sovereign tax base under unified political control. This weakens not only fiscal sustainability but also accountability. Citizens and firms face a public sector that is consequential in everyday life yet only partially sustained by a normal internal bargain between extraction and representation. The Palestinian case therefore illustrates one of the model's sharpest predictions: external support can prevent immediate collapse while simultaneously reducing the pressure for domestic capacity-building.

\subsection{Recognition capital without proportional capacity accumulation}

The Palestinian case is perhaps most revealing when viewed through the paper's distinction between recognition capital and capacity capital. Palestinian statehood has remained one of the most visible and symbolically resonant statehood projects in the international system. Recognition, endorsement, and normative salience have remained high even during periods of severe institutional fragmentation and economic collapse. This is precisely the type of divergence the theory seeks to explain.

The point is not that recognition is unwarranted. It is that recognition and capacity are generated by different mechanisms. Palestinian recognition capital has been reproduced through diplomacy, international law, humanitarian concern, symbolic politics, and the enduring force of self-determination claims. Palestinian capacity capital, by contrast, has depended on the far more difficult tasks of coercive unification, fiscal consolidation, administrative coherence, and legal institutionalization. These processes have not advanced at the same pace. As a result, the Palestinian case fits the model's prediction that \(R_t\) may remain high, or even rise, while \(K_t\) remains below the threshold required for viable statehood.

This divergence also helps explain why the Palestinian question remains politically urgent even when institutional feasibility remains deeply problematic. The project of statehood does not disappear because institutions are weak. Rather, political salience and institutional weakness coexist. Palestine therefore illustrates the broader claim of the paper: statehood may remain durable as a recognized aspiration even when the internal architecture required to implement it remains fragile.

\subsection{Low peace credibility and the depressed return to consolidation}

The final mechanism linking the Palestinian case to the theory concerns peace credibility. In the model, institutional consolidation becomes more attractive when elites can expect future gains from trade, mobility, investment, and policy continuity under a stable external environment. Where the strategic horizon is dominated by recurrent conflict, uncertain borders, unstable movement, and unresolved questions of mutual recognition, the expected productivity gains from state-building are reduced. In the notation of the model, low \(p_t\) lowers the value of unification because the future surplus to be divided is itself more uncertain.

This mechanism matters greatly in the Palestinian setting. The expected returns to long-horizon institution-building have repeatedly been weakened by recurrent conflict, interrupted negotiations, territorial discontinuity, security breakdowns, and the absence of a durable peace framework. The internal split between Hamas and Fatah deepened this problem because it linked the question of state-building to unresolved disagreement over the strategic purpose of Palestinian politics itself. Under such conditions, even well-designed institutional reforms may appear politically risky relative to the uncertain payoff they promise. The implication is not that peace alone would solve the problem of state capacity, nor that internal fragmentation is reducible to external conflict. It is that fragmentation and low peace credibility reinforce one another. Weak peace credibility lowers the expected gains from consolidation, while fragmentation weakens the internal capacity to exploit any future opening for peace-based development.

\subsection{A quantitative illustration of the fragmentation equilibrium}

The qualitative fit between the Palestinian case and the model is strong, but the framework can be disciplined further with a simple quantitative illustration. The aim is not structural estimation or causal identification. Data limitations, institutional opacity, and the difficulty of mapping certain political variables into clean observables make such claims inappropriate. The objective is narrower and more defensible: to show that, under plausible parameter values, the model can jointly generate three salient features of the Palestinian equilibrium. First, aggregate output is lower under fragmentation than under unification. Second, elites still prefer fragmentation because they internalize only a limited share of the gains from higher output while directly enjoying the rents, control, and transfer-linked advantages of non-consolidation. Third, the equilibrium is sensitive to peace credibility and the structure of external support.

To make this point transparently, consider the parsimonious static version of the model. Output under regime \(\theta \in \{U,F\}\) is
\[
Y_\theta = A_\theta K_\theta^\beta L^{1-\beta},
\]
where \(A_\theta\) captures institutional productivity, \(K_\theta\) is effective investment or capital under regime \(\theta\), \(L\) is labor, and \(\beta \in (0,1)\). Let investment depend negatively on expropriation risk:
\[
K_\theta = K_0 - \lambda q_\theta,
\]
with \(q_F > q_U\). Fragmentation lowers output both because institutional productivity is lower and because expropriation risk depresses effective investment.

Each elite bloc \(i\in\{A,B\}\) receives payoff
\[
\Pi_i(\theta)=s_i(\theta)Y_\theta+r_i(\theta)+\gamma_i(\theta)+\beta_iT_\theta,
\]
where \(s_i(\theta)\) is the bloc's appropriable share of formal output, \(r_i(\theta)\) is informal rent extraction, \(\gamma_i(\theta)\) is the value of autonomous control, and \(\beta_iT_\theta\) is the bloc's captured share of external transfers. Reform occurs only if both blocs prefer \(U\) to \(F\). Bloc \(i\) supports unification if
\[
s_i(U)Y_U-s_i(F)Y_F \geq [r_i(F)-r_i(U)] + [\gamma_i(F)-\gamma_i(U)] + \beta_i(T_F-T_U).
\]
The inequality shows immediately why fragmentation may survive: elites need not deny that unification raises output. They need only internalize too little of that gain relative to what they lose from reduced discretion, weaker patronage, and lower transfer capture.

For the benchmark calibration, normalize labor to one and set
\[
L=1,\qquad K_0=100,\qquad \beta=0.35,\qquad \lambda=40.
\]
Set expropriation risk at
\[
q_U=0.10,\qquad q_F=0.35,
\]
so that
\[
K_U=100-40(0.10)=96,\qquad K_F=100-40(0.35)=86.
\]
Set institutional productivity at
\[
A_U=1.00,\qquad A_F=0.72.
\]
Then output under each regime is
\[
Y_U = 1.00 \cdot 96^{0.35} \approx 4.95,
\qquad
Y_F = 0.72 \cdot 86^{0.35} \approx 3.42.
\]
Unification therefore raises aggregate output by roughly \(45\) percent relative to fragmentation. The social inefficiency of fragmentation is substantial.

Now calibrate elite appropriable output shares conservatively:
\[
s_A(U)=s_B(U)=0.18,\qquad s_A(F)=s_B(F)=0.16.
\]
Let private rents and control values satisfy
\[
r_i(U)=2,\qquad r_i(F)=7,
\]
\[
\gamma_i(U)=1,\qquad \gamma_i(F)=6.
\]
Finally, let transfers be
\[
T_U=4,\qquad T_F=10,\qquad \beta_i=0.20.
\]
Under these values, elite payoff under unification is
\[
\Pi_i(U)=0.18(4.95)+2+1+0.20(4)\approx 4.69,
\]
while under fragmentation it is
\[
\Pi_i(F)=0.16(3.42)+7+6+0.20(10)\approx 15.55.
\]
Thus \(\Pi_i(F)>\Pi_i(U)\) by a wide margin even though \(Y_U>Y_F\). The reason is straightforward. The bloc internalizes only a small share of the aggregate output gain from unification, while directly losing the rents, autonomous control, and transfer-capture advantages attached to fragmentation. In this benchmark calibration, the fragmented equilibrium is therefore privately optimal and socially inefficient.

The value of this exercise lies less in the exact numbers than in the threshold structure it reveals. Let
\[
\Delta_i \equiv s_i(U)Y_U-s_i(F)Y_F-[r_i(F)-r_i(U)]-[\gamma_i(F)-\gamma_i(U)]-\beta_i(T_F-T_U).
\]
Unification is privately attractive to bloc \(i\) if and only if \(\Delta_i\geq 0\). This expression yields immediate comparative statics. The critical transfer differential above which fragmentation dominates is
\[
(T_F-T_U)^*=
\frac{s_i(U)Y_U-s_i(F)Y_F-[r_i(F)-r_i(U)]-[\gamma_i(F)-\gamma_i(U)]}{\beta_i}.
\]
Similarly, the critical control premium is
\[
(\gamma_i(F)-\gamma_i(U))^*=
s_i(U)Y_U-s_i(F)Y_F-[r_i(F)-r_i(U)]-\beta_i(T_F-T_U),
\]
and the critical rent gap is
\[
(r_i(F)-r_i(U))^*=
s_i(U)Y_U-s_i(F)Y_F-[\gamma_i(F)-\gamma_i(U)]-\beta_i(T_F-T_U).
\]
These thresholds make the equilibrium logic highly transparent. If transfer capture under fragmentation is sufficiently high, or if the value of autonomous control is sufficiently large, or if discretionary rents are sufficiently valuable, elites rationally block unification even when unification would raise national income considerably.

The peace-credibility channel can be incorporated just as simply. Let institutional productivity under unification depend on a peace parameter \(p \in [0,1]\), so that
\[
A_U(p)=\underline{A}_U+\kappa p,\qquad \kappa>0.
\]
Then
\[
Y_U(p)=A_U(p)K_U^\beta.
\]
The critical level of peace credibility required for bloc \(i\) to support unification solves
\[
s_i(U)Y_U(p)-s_i(F)Y_F=[r_i(F)-r_i(U)]+[\gamma_i(F)-\gamma_i(U)]+\beta_i(T_F-T_U).
\]
Solving for \(p\) yields a threshold \(p_i^*\) such that unification becomes privately attractive only when peace credibility exceeds that level. The implication is immediate. Even if institutional integration would raise aggregate output, elites may continue to prefer fragmentation when the expected long-run returns to peace-based development are too uncertain. This is exactly the interaction the Palestinian case appears to embody.

The calibration is deliberately modest in ambition. It does not claim precise measurement of elite payoffs. It does not identify the causal effect of any individual institutional rupture. Its purpose is instead to discipline the theory by showing that the model reproduces a politically plausible configuration under transparent assumptions: a large aggregate gain from unification, a much smaller private gain to each elite, and sufficiently strong rents, control values, and transfer-linked benefits to sustain fragmentation. As a quantitative illustration of nominal statehood, that is already informative. It shows that the core logic of the theory is not merely rhetorical. Under plausible parameterizations, fragmentation is exactly what rational elites would choose.

\subsection{Interpretation}

Taken together, these patterns make Palestine a powerful application of the general theory of nominal statehood. The case displays a prolonged divergence between recognition capital and capacity capital; a political structure in which rival elites retain strong incentives to preserve fragmented authority; recurrent opportunities for patronage and discretionary rent extraction; dependence on external transfers and politically mediated revenues; and a strategic environment in which the expected returns to deep consolidation remain unstable. None of these features alone is sufficient to explain persistent low-capacity governance. Together they generate a coherent equilibrium.

The value of the Palestinian case for this paper is therefore not that it mechanically proves the theory. It is that it makes the theory highly legible. Palestine shows how a project of statehood can remain symbolically powerful, diplomatically visible, and historically resonant while the institutional foundations of effective statehood remain fragile. In that sense, it is not merely a case study. It is a revealing application of the broader phenomenon this paper seeks to explain.

\section{Broader Implications}

The theory developed in this paper has implications that extend well beyond the Palestinian case. Its central contribution is to show that statehood and state capacity, though often treated as complements, are analytically distinct political assets that may diverge persistently. This distinction matters for how political economists understand state formation, for how development economists interpret institutional traps, and for how international politics evaluates projects of sovereignty. The usual presumption is that recognition, representation, and state-building move together, even if imperfectly and with delay. The argument here is that such complementarity is conditional rather than automatic. Under certain incentive structures, recognition can persist or deepen while capacity remains weak. When that occurs, the resulting polity is not simply incomplete in a transitional sense. It is locked into a distinct equilibrium of nominal statehood.

This has an immediate implication for the literature on state capacity. Much of that literature explains long-run development by tracing the emergence of fiscal extraction, bureaucratic investment, and violence control. States become effective when rulers centralize coercion, bargain over revenue, build administrative reach, and transform episodic authority into durable institutions. That account remains correct. But it leaves open a question that becomes increasingly important in the contemporary international system: what happens when a political project acquires many of the symbolic and diplomatic attributes of statehood before the institutional incentives required for capacity formation have been solved? The answer proposed here is that early or durable recognition need not accelerate consolidation. In some cases, it may coexist with its prolonged absence. The path to effective statehood is therefore not merely a matter of sequencing capacity after recognition. The two may follow separate logics altogether.

A related implication concerns how political economy understands fragility. Fragile or weak states are often analyzed as deficient versions of consolidated states, that is, as political orders that lie somewhere below an ideal capacity frontier. The concept of nominal statehood suggests a more specific pathology. The relevant problem is not only that institutions are weak, but that the polity remains highly salient as a statehood project despite the persistence of weak institutions. This combination changes incentives. External actors continue to engage the polity as a bearer of sovereignty claims. Domestic elites continue to derive political value from participation in the statehood project. Yet the institutional architecture required to make that project self-enforcing remains incomplete. Fragility is therefore not merely low capacity. It is low capacity under conditions of politically durable recognition.

This distinction also reshapes how we think about development traps. Standard accounts of institutional underdevelopment emphasize low investment, insecure property rights, corruption, and weak enforcement as mutually reinforcing obstacles to growth. The framework proposed here adds a specifically political mechanism to that logic. A polity may fail to escape low development not only because institutions are weak in the abstract, but because key elites retain private incentives to preserve fragmentation. Where autonomy, patronage, and transfer capture remain valuable, the social returns to institutional integration do not translate into political demand for reform. The resulting trap is therefore not only economic. It is a political equilibrium in which under-consolidation itself is privately sustainable.

This insight has implications for the study of external assistance and international intervention. The model does not imply that aid is counterproductive in general, nor that external recognition is normatively misguided. Such claims would be too broad. The narrower implication is that the effect of external support depends critically on whether it raises the returns to integration or instead cushions the costs of non-consolidation. If external transfers primarily stabilize continuity under fragmentation, then they may prevent institutional collapse while also weakening the incentive to build fiscal capacity, unify coercive authority, and harden administrative discipline. Similarly, if diplomatic recognition remains only weakly conditioned on performance, then symbolic advances in sovereignty may outrun institutional transformation. In this sense, the paper identifies a subtle but consequential tension in state-building policy. External actors may sustain the political life of a statehood project while unintentionally preserving the equilibrium that prevents it from becoming viable.

The theory also has implications for the relationship between conflict and state-building. In classical accounts of European state formation, war and external rivalry often accelerated fiscal and administrative centralization. In many contemporary cases, however, persistent conflict appears to have the opposite effect. Rather than consolidating authority, it multiplies veto players, raises the value of autonomous coercive control, shortens time horizons, and makes future bargains less credible. The difference lies partly in the structure of sovereignty and partly in the incentives of elites. In polities with fragmented legitimacy, partial territorial control, and strong dependence on external support, conflict does not necessarily build the state. It may instead reinforce the conditions of nominal statehood by lowering the expected gains from integration while increasing the private value of local command. The interaction between conflict and fragmentation is therefore central. The problem is not simply that conflict destroys institutions. It is that under some conditions it alters the incentive structure in ways that block their emergence.

A further implication concerns legitimacy itself. In many discussions of state formation, legitimacy is implicitly treated as a unifying concept, as though representational legitimacy, diplomatic legitimacy, and performance-based legitimacy were different expressions of the same underlying political asset. The framework here suggests a more disaggregated view. A polity may possess strong legitimacy as a national cause, a legal claim, or a symbolically validated political project while lacking legitimacy rooted in effective administration, rule-bound governance, and public performance. These forms of legitimacy may overlap, but they are not identical. When they diverge, the symbolic and representational dimensions of legitimacy may sustain the statehood project even as the administrative foundations of legitimacy remain weak. This is another way of describing the distinction between recognition capital and capacity capital. The former can be politically abundant even when the latter remains institutionally scarce.

The theory also bears on the comparative study of incomplete sovereignty. Much recent work has examined de facto states, contested states, quasi-states, and other political entities that occupy ambiguous positions in the international order. The contribution of the present paper is slightly different. It does not classify polities according to their formal legal status alone. Instead, it emphasizes the gap between the salience of the statehood project and the depth of institutional consolidation. This allows the framework to speak not only to partially recognized or contested entities, but more broadly to any polity in which sovereignty claims and governing capacity become decoupled. The relevant comparative question is therefore not merely who is recognized and who is not. It is which political orders exhibit a persistent divergence between symbolic statehood and institutional statehood, and under what conditions that divergence becomes self-reinforcing.

One implication of this perspective is methodological. The political economy of statehood should not infer institutional viability from diplomatic salience, nor diplomatic weakness from institutional weakness. These are different dimensions of political development and they may evolve on separate margins. Analytically, this suggests the value of disaggregating measures of statehood into their recognition and capacity components. Empirically, it suggests that future work should pay greater attention to cases in which diplomatic visibility rises while fiscal, administrative, or legal capacity stagnates. The theory developed here does not depend on any single case. It proposes a more general way of organizing such cases conceptually.

The paper's argument also carries a restrained but important policy implication. If the objective of state-building is to align recognition with viability, then policy should focus less on symbolic milestones taken in isolation and more on the complementarity between recognition and institutional integration. External support is most likely to strengthen viable statehood when it raises the private return to unification, lowers the payoff to fragmented control, disciplines rent extraction, and ties political recognition more tightly to administrative performance. Conversely, support that preserves political salience without altering the incentive structure of elites may deepen the equilibrium of nominal statehood. The policy lesson is therefore not that recognition should be withheld until perfection is achieved. It is that recognition, transfers, and institution-building should be understood as strategically interacting rather than automatically reinforcing.

More broadly, the theory invites a different way of thinking about sovereignty in development economics. Sovereignty is often treated as a binary condition: present or absent, recognized or denied. The framework here suggests that sovereignty is better understood as layered. There is sovereignty as claim, sovereignty as recognition, sovereignty as administrative command, and sovereignty as self-enforcing institutional order. These dimensions may move together, but they need not. The political economy of development is therefore incomplete if it collapses them into one. The key analytical insight of this paper is that a polity can become rich in the symbols of sovereignty while remaining poor in its institutions.

Seen in this light, nominal statehood is not a passing anomaly. It is a theoretically intelligible equilibrium. It emerges when the statehood project remains politically valuable, when the costs of fragmentation are partly socialized, and when the benefits of integration are only weakly internalized by elites. The result is a political order in which recognition and representation endure, yet the transition to effective statehood remains blocked. That equilibrium is costly not only because it weakens governance, but because it can persist for long periods without resolving itself. The deeper lesson is therefore general: the path from sovereignty to viability is not automatic. States are not made effective by recognition alone. They become effective when the incentives to build capacity are stronger than the incentives to preserve fragmentation.

\section{Conclusion}

This paper has developed a political-economy theory of \emph{statehood without capacity}. Its central argument is that statehood and state capacity, though often treated as complements, can diverge persistently. Under specific conditions, a polity may become trapped in an equilibrium of \emph{nominal statehood}: a political order in which claims to sovereignty, diplomatic recognition, and symbolic legitimacy remain durable while the coercive, fiscal, administrative, and legal capacities required for effective statehood remain weak.

The mechanism is straightforward but consequential. Institutional consolidation is socially productive, yet it is often privately unattractive. Rival elites may lose too much from unification in the form of autonomous control, discretionary rents, and transfer-linked advantages, even when aggregate output would rise substantially under a more integrated state. If external actors stabilize the polity through recognition and transfers that are only weakly conditioned on genuine institutional consolidation, the fragmented equilibrium may persist. Recognition capital then accumulates more rapidly than capacity capital. The result is a widening gap between symbolic statehood and effective statehood.

The paper has shown that this divergence is not merely conceptual. It has clear economic implications. A polity trapped in nominal statehood should display low cumulative investment, weak fiscal extraction, persistent rent-seeking, shallow productive transformation, and unusually high vulnerability to conflict and political shocks. These outcomes are not separate pathologies. They are joint implications of the same equilibrium logic. The paper has also shown that the equilibrium is sensitive to peace credibility, the structure of external finance, and the extent to which elites internalize the gains from institutional unification.

Palestine serves as a flagship application of this broader framework. It illustrates with unusual clarity how representation, recognition, and political salience may coexist with persistent institutional fragmentation, weak accountability, externally mediated survival, and recurrent failure of coercive and fiscal consolidation. The Palestinian case is therefore important not only in its own right, but because it makes legible a more general phenomenon in the political economy of state formation.

The broader implication is that sovereignty should not be treated as a unitary object. There is sovereignty as claim, sovereignty as recognition, and sovereignty as institutional command. These dimensions may reinforce one another, but they need not. Where they diverge, the political project of statehood may remain powerful even as the institutional foundations of statehood remain fragile. The path from recognition to viability is therefore not automatic. Statehood is not only declared, endorsed, or symbolically affirmed. It is built.

Future work can extend the framework in several directions. One path is comparative: identifying other cases in which recognition capital and capacity capital have diverged over long periods. Another is quantitative: developing richer calibrations or panel-based tests of the theory's comparative statics. A third is institutional: studying which external interventions align recognition more tightly with administrative performance rather than cushioning non-consolidation. The common theme is clear. To understand why some projects of sovereignty remain politically durable but institutionally thin, one must analyze not only the claims to statehood, but the incentive structure governing its construction.

In that sense, the central message of the paper is simple. A state is not made viable by recognition alone. It becomes viable when the political returns to building common institutions exceed the private returns to preserving fragmentation. Where that condition fails, statehood may remain visible, resonant, and internationally salient while never fully becoming a state in the substantive sense. That is the logic of statehood without capacity.

\bibliographystyle{apalike}
\bibliography{references}

\appendix

\section*{Appendix}
\addcontentsline{toc}{section}{Appendix}

\section{Formal Proofs and Quantitative Illustration}

This appendix provides the formal arguments underlying the propositions in the main text and develops the quantitative illustration of the fragmentation equilibrium in greater detail. The objective is twofold. First, it clarifies the sufficient conditions under which a regime of nominal statehood arises as an equilibrium outcome. Second, it shows more explicitly how the model's comparative statics map into the calibrated application.

\subsection{Preliminaries and assumptions}

The polity contains two elite blocs \(i \in \{1,2\}\) and a representative private sector. Time is discrete, \(t=0,1,2,\dots\). In each period, elites choose between institutional unification \(U\) and fragmentation \(F\). Unified governance is implemented only if both blocs choose \(U\); otherwise fragmentation persists.

Aggregate output under regime \(\theta_t \in \{U,F\}\) is
\[
Y_t = A(\theta_t,p_t)K_t^\alpha,
\qquad \alpha \in (0,1),
\]
where \(K_t \ge 0\) is capacity capital and \(p_t \in [0,1]\) is peace credibility. Investment satisfies
\[
I_t = I_0 + \phi A(\theta_t,p_t) - \lambda q(\theta_t),
\]
with \(I_0>0\), \(\phi>0\), \(\lambda>0\), and \(q(F)>q(U)\). Capacity evolves according to
\[
K_{t+1} = (1-\delta)K_t + I_t,
\qquad \delta \in (0,1).
\]

Recognition capital evolves as
\[
R_{t+1}=(1-\rho)R_t+\eta X_t+\omega Z_t+\chi m(K_t),
\]
where \(\rho \in (0,1)\), \(\eta,\omega,\chi>0\), and \(m'(K_t)\ge 0\). The key substantive assumption is that the capacity-conditioning term is weak relative to the diplomatic and symbolic channels that sustain recognition.

Elite \(i\)'s period payoff is
\[
\pi_i(\theta_t,K_t,R_t)
=
s_i(\theta_t)Y_t+r_i(\theta_t)+\gamma_i(\theta_t)+\mu_iR_t+\beta_iT_t,
\]
where \(s_i(\theta_t)\) is the formal output share, \(r_i(\theta_t)\) is informal rent extraction, \(\gamma_i(\theta_t)\) is the utility from autonomous control, \(\mu_i>0\) captures the private political value of recognition, and \(\beta_i \in [0,1]\) is the share of external transfers captured by elite \(i\). Transfers satisfy
\[
T_t = \tau_0 + \tau_1 \mathbf{1}\{\theta_t=F\} + \tau_2 H_t,
\]
with \(\tau_0,\tau_1,\tau_2 \ge 0\).

We impose the following assumptions throughout the appendix.

\medskip

\noindent
\textbf{Assumption A1.} Institutional productivity is higher under unification:
\[
A(U,p_t) > A(F,p_t)
\qquad \text{for all } p_t \in [0,1].
\]

\noindent
\textbf{Assumption A2.} Expropriation risk is higher under fragmentation:
\[
q(F) > q(U).
\]

\noindent
\textbf{Assumption A3.} Informal rents and autonomous control are higher under fragmentation:
\[
r_i(F) > r_i(U),
\qquad
\gamma_i(F) > \gamma_i(U),
\qquad i\in\{1,2\}.
\]

\noindent
\textbf{Assumption A4.} Recognition is only weakly conditioned on capacity:
\[
\chi m'(K_t)
\]
is sufficiently small relative to the diplomatic and symbolic components of recognition dynamics.

\noindent
\textbf{Assumption A5.} The productivity gain from unification is increasing in peace credibility:
\[
\frac{\partial A(U,p_t)}{\partial p_t}
>
\frac{\partial A(F,p_t)}{\partial p_t}.
\]

\medskip

These assumptions are not minimal in the strict mathematical sense, but they are natural sufficient conditions for the political-economic environment studied in the paper.

\subsection{Auxiliary lemmas}

We begin with three lemmas that are used repeatedly in the proposition proofs.

\begin{lemma}[Capacity monotonicity]
Under Assumptions A1 and A2,
\[
K_{t+1}^U > K_{t+1}^F
\]
for given \(K_t\) and \(p_t\).
\end{lemma}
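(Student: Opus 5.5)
The plan is to compare the two one-step-ahead capacity stocks directly from the law of motion, holding the inherited state fixed. Fix \(K_t\ge 0\) and \(p_t\in[0,1]\). By the capacity law of motion, the regime \(\theta_t\) affects \(K_{t+1}\) only through current investment \(I_t\). The depreciated stock \((1-\delta)K_t\) is common to both regimes. Writing \(K_{t+1}^\theta=(1-\delta)K_t+I_t^\theta\) for \(\theta\in\{U,F\}\), the difference reduces to
\[
K_{t+1}^U-K_{t+1}^F = I_t^U-I_t^F .
\]
The lemma therefore follows once we show \(I_t^U>I_t^F\).

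Substituting the investment rule \(I_t=I_0+\phi A(\theta_t,p_t)-\lambda q(\theta_t)\), the constant \(I_0\) cancels and
\[
I_t^U-I_t^F=\phi\big[A(U,p_t)-A(F,p_t)\big]+\lambda\big[q(F)-q(U)\big].
\]
Assumption A1 makes the first bracket strictly positive at the given \(p_t\), and \(\phi>0\). Assumption A2 makes the second bracket strictly positive, and \(\lambda>0\). The sum of two strictly positive terms is strictly positive, so \(K_{t+1}^U>K_{t+1}^F\). Either assumption alone, with the other holding weakly, would already suffice.

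There is no real obstacle here. The only point needing care is the interpretation of ``for given \(K_t\) and \(p_t\)'': both regimes must be evaluated at the same inherited capacity and the same peace credibility. If the two paths had different histories, so that \(K_t\) differed, the comparison would require an induction. That version follows by iterating this one-step result together with the monotonicity of \((1-\delta)K_t\) in \(K_t\), since \(\delta<1\). A minor side remark is that nonnegativity of \(K_{t+1}^F\) is not needed for the strict inequality; it only matters for interpreting the stock, and it is not part of the claim.
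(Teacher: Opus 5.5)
Your proof is correct and follows the paper's argument essentially step for step: subtract the two investment rules so that \(I_0\) cancels, sign \(I_t^U-I_t^F=\phi[A(U,p_t)-A(F,p_t)]+\lambda[q(F)-q(U)]>0\) using A1, A2 and \(\phi,\lambda>0\), and carry this to \(K_{t+1}\) through the common term \((1-\delta)K_t\). Your extra remark extending the comparison to whole paths by induction is also sound, provided both paths start from the same initial stock and face the same sequence \(p_t\).
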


\noindent
\textit{Proof.}
Under regime \(U\),
\[
I_t^U = I_0 + \phi A(U,p_t) - \lambda q(U),
\]
while under regime \(F\),
\[
I_t^F = I_0 + \phi A(F,p_t) - \lambda q(F).
\]
Subtracting,
\[
I_t^U - I_t^F
=
\phi\big[A(U,p_t)-A(F,p_t)\big]
+\lambda\big[q(F)-q(U)\big].
\]
By Assumptions A1 and A2, both bracketed terms are strictly positive, so \(I_t^U>I_t^F\). Since
\[
K_{t+1}=(1-\delta)K_t+I_t
\]
with the same inherited \(K_t\), it follows immediately that
\[
K_{t+1}^U > K_{t+1}^F.
\]

\begin{lemma}[Monotonicity of the unification payoff gap]
Define
\[
\Delta_i(K_t,R_t)
\equiv
V_i^U(K_t,R_t)-V_i^F(K_t,R_t).
\]
Then, holding other arguments fixed, \(\Delta_i\) is weakly decreasing in \(r_i(F)-r_i(U)\), weakly decreasing in \(\gamma_i(F)-\gamma_i(U)\), weakly decreasing in \(T^F-T^U\), and weakly increasing in the productivity differential induced by unification.
\end{lemma}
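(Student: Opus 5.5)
The plan is to write $\Delta_i$ as the difference of continuation values along the two regime-contingent paths and to track each parameter only through the terms where it enters. The natural reading of $V_i^U$ and $V_i^F$ (the one used in the static condition and in Proposition 5) is the following. $V_i^\theta(K_t,R_t)$ is elite $i$'s value when regime $\theta$ is implemented at $t$ and play then follows the equilibrium continuation. Writing out one step of the Bellman recursion gives
\[
\Delta_i(K_t,R_t)=\big[\pi_i(U,K_t,R_t)-\pi_i(F,K_t,R_t)\big]+\delta_e\Big(\mathbb{E}\big[V_i(K_{t+1}^U,R_{t+1}^U)\big]-\mathbb{E}\big[V_i(K_{t+1}^F,R_{t+1}^F)\big]\Big).
\]
The current-period bracket equals the expression displayed before Proposition 1. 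There $-[r_i(F)-r_i(U)]$, $-[\gamma_i(F)-\gamma_i(U)]$ and $-\beta_i(T^F-T^U)$ appear additively with coefficient $-1$ (respectively $-\beta_i\le 0$). The productivity term enters as $s_i(U)A(U,p_t)K_t^\alpha-s_i(F)A(F,p_t)K_t^\alpha$. Raising $A(U,p_t)$ relative to $A(F,p_t)$ therefore raises it, since $s_i(\cdot)\ge 0$ and $K_t^\alpha\ge 0$. This settles the static part of each monotonicity claim.

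For the continuation term I would first handle the three private-benefit gaps. I would treat each gap as a level shift in the flow payoff, that is, a change in $r_i(F)$, $\gamma_i(F)$ or $\tau_1$ with the $U$ values held fixed. Such a shift adds the same constant to $\pi_i(F,\cdot)$ in every future period. It does not touch $\pi_i(U,\cdot)$ or the laws of motion of $K$ and $R$, because neither investment nor recognition depends on $r$, $\gamma$ or $T$. A standard monotone-operator argument then applies. The Bellman operator is monotone and a $\delta_e$-contraction, so the value function is weakly increasing in each such shift. Moreover, the value from any state in which $F$ is chosen rises by at least as much as the value from any state in which it is not. The continuation difference after choosing $U$ today therefore moves weakly against $U$, and combined with the static term this gives weak decrease. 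Equality can occur when the continuation never returns to $F$, which is why the statement says \emph{weakly}.

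For the productivity differential I would argue in two steps. First, Lemma 1 gives $K_{t+1}^U>K_{t+1}^F$, and the proof of Lemma 1 shows that $I_t^U-I_t^F$ is increasing in $A(U,p_t)-A(F,p_t)$. A larger differential therefore widens the capacity advantage of the $U$ path. Second, I would show that $V_i$ is weakly increasing in $K$. This follows from the same monotone-operator argument, since $\pi_i$ is increasing in $K$ through $s_i(\theta)A K^\alpha$, and $K_{t+1}$ and $R_{t+1}$ are nondecreasing in $K_t$ (the latter because $m'\ge 0$). A larger differential then raises $V_i(K^U_{t+1},\cdot)$ relative to $V_i(K^F_{t+1},\cdot)$ and also raises future flow differences. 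The continuation term weakly increases, and so does $\Delta_i$.

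The main obstacle is that the continuation $V_i$ embeds equilibrium play of the other bloc. A parameter change could in principle switch future regimes and break the clean level-shift argument. I would handle this by fixing the continuation strategy profile when taking comparative statics, as is implicit in the Proposition 5 sketch, and appealing to an envelope-type argument at the elite's own optimal choices. Where regimes switch, monotonicity survives because a max of functions that are each monotone in the same direction is monotone in that direction. If that proves too delicate, I would fall back on stating the lemma for a fixed stationary continuation regime, which is all Propositions 4–6 require.
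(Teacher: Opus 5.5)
Your route is the paper's own: the one-step Bellman decomposition, the signs of the current-period terms, and then the claim that continuation values ``inherit the same monotonicities.'' The static half is fine. The gap is in the inheritance step. The paper only asserts it, and your argument does not establish it.

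Hold the continuation profile fixed. Then \(\partial V_i(x)/\partial r_i(F)=D(x)\), the expected discounted number of future fragmented periods starting from state \(x\), so
\[
\frac{\partial \Delta_i}{\partial r_i(F)}=-1+\delta_e\bigl[D(x^U_{t+1})-D(x^F_{t+1})\bigr].
\]
The same holds for \(\gamma_i(F)\), and for \(\tau_1\) after scaling by \(\beta_i\). Your claim that the value from a state where \(F\) is chosen ``rises by at least as much'' compares the wrong objects. Both \(x^U_{t+1}\) and \(x^F_{t+1}\) are future states, and which regimes follow each is set by the Markov profile, not by today's label. Under the stated law of motion \(R_{t+1}\) does not depend on \(\theta_t\), so the two states differ only through \(K_{t+1}\). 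Nothing in your argument rules out a profile that fragments on a high-\(K\) region and unifies on a low-\(K\) region; fragmentation can be prescribed at any state, since a lone \(U\) is never pivotal. Under such a profile \(D(x^U_{t+1})-D(x^F_{t+1})>0\). Its only a priori bound is \(1/(1-\delta_e)\), which exceeds \(1/\delta_e\) once \(\delta_e>1/2\), so the sign is undetermined. Neither the envelope argument nor the remark about a max of monotone functions helps. They control the level \(V_i\) and the elite's own re-optimization, not the difference of values at two different states, and they say nothing when the other bloc's strategy is re-selected.

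The productivity step relies on the same missing property, since \(V_i\) need not increase in \(K\) when the other bloc's behavior varies with \(K\). It also has a further problem. If the rise in \(A(U,\cdot)\) persists beyond \(t\), it changes \(V_i\) itself: it adds the same extra investment \(\phi\,dA\) in every future unified period on both branches. Because \(\alpha<1\), that capital is worth more on the low-\(K\) branch. So ``raises future flow differences'' omits a term of the opposite sign.

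What does go through is your fallback, stated precisely:
\begin{itemize}
\item Freeze a continuation regime sequence that is identical after \(U\) and after \(F\) today. Then \(D(x^U_{t+1})=D(x^F_{t+1})\), and the three private-benefit claims follow from the static term alone. For a perturbation of the current-period differential only, \(\partial\Delta_i=s_i(U)K_t^\alpha\,dA+\delta_e\phi\,W'(K^U_{t+1})\,dA\ge 0\), where \(W\), the frozen-regime continuation value, is increasing in \(K\).
\item Alternatively, for the private-benefit claims only, assume strategies monotone in \((K,R)\), with \(U\) played on an upper set. Pathwise dominance of the capital path then gives \(D(x^U_{t+1})\le D(x^F_{t+1})\), so the slope is at most \(-1\).
\end{itemize}
One of these restrictions is what the paper's one-line inheritance argument leaves implicit.
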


\noindent
\textit{Proof.}
In current-value terms,
\begin{align*}
\pi_i(U)-\pi_i(F)
&=
s_i(U)Y(U,K_t,p_t)-s_i(F)Y(F,K_t,p_t) \\
&\quad -\big[r_i(F)-r_i(U)\big]
-\big[\gamma_i(F)-\gamma_i(U)\big]
-\beta_i(T^F-T^U).
\end{align*}
Each comparative static follows immediately from the sign of the relevant term. Because continuation values inherit the same monotonicities through the Bellman equation, the statement extends to \(\Delta_i\).

\begin{lemma}[Recognition-capacity gap monotonicity]
Let
\[
G_t = R_t - \psi K_t,
\qquad \psi > 0.
\]
If along a fragmentation path \(R_t\) is bounded below by a process that declines at most slowly, while \(K_t\) is uniformly lower than along a unification path, then \(G_t\) eventually exceeds its value along the unification path for a non-empty interval of \(\psi\).
\end{lemma}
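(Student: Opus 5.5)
I will prove the statement by working directly with the difference of the two gaps and treating recognition and capacity as two separately controlled terms. Fix the two comparison paths from common initial conditions \((K_0,R_0)\), and write
\[
G_t^F-G_t^U=(R_t^F-R_t^U)-\psi\,(K_t^F-K_t^U).
\]
The hypotheses will be read quantitatively. First, ``\(K_t\) is uniformly lower along the fragmentation path'' will mean that there exist \(\kappa>0\) and \(t_0\) with \(K_t^U-K_t^F\geq \kappa\) for all \(t\geq t_0\). Second, ``\(R_t\) is bounded below by a process that declines at most slowly'' will mean that there is a bound \(R_t^F-R_t^U\geq -\varepsilon_t\) with \(\sup_{t\geq t_0}\varepsilon_t\equiv\bar\varepsilon<\infty\). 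In the leading case, \(\varepsilon_t\leq\varepsilon\) as in Proposition 3.

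The first step is to justify the uniform capacity gap from earlier results, so that the hypothesis is not vacuous in the model. Lemma 1 gives \(I_t^U-I_t^F=\phi[A(U,p_t)-A(F,p_t)]+\lambda[q(F)-q(U)]\equiv d_t>0\). Iterating the linear law of motion \(K_{t+1}=(1-\delta)K_t+I_t\) along the two paths yields
\[
K_t^U-K_t^F=\sum_{s<t}(1-\delta)^{t-1-s}d_s .
\]
This is bounded below by \(d_{t-1}\), and it converges to \(\underline d/\delta\) if \(d_s\geq\underline d>0\) uniformly. That uniform bound holds, for instance, when \(p_t\) lies in a compact set on which the gap \(A(U,\cdot)-A(F,\cdot)\) is continuous and positive, or simply because \(\lambda[q(F)-q(U)]>0\) is constant. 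So I would take \(\kappa=\lambda[q(F)-q(U)]\), which works for all \(t\geq1\).

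The second step bounds the recognition difference from below using Assumption A4. Subtracting the two recognition laws gives \(R_{t+1}^F-R_{t+1}^U=(1-\rho)(R_t^F-R_t^U)+\chi[m(K_t^F)-m(K_t^U)]\), provided \(X_t,Z_t\) are the same across paths. Since \(m\) is weakly increasing, the forcing term is nonpositive, and its size is at most \(\chi\bar m'(K_t^U-K_t^F)\), where \(\bar m'\) is a bound on \(m'\). Solving the geometric recursion then gives \(\varepsilon_t\leq \chi\bar m'\,\overline{\Delta K}/\rho\), where \(\overline{\Delta K}\) is a bound on the capacity gap, for instance \(\sup_s d_s/\delta\). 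Hence \(\bar\varepsilon\) is finite, and it is small when \(\chi\bar m'\) is small.

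The final step combines the two bounds: for \(t\geq t_0\),
\[
G_t^F-G_t^U\geq -\bar\varepsilon+\psi\kappa ,
\]
which is strictly positive for every \(\psi>\bar\varepsilon/\kappa\). The interval \((\bar\varepsilon/\kappa,\infty)\) is non-empty, and it contains any prescribed \(\psi\) once \(\chi\bar m'\) is small enough.

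I expect the main obstacle to be the second step. The capacity gap enters the recognition gap as well, so the two terms are not independent. The recognition loss is proportional to \(\chi\bar m'\) times the capacity gap, and this could in principle cancel \(\psi\) times that same gap. I would handle this by writing the bound as \(G_t^F-G_t^U\geq(\psi-\chi\bar m'/\rho)\cdot\underline{\Delta K}\) when the gaps are comparable. This isolates the explicit threshold \(\psi>\chi\bar m'/\rho\) up to a ratio of the sup and inf of the capacity gap, and that threshold is exactly the formal content of ``weakly conditioned recognition.''
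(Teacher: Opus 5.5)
Your argument is correct and is essentially the paper's. Both decompose \(G_t^F-G_t^U=(R_t^F-R_t^U)+\psi(K_t^U-K_t^F)\) and let the uniformly positive capacity term dominate the bounded-below recognition term; your Steps 1--2 only verify, from the model primitives, hypotheses that the paper's proof takes as given.

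Your threshold \(\psi>\bar\varepsilon/\kappa\) also points the right way. The paper's proof asserts domination for \emph{sufficiently small} \(\psi\), which fails whenever \(\bar\varepsilon>0\): the term \(\psi(K_t^U-K_t^F)\) vanishes as \(\psi\to 0\) and cannot offset a fixed negative recognition difference.
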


\noindent
\textit{Proof.}
Let superscripts \(F\) and \(U\) denote paths under fragmentation and unification. Then
\[
G_t^F-G_t^U = (R_t^F-R_t^U)-\psi(K_t^F-K_t^U).
\]
If \(K_t^F<K_t^U\), then \(-(K_t^F-K_t^U)>0\). If, in addition, \(R_t^F-R_t^U\) is bounded below by a finite negative constant or converges slowly relative to the capacity gap, then for sufficiently small but positive \(\psi\), the second term dominates. Therefore \(G_t^F-G_t^U>0\) eventually. 

\subsection{Proof of Proposition 1}

\begin{proposition}[Static fragmentation equilibrium]
Suppose that for each elite bloc \(i \in \{1,2\}\),
\begin{align}
&s_i(U)Y(U,K_t,p_t)-s_i(F)Y(F,K_t,p_t) \nonumber\\
&\qquad <
\big[r_i(F)-r_i(U)\big]
+\big[\gamma_i(F)-\gamma_i(U)\big]
+\beta_i\big(T^F-T^U\big).
\label{eq:appendix_prop1_condition}
\end{align}
Then \((F,F)\) is a stage-game Nash equilibrium, even though unified governance is socially more productive.
\end{proposition}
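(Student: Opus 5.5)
The plan is to work directly with the stage game in period \(t\), with \(K_t\), \(R_t\), \(p_t\), \(H_t\) held fixed. In that game each bloc's action set is \(\{U,F\}\) and the payoff is \(\pi_i(\theta_t,K_t,R_t)\), where \(\theta_t=U\) exactly when \(a_{1t}=a_{2t}=U\). Showing that \((F,F)\) is a Nash equilibrium only requires checking that no bloc gains from a unilateral deviation to \(U\), given that the other bloc plays \(F\).

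\emph{Step one.} Fix bloc \(i\) and suppose bloc \(j\neq i\) plays \(F\).
\begin{itemize}
\item If \(i\) plays \(F\), the regime is \(F\).
\item If \(i\) deviates to \(U\), the regime is still \(F\), because unification needs bilateral support.
\end{itemize}
The deviation therefore leaves the realized regime unchanged, so it is weakly unprofitable with no further computation.

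\emph{Step two.} To match the statement's emphasis (and the proof sketch's phrase ``regardless of the other elite's action''), I will also show that \(F\) is a weakly dominant strategy. The only profile in which bloc \(i\)'s action affects the regime is when \(j\) plays \(U\). There \(i\) compares \(\pi_i(U)\) with \(\pi_i(F)\). Writing out the difference term by term:
\begin{itemize}
\item the \(\mu_iR_t\) term cancels, since \(R_t\) is a stock fixed at the start of the period;
\item the common part \(\tau_0+\tau_2H_t\) of the transfers cancels;
\item what remains is exactly the left-hand side of \eqref{eq:appendix_prop1_condition} minus its right-hand side.
\end{itemize}
This is the expression already displayed in the proof of Lemma 2. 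The hypothesis makes it strictly negative, so \(F\) is a strict best response against \(U\) and a (weak) best response against \(F\). Hence \((F,F)\) is an equilibrium, and indeed the equilibrium in weakly undominated strategies.

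\emph{Step three.} For the inefficiency clause, I will invoke Assumption A1 together with \(K_t^\alpha>0\) for \(K_t>0\) to get \(Y(U,K_t,p_t)>Y(F,K_t,p_t)\). I will also note that the welfare \(W_t\) of Section 3 is larger under \(U\), since \(-\sum_i r_i\) is larger under \(U\) by Assumption A3. This is what ``socially more productive'' means.

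The main subtlety, rather than a real obstacle, is the boundary case \(K_t=0\), where output is equal under both regimes; I would either assume \(K_t>0\) or restate the claim there in terms of welfare. The other point needing care is the distinction between stage-game and dynamic equilibrium. The proposition is purely static, so continuation values play no role and the argument should say explicitly that it does not appeal to the Bellman equation.
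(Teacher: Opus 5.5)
Your proof is correct and uses the same ingredients as the paper's: the unanimity rule for $U$, the hypothesis rewritten as $\pi_i(F)>\pi_i(U)$, and Assumption A1 for the inefficiency clause. Where you differ is that you separate what each ingredient contributes.

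The paper derives $\pi_i(F)>\pi_i(U)$ first and then concludes, via the unanimity rule, that $F$ is a best response. The main-text sketch even says $F$ yields a strictly higher payoff ``regardless of the other elite's action.'' That overstates matters: against $F$, a bloc's own action does not change the regime, so its payoff is identical.

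Your Step one shows that $(F,F)$ is a Nash equilibrium from the unanimity rule alone, so the hypothesis is not needed for the literal claim. Your Step two shows where the hypothesis actually bites, namely against an opponent playing $U$. There it makes $U$ weakly dominated, eliminates $(U,U)$ as an equilibrium, and leaves $(F,F)$ as the unique profile in weakly undominated strategies. Note that $(U,F)$ and $(F,U)$ are still Nash equilibria, so your phrase ``the equilibrium in weakly undominated strategies'' is the right way to state uniqueness.

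The paper's version is shorter but leaves the role of the condition implicit. Yours makes clear that the condition buys the robustness of fragmentation, not merely its existence as an equilibrium.

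Your caveat on the inefficiency clause is also a genuine refinement. At $K_t=0$ one has $Y(U,0,p_t)=Y(F,0,p_t)=0$, so A1 alone does not give a strict output ranking. Your welfare comparison via A3 still delivers the strict ranking in that case.
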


\noindent
\textit{Proof.}
Condition \eqref{eq:appendix_prop1_condition} implies
\[
\pi_i(F)>\pi_i(U)
\]
for each elite bloc \(i\), evaluated at the same inherited state variables. Since unification requires both blocs to choose \(U\), while fragmentation survives whenever at least one bloc chooses \(F\), the action \(F\) is a best response for each bloc. Therefore \((F,F)\) is a Nash equilibrium of the stage game.

To see the inefficiency, note that social output satisfies
\[
Y(U,K_t,p_t)>Y(F,K_t,p_t)
\]
by Assumption A1. Hence the equilibrium is privately stable but socially inefficient.

\subsection{Proof of Proposition 2}

\begin{proposition}[Dynamic fragmentation trap]
Suppose the conditions of Proposition 1 hold and assume \(A(U,p_t)>A(F,p_t)\), \(q(F)>q(U)\), and \(\alpha \in (0,1)\). Then persistent fragmentation generates a dynamic trap in the following sense:
\[
K_{t+1}^F < K_{t+1}^U,
\]
and therefore
\[
Y(U,K_{t+1}^F,p_{t+1})-Y(F,K_{t+1}^F,p_{t+1})
<
Y(U,K_{t+1}^U,p_{t+1})-Y(F,K_{t+1}^U,p_{t+1}).
\]
\end{proposition}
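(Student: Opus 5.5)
The plan is to split the statement into its two displayed claims and derive the second from the first through the multiplicative structure of the technology \(Y_t = A(\theta_t,p_t)K_t^\alpha\). The conditions of Proposition 1 play no role in the inequalities themselves. They only guarantee that fragmentation is the path actually chosen, so that \(K_{t+1}^F\) is the relevant inherited state.

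\emph{Step 1 (capacity ranking).} For a common inherited \(K_t\) and common \(p_t\), the first claim \(K_{t+1}^F<K_{t+1}^U\) is exactly Lemma 1 (capacity monotonicity), which uses Assumptions A1 and A2. To cover persistent fragmentation over several periods, I would add a short induction on \(t\). Suppose \(K_t^F\le K_t^U\), with the peace path \(\{p_t\}\) the same on both histories. Then
\[
K_{t+1}^U-K_{t+1}^F=(1-\delta)\big(K_t^U-K_t^F\big)+\big(I_t^U-I_t^F\big),
\]
and both terms are nonnegative, the second strictly positive by Lemma 1. Since \(\delta\in(0,1)\), the strict gap propagates, so the whole fragmentation path of capacity lies strictly below the unification path from the first divergent period onward.

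\emph{Step 2 (output gap).} Next I would factor the one-period gain from switching to \(U\) at a given capacity level \(K\) and peace level \(p_{t+1}\):
\[
D(K)\equiv Y(U,K,p_{t+1})-Y(F,K,p_{t+1})=\big[A(U,p_{t+1})-A(F,p_{t+1})\big]K^\alpha .
\]
The bracket is strictly positive by A1 and does not depend on \(K\). The map \(K\mapsto K^\alpha\) is strictly increasing on \([0,\infty)\) for \(\alpha\in(0,1)\), and \(K\ge 0\) is guaranteed by the model. Hence \(D\) is strictly increasing in \(K\). Evaluating at \(K_{t+1}^F<K_{t+1}^U\) from Step 1 then gives the second displayed inequality.

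\emph{Main obstacle.} The delicate point is that the comparison must be made at the same \(p_{t+1}\) on both histories. I would make this explicit by treating peace credibility as exogenous to the regime history, which is how it enters the model. If \(p\) were allowed to respond to \(\theta\), the bracket in \(D\) could shift, and one would need an extra assumption such as \(p_{t+1}^F\le p_{t+1}^U\) together with A5. I would also remark that \(\alpha<1\) is not needed for the inequality, only \(\alpha>0\). Concavity only implies that the gap grows less than proportionally in \(K\).
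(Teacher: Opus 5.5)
Your proposal is correct and matches the paper's proof: the capacity ranking is exactly Lemma 1, and the output-gap inequality follows from writing the difference as $\bigl[A(U,p_{t+1})-A(F,p_{t+1})\bigr]K^\alpha$, which is strictly increasing in $K$. Your multi-period induction, and your remarks that only $\alpha>0$ is needed and that $p_{t+1}$ must be common to both histories, are sound refinements that the paper leaves implicit.
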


\noindent
\textit{Proof.}
The first inequality follows immediately from Lemma 1. Since \(\alpha \in (0,1)\), output is strictly increasing in \(K_t\) for both regimes. Therefore a lower inherited \(K_{t+1}\) under fragmentation reduces the scale of future output under both \(U\) and \(F\). Because the productivity advantage of \(U\) is multiplicative in \(K_t^\alpha\), the level difference
\[
Y(U,K,p)-Y(F,K,p)
=
\big[A(U,p)-A(F,p)\big]K^\alpha
\]
is increasing in \(K\). Hence if \(K_{t+1}^F<K_{t+1}^U\), then
\[
\big[A(U,p_{t+1})-A(F,p_{t+1})\big](K_{t+1}^F)^\alpha
<
\big[A(U,p_{t+1})-A(F,p_{t+1})\big](K_{t+1}^U)^\alpha.
\]
This proves the claim.

\subsection{Proof of Proposition 3}

\begin{proposition}[Recognition-capacity divergence]
Suppose that under fragmentation,
\[
K_{t+1}^F < K_{t+1}^U,
\]
while
\[
R_{t+1}^F \geq R_{t+1}^U - \varepsilon
\]
for some small \(\varepsilon \geq 0\). Then for sufficiently large \(t\), the recognition-capacity gap
\[
G_t \equiv R_t-\psi K_t
\]
is larger along a fragmentation path than along a unification path, for any \(\psi>0\) in a non-empty interval.
\end{proposition}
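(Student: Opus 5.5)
The plan is to read the hypotheses of the proposition as statements about whole paths rather than a single period. We compare a fragmentation path \(\{K_t^F,R_t^F\}\) with a unification path \(\{K_t^U,R_t^U\}\). Both start from the same \((K_0,R_0)\) and face the same exogenous sequences \(\{p_t,X_t,Z_t,H_t\}\). The inequality \(K_{t+1}^F<K_{t+1}^U\) and the bound \(R_{t+1}^F\geq R_{t+1}^U-\varepsilon\) are taken to hold for all \(t\) beyond some date. The core identity, as in Lemma 3, is
\[
G_t^F-G_t^U=(R_t^F-R_t^U)+\psi\,(K_t^U-K_t^F)\;\geq\;-\varepsilon+\psi\,D_t,
\qquad D_t\equiv K_t^U-K_t^F .
\]
So everything reduces to showing that \(D_t\) is bounded away from zero for large \(t\). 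Once that holds, \(\psi\) can be chosen large enough to dominate \(\varepsilon\).

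First I would control \(D_t\) by iterating Lemma 1 along paths instead of applying it one step at a time. Subtracting the two laws of motion gives
\[
D_{t+1}=(1-\delta)D_t+c_t,
\qquad c_t\equiv\phi\big[A(U,p_t)-A(F,p_t)\big]+\lambda\big[q(F)-q(U)\big].
\]
By Assumption A2, \(c_t\geq \underline c\equiv\lambda[q(F)-q(U)]>0\) uniformly in \(t\), and Assumption A1 only adds to this. Starting from \(D_0=0\) and solving the linear recursion yields
\[
D_t\geq \underline c\,\frac{1-(1-\delta)^t}{\delta}\geq \underline c
\qquad \text{for all } t\geq 1 .
\]
The capacity gap therefore grows toward at least \(\underline c/\delta\) and is uniformly positive. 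Usefully, this step does not require \(K_t^F\) to stay below \(K_t^U\) by assumption; it follows from Lemma 1 iterated.

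Second, I would combine this with the recognition bound. For \(t\) large enough that the bound on \(R\) applies, \(G_t^F-G_t^U\geq -\varepsilon+\psi\underline c\). This is strictly positive for every \(\psi\in(\varepsilon/\underline c,\infty)\), which is a non-empty interval. When \(\varepsilon=0\), the interval is all of \((0,\infty)\). A sharper threshold, \(\varepsilon\,\delta/\underline c\) asymptotically, follows from the limit of \(D_t\). I would add a remark tying \(\varepsilon\) back to Assumption A4. Since \(R\) depends on \(K\) only through \(\chi m(K_t)\), iterating the recognition law gives \(R_t^U-R_t^F\leq \chi\sum_s(1-\rho)^{t-1-s}[m(K_s^U)-m(K_s^F)]\). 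A Lipschitz bound on \(m\) then makes \(\varepsilon\) of order \(\chi\sup m'\,\sup_s D_s/\rho\). This shows the hypothesis on \(R\) is consistent with the model rather than an extra primitive.

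The main obstacle is the direction of the \(\psi\)-interval. Lemma 3 as stated speaks of ``sufficiently small'' \(\psi\). In fact the capacity term enters with coefficient \(\psi\), so it is large \(\psi\) (relative to \(\varepsilon/\underline c\)) that makes it dominate. I would prove the proposition with the interval \((\varepsilon/\underline c,\infty)\) and note this correction, rather than invoking Lemma 3 verbatim. A secondary point needing care is the phrase ``sufficiently large \(t\)''. The lower bound on \(D_t\) already holds from \(t=1\), so the large-\(t\) qualifier is needed only to activate the recognition bound.
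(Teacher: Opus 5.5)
Your proof is correct. It keeps the paper's skeleton: the identity $G_t^F-G_t^U=(R_t^F-R_t^U)-\psi(K_t^F-K_t^U)$ from Lemma 3, with the recognition term bounded below by $-\varepsilon$. It departs from the paper at exactly the two places where the paper's argument is weak.

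\textbf{The capacity term.} The paper cites Proposition 2, which only compares one step ahead from a common inherited $K_t$. It then simply posits that the recognition shortfall is ``small relative to the persistent capacity gap.'' Strict positivity of $K_t^U-K_t^F$ is not enough on its own: if the gap could shrink toward zero, no fixed $\psi$ would dominate a fixed $\varepsilon>0$. You use instead the fact that investment in this model does not depend on $K_t$. The gap therefore follows the linear recursion $D_{t+1}=(1-\delta)D_t+c_t$ and is bounded below uniformly by $\lambda[q(F)-q(U)]$, and asymptotically by $\lambda[q(F)-q(U)]/\delta$. This is exactly the uniform gap the conclusion needs. The bound also extends, for large $t$, to unequal initial capacities because $(1-\delta)^tD_0\to 0$; that is where the ``sufficiently large $t$'' qualifier genuinely matters.

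\textbf{The $\psi$-interval.} Your correction is right. Since $G_t^F-G_t^U$ is increasing in $\psi$, the admissible set is an upper set such as $\psi>\varepsilon/\underline{c}$. The paper's proof and Lemma 3 instead assert ``sufficiently small'' $\psi$, which fails whenever $\varepsilon>0$.

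The only cost of your route is that it relies on the specific additive, $K$-independent investment rule, whereas the paper's phrasing is more generic. Within this model that costs nothing. In exchange, your version gives explicit constants and an explicit order of magnitude for $\varepsilon$, which shows the recognition hypothesis follows from the model rather than being an extra primitive.
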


\noindent
\textit{Proof.}
By Proposition 2, \(K_t^F<K_t^U\) for all sufficiently large \(t\) along the respective paths. By the assumed weak conditioning of recognition on capacity, the difference \(R_t^F-R_t^U\) is bounded below by \(-\varepsilon_t\) with \(\varepsilon_t\) small relative to the persistent capacity gap. Therefore
\[
G_t^F-G_t^U
=
(R_t^F-R_t^U)-\psi(K_t^F-K_t^U).
\]
Since \(K_t^F-K_t^U<0\), the second term is strictly positive. For \(\psi\) in a sufficiently small positive interval and \(t\) sufficiently large, that positive term dominates any bounded negative recognition difference. Hence \(G_t^F>G_t^U\).

\subsection{Proof of Proposition 4}

\begin{proposition}[Recognition-induced persistence]
Suppose \(\mu_i>0\) for each elite bloc and that recognition is only weakly responsive to capacity. Then an increase in exogenous diplomatic or symbolic support---that is, an increase in \(X_t\) or \(Z_t\) holding \(K_t\) fixed---weakly reduces the incentive for elite \(i\) to support unification whenever unification does not substantially increase recognition relative to fragmentation.
\end{proposition}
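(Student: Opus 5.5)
I will work directly with the continuation-value difference $\Delta_i(K_t,R_t)=V_i^U(K_t,R_t)-V_i^F(K_t,R_t)$ and treat an increase in $X_t$ or $Z_t$ as a shift in the exogenous recognition inputs, holding $K_t$ fixed. The plan is to decompose $\Delta_i$ into a part that does not depend on recognition and a part that does. The first part consists of output shares, rents, control and transfers, and by Lemma 2 it is unaffected by $X_t,Z_t$. I will then show that the recognition-dependent part is weakly decreasing in the exogenous inputs under the stated qualifier.

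First, I will exploit linearity. Recognition enters $\pi_i$ only through the additive term $\mu_i R_t$. The law of motion for $R_{t+1}$ is linear in $X_t,Z_t$, and the two regimes differ only through $\chi m(K_t)$ and its future iterates via the capacity paths supplied by Lemma 1. Writing $R^\theta_{t+s}$ for the recognition path following regime $\theta$ at $t$, the recognition component of $\Delta_i$ is
\[
\mu_i\sum_{s\ge 1}\delta_e^{s}\big(R^U_{t+s}-R^F_{t+s}\big),
\]
and the current term $\mu_i R_t$ cancels. Because $\eta X_t$ and $\omega Z_t$ enter both paths identically, this difference contains only $\chi$-weighted terms in $m(K^U)-m(K^F)$. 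It is therefore nonnegative but small by Assumption A4. The hypothesis that ``unification does not substantially increase recognition relative to fragmentation'' is exactly the statement that this term is bounded by some small $\varepsilon$.

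Second, I will make the comparative static precise. Since $K_t$ is held fixed, the level effect of higher $X_t,Z_t$ raises $V_i^U$ and $V_i^F$ equally. In the pure Markov formulation its direct effect on $\Delta_i$ is therefore zero, which gives the ``weakly'' in the statement. The strict channel comes from the status-quo interpretation in the main-text sketch. If unification carries any adjustment loss or probability of breakdown, represented by letting $V_i^U$ load on recognition with weight less than one relative to $V_i^F$, then $\partial\Delta_i/\partial X_t\le 0$. The sign is governed by $\mu_i\eta$ times that loading gap, net of the $\chi m'$ term.

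The main obstacle is this second step. In the model as literally written, recognition is additively separable and regime-neutral in its exogenous inputs. The honest derivative of $\Delta_i$ with respect to $X_t$ is then zero, up to the bounded $\chi$-term, and need not be negative. I will handle this by proving the weak inequality $\partial\Delta_i/\partial X_t\le 0$ under the qualifier, that is, whenever $R^U_{t+s}-R^F_{t+s}$ does not increase with $X_t$. The substantive sense of ``persistence'' then follows from a level argument. Higher $R$ raises the fragmentation continuation value, so the elite's participation in the statehood project stays attractive without reform. Together with Proposition 1, the strict inequality in \eqref{eq:appendix_prop1_condition} is preserved, so $(F,F)$ remains an equilibrium.
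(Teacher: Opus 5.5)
Your proposal is correct and is essentially the paper's argument. An exogenous shift in $X_t$ or $Z_t$ enters the recognition path identically after either regime choice, so, holding continuation play fixed as the paper also implicitly does, it raises $V_i^U$ and $V_i^F$ equally and $\Delta_i$ cannot rise, which is all the weak claim needs; your additive-separability decomposition sharpens the paper's ``similar under $U$ and $F$'' into an exact cancellation, so the $\chi m'$ term affects only the level of $\Delta_i$, not its derivative in $X_t$, and that invariance comes from $X_t,Z_t$ entering only the law of motion for $R$, not from Lemma 2. Your ``loading gap'' route to a strict effect lies outside the model, just like the paper's remark that $\Delta_i$ ``may fall if recognition partially substitutes politically for administrative reform,'' and can be dropped.
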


\noindent
\textit{Proof.}
An increase in \(X_t\) or \(Z_t\) raises next-period recognition,
\[
R_{t+1}=(1-\rho)R_t+\eta X_t+\omega Z_t+\chi m(K_t),
\]
and therefore raises continuation payoffs under both regimes through the term \(\mu_iR_{t+1}\). If recognition is only weakly responsive to capacity, then the induced increase in \(R_{t+1}\) is similar under \(U\) and \(F\). Since fragmentation already preserves the elite's control rents and autonomous authority, the outside increase in recognition raises the absolute value of both continuation payoffs while doing little to increase the differential value of unification. In weak form, this implies that \(\Delta_i\) does not rise and may fall if recognition partially substitutes politically for administrative reform.

\subsection{Proof of Proposition 5}

\begin{proposition}[Transfer-conditional persistence]
The incentive to support unification is decreasing in \(\tau_1\). In particular,
\[
\frac{\partial \Delta_i(K_t,R_t)}{\partial \tau_1} < 0
\]
whenever elite \(i\) captures a positive share of transfers, \(\beta_i>0\), and fragmentation yields at least as much transfer support as unification.
\end{proposition}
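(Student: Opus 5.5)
The plan is to work directly from the decomposition of \(\Delta_i\) into a current-period term and a continuation term, and to isolate the channel through which \(\tau_1\) enters. Since \(T_t=\tau_0+\tau_1\mathbf{1}\{\theta_t=F\}+\tau_2H_t\), the parameter \(\tau_1\) appears only in the fragmentation branch. It does not enter the laws of motion for \(K_t\) or \(R_t\), because investment depends only on \(A(\theta_t,p_t)\) and \(q(\theta_t)\), and recognition depends only on \(X_t,Z_t,K_t\). Hence \(\tau_1\) does not shift the state path under either regime; it affects only the payoffs collected along it. The relevant expression is the one already isolated in Lemma 2:
\[
\pi_i(U)-\pi_i(F) = s_i(U)Y(U,K_t,p_t)-s_i(F)Y(F,K_t,p_t) -\big[r_i(F)-r_i(U)\big]-\big[\gamma_i(F)-\gamma_i(U)\big]-\beta_i(T^F-T^U).
\]
Here \(T^F-T^U=\tau_1\), since the \(\tau_0\) and \(\tau_2H_t\) terms cancel when \(H_t\) is held fixed across regimes. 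Its derivative with respect to \(\tau_1\) is therefore \(-\beta_i<0\) whenever \(\beta_i>0\).

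Second, I would extend this to continuation values. The cleanest route is to interpret \(V_i^U\) and \(V_i^F\) as values of the respective regime paths: commit to \(U\) forever versus remain in \(F\) forever, from the same state \((K_t,R_t)\). Along the \(F\) path, each period's payoff contains \(\beta_i\tau_1\) additively and nothing else depends on \(\tau_1\). This gives \(\partial V_i^F/\partial\tau_1=\beta_i/(1-\delta_e)\) and \(\partial V_i^U/\partial\tau_1=0\), so
\[
\frac{\partial\Delta_i}{\partial\tau_1}=-\frac{\beta_i}{1-\delta_e}<0.
\]
The hypothesis that fragmentation yields at least as much transfer support as unification corresponds to \(\tau_1\ge 0\). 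It guarantees that the comparison is the economically relevant one and that no sign reversal arises from a negative fragmentation premium.

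The main obstacle is a general Markov-perfect interpretation, in which \(V_i^U\) is the value of choosing \(U\) today and then following equilibrium play. If future play includes fragmentation with positive probability, \(V_i^U\) also rises with \(\tau_1\). I would handle this with an envelope argument. Write each value as the expected discounted sum of payoffs under the equilibrium continuation, so the derivative with respect to \(\tau_1\) equals \(\beta_i\) times the expected discounted number of fragmented periods on each branch. \(\Delta_i\) then falls provided the \(F\)-branch has weakly more expected discounted fragmented periods than the \(U\)-branch. This holds strictly because the current period is fragmented on the \(F\)-branch and unified on the \(U\)-branch. It also holds at later dates because, by Lemma 1 and Proposition 2, the \(U\)-branch inherits higher \(K_{t+1}\), and Lemma 2 makes unification weakly more attractive at higher capacity. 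If that monotone comparison of future regime probabilities proves too delicate, I would fall back on the committed-path interpretation above, which matches how the main text uses \(\Delta_i\).
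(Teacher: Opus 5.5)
Your core step is the paper's. Its proof observes that \(\partial T_t/\partial\tau_1\) equals \(1\) under \(F\) and \(0\) under \(U\), so \(\pi_i(U)-\pi_i(F)\) falls by \(\beta_i\), and then asserts that continuation values ``inherit the same sign through future transfer paths.'' Your committed-path computation is the rigorous version of that sentence. It uses the fact, left implicit in the paper, that \(\tau_1\) enters neither law of motion, and it yields the exact derivative \(-\beta_i/(1-\delta_e)\). Under this reading the hypothesis \(\tau_1\ge 0\) plays no role, in your proof or the paper's: the derivative is \(-\beta_i/(1-\delta_e)\) at every value of \(\tau_1\), so there is no sign reversal for it to rule out.

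The Markov-perfect extension would not go through as sketched, so keep the fallback as your actual proof.

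\emph{The envelope step.} Holding equilibrium continuation play fixed is an envelope argument only for elite \(i\)'s own future choices. It does not neutralize the response of \(\sigma_{-i}\), which also depends on \(\tau_1\) through \(\beta_{-i}\). With binary actions that response is a discrete switch in the regime path, and with multiple MPE possibly a change of selection. Hence \(\Delta_i\) need not be differentiable in \(\tau_1\), and a switch by the other bloc can move \(\Delta_i\) in either direction, depending on whether \(i\) prefers \(U\) at the affected states.

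\emph{The monotonicity step.} This step misuses Lemma 2, which is a comparative static in payoff parameters, not in the state. The internalized one-period gain is \([s_i(U)A(U,p)-s_i(F)A(F,p)]K^\alpha\). It is increasing in \(K\) only if \(s_i(U)A(U,p)\ge s_i(F)A(F,p)\), which the paper never assumes; Proposition 2 concerns the unweighted gap \(Y(U)-Y(F)\). Even granting that condition, individual single-crossing does not make the equilibrium regime monotone in \(K\). Because \(U\) needs both blocs, \((F,F)\) is a mutual best response at every state. An MPE can therefore select fragmentation at high-capacity states and unification at lower ones, which breaks your coupling of the two branches.

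\emph{Strictness.} Strictness in the first period requires \(V_i^U\) to mean regime \(U\) today. If it instead means \(i\)'s unilateral choice of \(U\) and \(\sigma_{-i}(x)=F\), the two branches coincide and the derivative is zero.

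The proposition is therefore best stated and proved under the committed-path interpretation. That is also the natural reading of the paper's appeal to ``future transfer paths.''
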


\noindent
\textit{Proof.}
Transfers satisfy
\[
T_t = \tau_0+\tau_1\mathbf{1}\{\theta_t=F\}+\tau_2H_t.
\]
Hence
\[
\frac{\partial T_t}{\partial \tau_1}=
\begin{cases}
1,& \theta_t=F,\\
0,& \theta_t=U.
\end{cases}
\]
Elite \(i\)'s current-period payoff includes \(\beta_iT_t\), so a marginal increase in \(\tau_1\) raises the payoff to fragmentation by \(\beta_i\) and leaves the payoff to unification unchanged. Therefore the current-value difference \(\pi_i(U)-\pi_i(F)\) falls by \(\beta_i\). Because continuation values inherit the same sign through future transfer paths, \(\Delta_i\) is decreasing in \(\tau_1\).

\subsection{Proof of Proposition 6}

\begin{proposition}[Peace credibility and unification]
Suppose
\[
\frac{\partial A(U,p_t)}{\partial p_t}
>
\frac{\partial A(F,p_t)}{\partial p_t}.
\]
Then the gain from unification is increasing in peace credibility:
\[
\frac{\partial \Delta_i(K_t,R_t)}{\partial p_t} > 0
\]
whenever elite \(i\) internalizes a positive share of the productivity gains from unified governance.
\end{proposition}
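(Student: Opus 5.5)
The argument splits \(\Delta_i\) into a current-period part and a continuation part, and shows that each is strictly increasing in \(p_t\). The current part is handled by differentiating directly. The continuation part is handled through capacity accumulation, using Lemma 1 and the scale argument behind Proposition 2.

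\emph{Current term.} Write the current component as
\[
D_i(p_t) \equiv s_i(U)A(U,p_t)K_t^\alpha - s_i(F)A(F,p_t)K_t^\alpha - c_i ,
\]
where \(c_i\) collects the rent, control and transfer differentials. None of these depends on \(p_t\). Differentiating gives
\[
\partial_{p_t} D_i = K_t^\alpha\,\big[s_i(U)\,\partial_p A(U,p_t) - s_i(F)\,\partial_p A(F,p_t)\big].
\]
The phrase ``internalizes a positive share of the productivity gains'' has to be made precise here. Assumption A5 alone is not enough when \(s_i(F) > s_i(U)\). I will therefore read the hypothesis as \(s_i(U) \geq s_i(F) > 0\), with \(s_i(U) > 0\), which holds in the calibration. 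Under this reading the bracket is at least
\[
s_i(F)\big[\partial_p A(U) - \partial_p A(F)\big] + \big(s_i(U) - s_i(F)\big)\,\partial_p A(U) ,
\]
and this is strictly positive by A5 and by \(\partial_p A(U) > 0\). If \(K_t = 0\), the current term vanishes, and strictness must then come from the continuation term.

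\emph{Continuation term.} From the investment rule,
\[
\partial_{p_t}\big(I_t^U - I_t^F\big) = \phi\big[\partial_p A(U,p_t) - \partial_p A(F,p_t)\big] > 0 .
\]
By Lemma 1, the capacity lead \(K_{t+1}^U - K_{t+1}^F\) is therefore positive and increasing in \(p_t\). Next I show that continuation values reward capacity more under the unification branch. Under the Markov structure, and holding \(p\) persistent so that \(p_{t+1} = p_t\), the value \(V_i(K,R)\) is increasing in \(K\), since each period's term \(s_i(\theta)A(\theta,p)K^\alpha\) is increasing in \(K\). A larger \(K_{t+1}\) gap then raises \(\delta_e\,\mathbb{E}\big[V_i(K^U_{t+1},\cdot) - V_i(K^F_{t+1},\cdot)\big]\). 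The recognition channel is neutral to first order. The shock \(p_t\) does not enter \(X_t\) or \(Z_t\), and \(\chi m'\) is small by A4, so any recognition difference induced by \(p_t\) is of order \(\chi m'\cdot \partial_p(K^U - K^F)\), which is nonnegative. Transfers do not depend on \(p_t\).

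\emph{Main obstacle.} The hard step is the continuation term. It requires that \(p_t\) also shifts future productivity, and that future values are monotone in \(p\) in the right direction. The Bellman operator involves a max over a coordination game, so \(V_i^U - V_i^F\) need not be differentiable. I would work on a fixed equilibrium path, specifically the Markov path on which future play is stationary (for example \(F\) forever after, or \(U\) forever after), and compute \(\partial_p\) of the discounted sums explicitly. Each period contributes
\[
\big[s_i(U)\,\partial_p A(U) - s_i(F)\,\partial_p A(F)\big]K^\alpha
+ s_i\,A\,\alpha K^{\alpha-1}\,\partial_p K ,
\]
and both pieces are signed by the argument above. If differentiability fails, I would replace derivatives with increasing differences in \((\theta, p)\) and invoke a monotone comparative statics (Topkis-type) argument. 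Supermodularity of the per-period payoff in \((\mathbf{1}\{\theta = U\}, p)\) is preserved by the contraction, and that gives strict monotonicity of \(\Delta_i\) in \(p_t\).
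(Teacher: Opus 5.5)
Your current-period step is correct, and it is sharper than the paper's. The paper differentiates \(Y(U,K_t,p_t)-Y(F,K_t,p_t)\) instead of \(s_i(U)Y(U,K_t,p_t)-s_i(F)Y(F,K_t,p_t)\), so it tacitly treats the two shares as equal. A condition like your \(s_i(U)\ge s_i(F)>0\), or directly \(s_i(U)\,\partial_pA(U,p_t)>s_i(F)\,\partial_pA(F,p_t)\), is genuinely needed. Your decomposition into a current part and a continuation part is the same as the paper's. The paper, however, settles the continuation part with a one-sentence assertion, and your attempt to actually prove it is where the proposal fails.

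From ``\(K^U_{t+1}-K^F_{t+1}\) is increasing in \(p_t\)'' and ``\(V_i\) is increasing in \(K\)'' it does not follow that \(V_i(K^U_{t+1},\cdot)-V_i(K^F_{t+1},\cdot)\) is increasing in \(p_t\), because both endpoints move. Since \(R_{t+1}\) depends on \(m(K_t)\), it is the same in both branches. The continuation part of \(\partial\Delta_i/\partial p_t\), for \(p\) entering period \(t\) only, is then
\[
\delta_e\phi\Big[\partial_K V_i(K^U_{t+1},R_{t+1})\,\partial_pA(U,p_t)-\partial_K V_i(K^F_{t+1},R_{t+1})\,\partial_pA(F,p_t)\Big].
\]
Because \(\alpha<1\), \(V_i\) is strictly concave in \(K\) on a stationary continuation (with \(m\) concave or linear). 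Hence \(\partial_KV_i(K^F_{t+1},\cdot)>\partial_KV_i(K^U_{t+1},\cdot)\), and A5 does not sign the bracket once \(\partial_pA(F,p_t)>0\).

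Your per-period display has the same problem. After period \(t\) both branches are in the same regime, so the cross-regime term \([s_i(U)\partial_pA(U)-s_i(F)\partial_pA(F)]K^\alpha\) never reappears. The capacity term, differenced across branches, is \(s_i(\theta)A(\theta,p)\alpha\big[(K^U_\tau)^{\alpha-1}\partial_pK^U_\tau-(K^F_\tau)^{\alpha-1}\partial_pK^F_\tau\big]\), which is unsigned. The Topkis fallback adds nothing: increasing differences of \(V_i(K_{t+1}(\theta,p))\) in \((\theta,p)\) is exactly the inequality above, and supermodularity is not generally preserved through the equilibrium max of a coordination game. Your recognition step has the same defect one period later, where \(m(K^U_{t+1})-m(K^F_{t+1})\) enters \(R_{t+2}\).

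This cannot be repaired by more care, because the continuation term can take the wrong sign and dominate. Construct the case as follows:
\begin{enumerate}
\item Take rents large enough that \(F\) is dominant at every state, so the continuation is fragmentation forever in both branches.
\item Take \(K_t\) near zero, so the current-period term, of order \(K_t^\alpha\), is negligible. This is the case you flagged yourself.
\item Let \(I^F_t\) be small.
\item Let \(\partial_pA(F,p_t)=(1-\eta)\,\partial_pA(U,p_t)\) with \(\eta\in(0,1)\) small.
\end{enumerate}
Then \(\partial_KV_i(K^F_{t+1},\cdot)\ge s_i(F)A(F,p_{t+1})\,\alpha\,(K^F_{t+1})^{\alpha-1}\) is arbitrarily large. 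Meanwhile \(\partial_KV_i(K^U_{t+1},\cdot)\) stays bounded, because \(K^U_{t+1}-K^F_{t+1}=\phi[A(U,p_t)-A(F,p_t)]+\lambda[q(F)-q(U)]\) is bounded away from zero. Hence \(\partial\Delta_i/\partial p_t<0\) despite \(s_i(U)\ge s_i(F)>0\) and A5. Making \(p\) persistent, as you do, only adds terms that are bounded above, so the conclusion survives.

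What can actually be proved is therefore one of two things. The first is the current-period version, which is essentially all the paper's own argument establishes. The second is the full claim under an added hypothesis that signs the continuation derivative. For example, with \(\partial_pA(F,p)=0\) and a stationary fragmented continuation, that derivative equals \(\delta_e\phi\,\partial_KV_i(K^U_{t+1},\cdot)\,\partial_pA(U,p_t)>0\).
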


\noindent
\textit{Proof.}
Output under regime \(\theta\) is
\[
Y(\theta,K_t,p_t)=A(\theta,p_t)K_t^\alpha.
\]
Therefore
\[
\frac{\partial}{\partial p_t}\Big[Y(U,K_t,p_t)-Y(F,K_t,p_t)\Big]
=
\left(
\frac{\partial A(U,p_t)}{\partial p_t}
-
\frac{\partial A(F,p_t)}{\partial p_t}
\right)K_t^\alpha.
\]
By Assumption A5, the bracketed term is positive. Since elite \(i\) captures a positive fraction of output under unification, the current-value gain from \(U\) rises with \(p_t\). The same is true for continuation values because higher peace credibility increases the future return to capacity accumulation under unification. Hence \(\Delta_i\) is increasing in \(p_t\).

\subsection{Proof of Proposition 7}

\begin{proposition}[Existence of nominal-statehood equilibrium]
A nominal-statehood equilibrium exists whenever the private gains to elites from preserving fragmentation exceed their internalized gains from institutional unification, fragmentation lowers the future path of capacity accumulation, recognition capital is maintained by channels only weakly tied to domestic institutional performance, and the political value of recognition is sufficiently high to sustain continued elite investment in the statehood project. If, in addition, initial recognition lies above the salience threshold \(\bar{R}\), then there exists an equilibrium path along which fragmentation persists, recognition remains high, capacity remains below the viability threshold, and the recognition-capacity gap does not close.
\end{proposition}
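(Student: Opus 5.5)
I will prove Proposition 7 by construction. I first translate the verbal hypotheses into uniform-in-state inequalities. Then I exhibit a Markov strategy profile, namely ``always play \(F\)'', and verify that it is a Markov-perfect equilibrium. Finally I check the three path properties in the definition of a nominal-statehood equilibrium along the induced path.

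Concretely, I assume that for each \(i\) the stage inequality of Proposition 1 holds with a uniform slack \(\underline{\Delta}_i>0\) for all \(K\) in a bounded invariant set \([0,\bar K^{\max}]\). Such a set exists because \(I_t\) is bounded, so \(K_t\) stays in \([0,\max\{K_0,\bar I/\delta\}]\). On this set \(s_i(U)Y(U,K,p)-s_i(F)Y(F,K,p)\) is bounded by continuity of \(K^\alpha\). I also assume \(X_t,Z_t\) are stationary or bounded below, say \(X_t\ge \underline X\) and \(Z_t\ge\underline Z\).

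The equilibrium step is where I expect the main difficulty. Because unification requires both blocs, a unilateral deviation from \((F,F)\) to \(U\) leaves \(\theta_t=F\) unchanged. Each bloc's deviation payoff therefore equals its equilibrium payoff, so \(F\) is a (weak) best response in every state. This is the same logic as Proposition 1 and is immediate via the one-shot deviation principle, since payoffs are bounded and discounted at \(\delta_e<1\). The genuine obstacle is that this coordination-failure argument is almost too cheap: it holds even without the inequality. To give the hypotheses bite, I would show the stronger claim that \(F\) is strictly preferred even if the other bloc plays \(U\).

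For that stronger claim, I compare the one-shot deviation values
\[
\pi_i(U)-\pi_i(F)+\delta_e\big[\mathbb{E}V_i(K^U_{t+1},R^U_{t+1})-\mathbb{E}V_i(K^F_{t+1},R^F_{t+1})\big].
\]
The first difference is at most \(-\underline\Delta_i\). For the continuation term, I would bound \(V_i\) in \(K\) by a Lipschitz constant \(L_i\) obtained by summing \(\delta_e^k(1-\delta)^k\) times the bound on \(\partial_K[s_iY]\). To keep that bound finite, I restrict attention to \(K\) bounded away from zero, or I replace \(K^\alpha\) by its Lipschitz envelope on the invariant set. I bound the dependence on \(R\) by \(\mu_i\chi\sup m'/(1-\delta_e(1-\rho))\), which is small by A4. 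Lemma 1 and Lemma 2 then give a strict best response provided
\[
\underline\Delta_i>\delta_e L_i\,(I^U-I^F)+\text{(small \(R\) term)}.
\]
I would adopt this as the precise meaning of ``private gains exceed internalized gains''. Propositions 4--6 enter only as monotonicity statements: they show the condition is easier to meet with larger \(\mu_i\), larger \(\tau_1\), or lower \(p\).

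It remains to check the path properties. Under permanent \(F\), \(K_t\) converges monotonically to \(K^{F*}=I^F/\delta\). I will require \(K^{F*}<\bar K\) and \(K_0<\bar K\), so that \(K_t<\bar K\) for all \(t\). Recognition satisfies \(R_{t+1}\ge(1-\rho)R_t+\eta\underline X+\omega\underline Z\), so it converges to at least \(R^{*}=(\eta\underline X+\omega\underline Z)/\rho\). If \(R_0\ge\bar R\) and \(R^*\ge\bar R\), induction gives \(R_t\ge\bar R\) for all \(t\).

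For the gap condition, \(G_{t+1}-G_t=(R_{t+1}-R_t)-\psi(K_{t+1}-K_t)\). I take \(R_0\) below its steady state, so that \(R_t\) is nondecreasing. If \(K_0\ge K^{F*}\), then \(K_t\) is nonincreasing and \(G_t\) is nondecreasing for every \(\psi>0\). Otherwise I choose \(\psi\) small enough that the increments of \(\psi K_t\), which are geometrically decaying at rate \(1-\delta\), are dominated by those of \(R_t\) at rate \(1-\rho\). If this fails, I fall back on the comparative ``gap does not close'' version relative to the unification path via Proposition 3 and Lemma 3. Collecting these parameter conditions yields a nonempty region in which all the requirements hold simultaneously, which proves existence.
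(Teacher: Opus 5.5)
Your route is genuinely different from the paper's. The paper proves Proposition 7 by chaining Propositions 1--6 verbally. It then asserts that there is a path on which elites keep choosing \(F\), \(K_t<\bar K\), \(R_t\ge\bar R\), and \(G_t\) is non-decreasing, without deriving the threshold clauses or the monotonicity of \(G_t\). You instead exhibit the stationary profile ``always \(F\)'' and check the definition clause by clause, and this buys real content:
\begin{itemize}
\item Your coordination-failure observation makes the Markov-perfect property automatic. The paper instead leans on the static inequality of Proposition 1, which ignores that choosing \(U\) raises \(K_{t+1}\) (Lemma 1) and hence continuation values.
\item Your strict condition \(\underline{\Delta}_i>\delta_e L_i(I^U-I^F)+\dots\) is the correct dynamic version of ``private gains exceed internalized gains.'' Obtain \(L_i\) on an invariant set bounded away from \(K=0\), which exists when \(I^F>0\) and \(K_0>0\), rather than by altering \(K^\alpha\).
\item You make explicit conditions that none of Propositions 1--6 supply: \(K^{F*}=I^F/\delta<\bar K\) and \(K_0<\bar K\) for viability, and \((\eta\underline{X}+\omega\underline{Z})/\rho\ge\bar R\) for salience.
\item You correctly see that Proposition 3 delivers only the cross-path comparison \(G^F_t>G^U_t\), not monotonicity along one path.
\end{itemize}
What the paper's chaining buys is interpretive: each verbal hypothesis is attached to a mechanism, whereas in your construction Propositions 4--6 are inessential. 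One side claim is wrong. In your own bound the \(R\)-channel term scales with \(\mu_i\), so a larger \(\mu_i\) makes the strict condition harder, not easier. Shifts in \(X_t,Z_t\) simply cancel from \(\Delta_i\) under linear payoffs.

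One step fails as written. ``\(R_0\) below its steady state, so \(R_t\) is nondecreasing'' holds only when the forcing term \(\eta X+\omega Z+\chi m(K_t)\) is constant. If \(K_0>K^{F*}\) and \(m\) is strictly increasing, \(R_t\) is a lagged average of a decreasing sequence. It can therefore overshoot \(\bigl(\eta X+\omega Z+\chi m(K^{F*})\bigr)/\rho\) and then decline at rate \(1-\rho\). When \(\rho<\delta\), the offsetting term \(\psi\delta(K_t-K^{F*})\) decays at the faster rate \(1-\delta\), so \(G_t\) can eventually fall for every \(\psi>0\). Your \(K_0<K^{F*}\) branch, as phrased, likewise needs \(\rho\le\delta\). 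The fallback to Proposition 3 there establishes a different property from the one in the definition.

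Since only existence is required, specialize your first branch to \(K_0=K^{F*}<\bar K\), with \(X,Z,p\) constant and \(\bar R\le R_0\le\bigl(\eta X+\omega Z+\chi m(K^{F*})\bigr)/\rho\), and drop the second. Then \(K_t\) is constant, \(R_t\) is nondecreasing and stays above \(\bar R\), and \(G_t\) is nondecreasing for every \(\psi>0\). Some upper bound on \(R_0\) of this kind is unavoidable: with \(K_t\) constant and \(R_0\) above its long-run level, \(G_t\) falls for every \(\psi\). The statement's hypothesis \(R_0\ge\bar R\) alone therefore does not deliver a non-decreasing gap.
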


\noindent
\textit{Proof.}
By Proposition 1, if the private loss from surrendering rents, control, and fragmentation-linked transfers exceeds the internalized output gain from unification, then fragmentation is a stage-game equilibrium. By Proposition 2, repeated fragmentation lowers the future path of capacity capital. By Proposition 3, if recognition is weakly conditioned on capacity, then the recognition-capacity gap can widen along the fragmentation path. By Proposition 4, recognition itself can reduce the marginal incentive to reform when it yields direct political value to elites. By Proposition 5, fragmentation-linked transfers further stabilize non-consolidation. By Proposition 6, low peace credibility weakens the productivity case for unification. Therefore there exists a self-reinforcing path on which elites repeatedly choose \(F\), \(K_t\) remains below \(\bar K\), \(R_t\) remains above \(\bar R\), and the gap \(G_t\) is non-decreasing. This path constitutes a nominal-statehood equilibrium.

\section{Further Details of the Quantitative Illustration}

This section provides the details behind the calibration presented in the main text.

\subsection{Static calibration setup}

Consider the static version of the model:
\[
Y_\theta = A_\theta K_\theta^\beta L^{1-\beta},
\qquad
K_\theta = K_0-\lambda q_\theta,
\]
with \(\theta \in \{U,F\}\), \(q_F>q_U\), and \(A_U>A_F\).

Elite payoff is
\[
\Pi_i(\theta)=s_i(\theta)Y_\theta+r_i(\theta)+\gamma_i(\theta)+\beta_iT_\theta.
\]

Unification is privately attractive to bloc \(i\) if
\[
\Delta_i
\equiv
s_i(U)Y_U-s_i(F)Y_F
-[r_i(F)-r_i(U)]
-[\gamma_i(F)-\gamma_i(U)]
-\beta_i(T_F-T_U)
\ge 0.
\]

\subsection{Benchmark parameterization}

Set
\[
L=1,\qquad K_0=100,\qquad \beta=0.35,\qquad \lambda=40,
\]
\[
q_U=0.10,\qquad q_F=0.35,
\]
\[
A_U=1.00,\qquad A_F=0.72,
\]
\[
s_A(U)=s_B(U)=0.18,\qquad s_A(F)=s_B(F)=0.16,
\]
\[
r_i(U)=2,\qquad r_i(F)=7,
\]
\[
\gamma_i(U)=1,\qquad \gamma_i(F)=6,
\]
\[
T_U=4,\qquad T_F=10,\qquad \beta_i=0.20.
\]

Then
\[
K_U=96,\qquad K_F=86.
\]
Hence
\[
Y_U = 96^{0.35}\approx 4.95,
\qquad
Y_F = 0.72\cdot 86^{0.35}\approx 3.42.
\]

Thus the social gain from unification is
\[
Y_U-Y_F \approx 1.53,
\]
which corresponds to an increase of roughly \(45\) percent.

Elite payoff under unification is
\[
\Pi_i(U)=0.18(4.95)+2+1+0.20(4)\approx 4.69,
\]
while under fragmentation it is
\[
\Pi_i(F)=0.16(3.42)+7+6+0.20(10)\approx 15.55.
\]

Therefore
\[
\Pi_i(F)-\Pi_i(U)\approx 10.86>0.
\]
The calibration thus reproduces the central mechanism of the theory: unification is socially efficient, but fragmentation is privately optimal.

\subsection{Threshold conditions}

The fragmentation equilibrium is governed by the sign of
\[
\Delta_i
=
s_i(U)Y_U-s_i(F)Y_F
-[r_i(F)-r_i(U)]
-[\gamma_i(F)-\gamma_i(U)]
-\beta_i(T_F-T_U).
\]

The critical transfer differential is obtained by setting \(\Delta_i=0\):
\[
(T_F-T_U)^*
=
\frac{
s_i(U)Y_U-s_i(F)Y_F-[r_i(F)-r_i(U)]-[\gamma_i(F)-\gamma_i(U)]
}{\beta_i}.
\]

Similarly, the critical control premium is
\[
(\gamma_i(F)-\gamma_i(U))^*
=
s_i(U)Y_U-s_i(F)Y_F-[r_i(F)-r_i(U)]-\beta_i(T_F-T_U),
\]
and the critical rent gap is
\[
(r_i(F)-r_i(U))^*
=
s_i(U)Y_U-s_i(F)Y_F-[\gamma_i(F)-\gamma_i(U)]-\beta_i(T_F-T_U).
\]

These expressions show directly how shifts in rents, control, and transfer capture alter the equilibrium region.

\subsection{Peace-credibility extension}

Suppose unification productivity depends on peace credibility:
\[
A_U(p)=\underline A_U+\kappa p,
\qquad \kappa>0.
\]
Then
\[
Y_U(p)=A_U(p)K_U^\beta.
\]
The threshold level of peace credibility required for elite \(i\) to support unification is given by
\[
s_i(U)Y_U(p)-s_i(F)Y_F
=
[r_i(F)-r_i(U)]
+[\gamma_i(F)-\gamma_i(U)]
+\beta_i(T_F-T_U).
\]

Solving for \(p\),
\[
p_i^*
=
\frac{
[r_i(F)-r_i(U)]
+[\gamma_i(F)-\gamma_i(U)]
+\beta_i(T_F-T_U)
+s_i(F)Y_F
-s_i(U)\underline A_U K_U^\beta
}{
s_i(U)\kappa K_U^\beta
}.
\]

Whenever \(p<p_i^*\), fragmentation remains privately optimal. Whenever \(p>p_i^*\), unification becomes privately attractive for bloc \(i\), holding all else constant.

\subsection{Interpretation of the calibration}

The calibration is not intended as structural estimation. It does not identify latent payoffs, nor does it claim a precise mapping between institutional variables and observed data. Its purpose is to discipline the verbal mechanism with a transparent parameterization. Under plausible values, the model generates a large aggregate gain from unification, but a much smaller private gain to each elite. Once informal rents, autonomous control, and transfer capture are added, fragmentation is privately optimal despite large social losses.

This is the central quantitative lesson of the paper. Nominal statehood can persist not because elites deny the aggregate gains from consolidation, but because they capture too little of those gains relative to what they lose from giving up fragmentation.

\subsection{Existence of Markov-perfect equilibrium}

This subsection states sufficient conditions for the existence of a stationary Markov-perfect equilibrium in the dynamic game described in the main text. The objective is not to derive the sharpest possible existence theorem, but to clarify that the equilibrium concept used throughout the paper is well defined under standard regularity conditions.

Let the state at time \(t\) be \(x_t=(K_t,R_t)\), where \(K_t \in \mathcal{K}\subset \mathbb{R}_+\) denotes capacity capital and \(R_t \in \mathcal{R}\subset \mathbb{R}_+\) denotes recognition capital. Let each elite bloc \(i\in\{1,2\}\) choose an action \(a_{it}\in A_i=\{U,F\}\), and let the regime mapping \(\theta(a_{1t},a_{2t})\) be given by
\[
\theta_t =
\begin{cases}
U, & \text{if } a_{1t}=a_{2t}=U,\\
F, & \text{otherwise.}
\end{cases}
\]
Given current state \(x_t\) and action profile \(a_t=(a_{1t},a_{2t})\), the one-period payoff to elite \(i\) is
\[
\pi_i(x_t,a_t)
=
s_i(\theta_t)Y(x_t,\theta_t)+r_i(\theta_t)+\gamma_i(\theta_t)+\mu_iR_t+\beta_iT(\theta_t,H_t),
\]
where \(Y(x_t,\theta_t)\) is continuous in \(K_t\), and \(T(\theta_t,H_t)\) is bounded for bounded \(H_t\). Let the state transition law be
\[
x_{t+1}=\Gamma(x_t,a_t,\varepsilon_{t+1}),
\]
where \(\varepsilon_{t+1}\) is an exogenous shock with fixed distribution.

Assume the following regularity conditions. First, the state space \(\mathcal{X}=\mathcal{K}\times\mathcal{R}\) is compact, or else the state dynamics are restricted to a compact invariant subset of \(\mathbb{R}_+^2\). Second, for each elite bloc \(i\), the payoff function \(\pi_i(x,a)\) is bounded and continuous in \(x\) for every fixed action profile \(a\). Third, the transition law \(\Gamma(x,a,\varepsilon)\) is Markovian and weakly continuous in \(x\) for every fixed action profile \(a\). Fourth, each elite discounts the future at common rate \(\delta_e \in (0,1)\).

\begin{proposition}[Existence of stationary Markov-perfect equilibrium]
Under the regularity conditions stated above, the dynamic game admits at least one stationary Markov-perfect equilibrium. That is, there exists a pair of measurable policy functions
\[
\sigma_i:\mathcal{X}\to \{U,F\},
\qquad i\in\{1,2\},
\]
such that each \(\sigma_i\) is a best response to \(\sigma_{-i}\) at every state \(x\in\mathcal{X}\), and the associated value functions satisfy the Bellman system
\[
V_i(x)=\max_{a_i\in\{U,F\}}
\left\{
\pi_i\bigl(x,a_i,\sigma_{-i}(x)\bigr)
+\delta_e\,
\mathbb{E}\!\left[V_i(x')\mid x,a_i,\sigma_{-i}(x)\right]
\right\},
\qquad i\in\{1,2\}.
\]
\end{proposition}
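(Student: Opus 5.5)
The plan is to avoid a general fixed-point theorem for stochastic games with a continuum of states. Such theorems (e.g.\ Nowak--Raghavan type results) typically deliver only mixed or correlated stationary equilibria, and at best with additional absolute-continuity conditions on the transition kernel. Instead I would exploit the coordination structure of the regime mapping \(\theta(a_{1t},a_{2t})\): unification requires both blocs to choose \(U\), so a unilateral deviation to \(U\) by one bloc cannot change the regime when the other plays \(F\). This is the same logic used in the proof of Proposition 1, now applied state by state. The candidate equilibrium is therefore the constant profile \(\sigma_1(x)=\sigma_2(x)=F\) for all \(x\in\mathcal{X}\). It is trivially measurable, being constant, and takes values in the pure action sets \(\{U,F\}\), as the statement requires.

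The first step is to construct the value functions associated with this profile. For each \(i\), I would define the operator
\[
(\mathcal{T}_iV)(x)=\pi_i\bigl(x,(F,F)\bigr)+\delta_e\,\mathbb{E}\bigl[V\bigl(\Gamma(x,(F,F),\varepsilon')\bigr)\bigr]
\]
on the space \(C_b(\mathcal{X})\) of bounded continuous functions with the sup norm. Boundedness and continuity of \(\pi_i(\cdot,a)\) for the fixed profile, together with weak continuity of \(\Gamma\) (the Feller property), imply that \(\mathcal{T}_i\) maps \(C_b(\mathcal{X})\) into itself. Since \(\delta_e\in(0,1)\), Blackwell's sufficient conditions (monotonicity and discounting) make \(\mathcal{T}_i\) a contraction of modulus \(\delta_e\). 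By the Banach fixed-point theorem there is then a unique \(V_i^\ast\in C_b(\mathcal{X})\) with \(V_i^\ast=\mathcal{T}_iV_i^\ast\). Compactness of \(\mathcal{X}\), or restriction to a compact invariant subset, is used only to guarantee that \(C_b\) is a complete space containing the relevant payoff functions.

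The second step is to verify the Bellman system, i.e.\ that \(F\) attains the maximum for each \(i\) at every \(x\) given \(\sigma_{-i}\equiv F\). Since \(\theta(U,F)=\theta(F,U)=\theta(F,F)=F\), both the one-period payoff and the transition law depend on the action profile only through \(\theta\). Hence
\[
\pi_i\bigl(x,(U,F)\bigr)+\delta_e\,\mathbb{E}\bigl[V_i^\ast(x')\mid x,(U,F)\bigr]
=\pi_i\bigl(x,(F,F)\bigr)+\delta_e\,\mathbb{E}\bigl[V_i^\ast(x')\mid x,(F,F)\bigr].
\]
The maximum over \(\{U,F\}\) therefore equals \((\mathcal{T}_iV_i^\ast)(x)=V_i^\ast(x)\), and \(F\) is a maximizer. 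By the one-shot deviation principle, which is valid here because payoffs are bounded and discounted, \(\sigma_i\) is a best response to \(\sigma_{-i}\) at every state. This gives a stationary Markov-perfect equilibrium.

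The main obstacle is not technical but interpretive. This construction proves existence using an equilibrium supported by mutual veto, which may be weakly dominated whenever \(\Delta_i>0\) for both blocs. If one wanted a Markov-perfect equilibrium that is not of this trivial form (for instance, one in which \(U\) is played where both \(\Delta_i(x)>0\)), I would first try the pure-strategy existence route. That route defines \(\sigma(x)=(U,U)\) on the set where both blocs strictly prefer unification under a candidate value pair, and then iterates on value pairs. Upper hemicontinuity of best responses fails at regime boundaries, so this would require either mixed strategies or a monotonicity argument in \(K\) (a Topkis/Tarski approach using Lemma 1 and Lemma 2). I would flag that refinement as beyond what the stated proposition requires.
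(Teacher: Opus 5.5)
Your proof is correct, and it takes a genuinely different route from the paper's. The paper's sketch treats the model as a generic discounted stochastic game. It checks compactness, bounded continuous payoffs and weakly continuous transitions, then appeals to ``standard existence results'' (Shapley, Fink, Maskin--Tirole). It never uses the veto structure of $\theta(a_{1t},a_{2t})$.

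You instead exhibit an equilibrium explicitly. Since $\theta(U,F)=\theta(F,U)=\theta(F,F)=F$, once the opponent plays $F$ everywhere, bloc $i$'s action changes neither payoffs nor transitions. So the constant profile $(F,F)$ and its own discounted value satisfy the Bellman system, with the maximum attained by both actions.

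Your route delivers exactly the object the statement claims: a \emph{pure}, constant (hence measurable), stationary equilibrium. The results the paper cites concern finite state spaces and deliver stationary equilibria in \emph{mixed} strategies. With a continuum of states, general results need extra structure on the transition kernel and typically yield only mixed or correlated stationary equilibria. There are also known examples with no stationary equilibrium at all. So the paper's sketch does not by itself justify pure policies $\sigma_i:\mathcal{X}\to\{U,F\}$.

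What the paper's route would buy, were it valid, is independence from the veto structure. The cost of yours is the one you flag. The mutual-veto equilibrium exists for every parameter configuration, including when both blocs strictly prefer unification. It therefore shows that bare existence of a persistently fragmented Markov-perfect equilibrium says nothing about rents, transfers or recognition. The economic content of Proposition 7 has to come from a refinement, for instance excluding weakly dominated play. That is where a condition like Proposition 1's, in continuation-value form, actually bites.

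A few small remarks.
\begin{enumerate}
\item Compactness is not what makes $C_b(\mathcal{X})$ complete; it is a Banach space under the sup norm for any metric space. You need neither compactness, continuity, nor the Feller property. You can define $V_i^\ast(x)$ directly as the expected discounted sum of $\pi_i(x_t,(F,F))$ along the fragmentation path. This requires only bounded, measurable payoffs and a measurable kernel.
\item The abstract law is written $\Gamma(x,a,\varepsilon)$. You should state that dependence on $a$ only through $\theta$ comes from the model's laws of motion: $I_t$ depends on $\theta_t$, while $R_{t+1}$ does not depend on current actions.
\item The one-shot deviation principle is unnecessary. Against $\sigma_{-i}\equiv F$, every strategy of bloc $i$, Markov or not, yields the same payoff.
\end{enumerate}
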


\noindent
\textit{Proof sketch.}
The game is a discounted stochastic game with compact state space, finite action sets, bounded payoffs, and weakly continuous transition probabilities. Because each elite’s action set is finite, maximization is well defined at every state. Boundedness of payoffs and discounting by \(\delta_e \in (0,1)\) ensure that the Bellman operators are well defined. Weak continuity of the transition law and continuity of payoffs in the state variable imply that the value correspondence is non-empty and admits measurable selections. Standard existence results for discounted stochastic games then imply the existence of at least one stationary Markov-perfect equilibrium in measurable strategies \citep{Shapley1953,Fink1964,MaskinTirole2001}.

Several remarks are worth recording. First, the proposition does not imply uniqueness. Multiple stationary Markov-perfect equilibria may exist, including equilibria with persistent fragmentation and equilibria converging to unification, depending on parameter values. This multiplicity is substantively natural in the present setting, since the paper is precisely concerned with the region of the parameter space in which fragmentation is self-enforcing and associated with nominal statehood.

Second, the compactness assumption is not restrictive for the purposes of the paper. Capacity capital and recognition capital may be interpreted either as bounded indices or as state variables confined to an empirically relevant compact region. Since the paper’s interest lies in equilibrium structure and comparative statics rather than unbounded asymptotic growth, restricting attention to a compact invariant set is analytically innocuous.

Third, the existence result is purely foundational. The substantive results of the paper concern the properties of a particular class of equilibria: those in which fragmented authority persists, recognition remains above the salience threshold, capacity remains below the viability threshold, and the recognition-capacity gap fails to close. The next propositions characterize sufficient conditions under which such equilibria arise.

\end{document}